\documentclass[11pt]{article}
\pdfoutput=1
\usepackage{amsmath,amstext,amssymb,amsfonts}
\usepackage{color,graphicx}
\usepackage{tabularx}
\usepackage{verbatim}

\usepackage[dvipsnames]{xcolor}

% theorem environments
\usepackage{amsthm}
\usepackage{ifdraft}
\usepackage{algorithm2e}
\usepackage{algorithmic}

\usepackage{nicefrac}

% write fractions using \frac, \tfrac, \nfrac, or \ffrac (avoid ffrac)
\newcommand{\flatfrac}[2]{#1/#2}
\newcommand{\ffrac}{\flatfrac}

\usepackage{microtype}

\newtheorem{theorem}{Theorem}[section]
\newtheorem*{theorem*}{Theorem}

\newtheorem*{claim*}{Claim}

\newtheorem{proposition}[theorem]{Proposition}
\newtheorem*{proposition*}{Proposition}
\newtheorem{lemma}[theorem]{Lemma}
\newtheorem*{lemma*}{Lemma}
\newtheorem{corollary}[theorem]{Corollary}

\newtheorem*{conjecture*}{Conjecture}

\newtheorem{fact}[theorem]{Fact}
\newtheorem*{fact*}{Fact}

\newtheorem*{hypothesis*}{Hypothesis}

\theoremstyle{definition}
\newtheorem{definition}[theorem]{Definition}

\newtheorem{SDP}[theorem]{SDP}

\newtheorem{LP}[theorem]{LP}

\newtheorem{remark}[theorem]{Remark}

\usepackage{prettyref}
\newcommand{\savehyperref}[2]{\texorpdfstring{\hyperref[#1]{#2}}{#2}}

\newrefformat{eq}{\savehyperref{#1}{\textup{(\ref*{#1})}}}
\newrefformat{lem}{\savehyperref{#1}{Lemma~\ref*{#1}}}
\newrefformat{def}{\savehyperref{#1}{Definition~\ref*{#1}}}
\newrefformat{thm}{\savehyperref{#1}{Theorem~\ref*{#1}}}
\newrefformat{cor}{\savehyperref{#1}{Corollary~\ref*{#1}}}
\newrefformat{cha}{\savehyperref{#1}{Chapter~\ref*{#1}}}
\newrefformat{sec}{\savehyperref{#1}{Section~\ref*{#1}}}
\newrefformat{app}{\savehyperref{#1}{Appendix~\ref*{#1}}}
\newrefformat{tab}{\savehyperref{#1}{Table~\ref*{#1}}}
\newrefformat{fig}{\savehyperref{#1}{Figure~\ref*{#1}}}
\newrefformat{hyp}{\savehyperref{#1}{Hypothesis~\ref*{#1}}}
\newrefformat{alg}{\savehyperref{#1}{Algorithm~\ref*{#1}}}
\newrefformat{sdp}{\savehyperref{#1}{SDP~\ref*{#1}}}
\newrefformat{vp}{\savehyperref{#1}{VP~\ref*{#1}}}
\newrefformat{lp}{\savehyperref{#1}{LP~\ref*{#1}}}
\newrefformat{rem}{\savehyperref{#1}{Remark~\ref*{#1}}}
\newrefformat{item}{\savehyperref{#1}{Item~\ref*{#1}}}
\newrefformat{step}{\savehyperref{#1}{step~\ref*{#1}}}
\newrefformat{conj}{\savehyperref{#1}{Conjecture~\ref*{#1}}}
\newrefformat{fact}{\savehyperref{#1}{Fact~\ref*{#1}}}
\newrefformat{prop}{\savehyperref{#1}{Proposition~\ref*{#1}}}
\newrefformat{claim}{\savehyperref{#1}{Claim~\ref*{#1}}}
\newrefformat{relax}{\savehyperref{#1}{Relaxation~\ref*{#1}}}
\newrefformat{red}{\savehyperref{#1}{Reduction~\ref*{#1}}}
\newrefformat{part}{\savehyperref{#1}{Part~\ref*{#1}}}
\newrefformat{prob}{\savehyperref{#1}{Problem~\ref*{#1}}}
\newrefformat{ass}{\savehyperref{#1}{Assumption~\ref*{#1}}}

% {{{ sref }}}

% short section reference
\newcommand{\Sref}[1]{\hyperref[#1]{\S\ref*{#1}}}

% nicer times font - saves pages
%\usepackage[varg]{txfonts}    % comment out for thesis
%\usepackage{times}

% slanted inequality signs
\renewcommand{\leq}{\leqslant}

\renewcommand{\geq}{\geqslant}

% bold math package - provides command to use boldmath font
\usepackage{bm}

% to define text macros
\usepackage{xspace}

% load hyperref
\usepackage[pdftex, pagebackref, colorlinks, linkcolor = blue, filecolor = blue, citecolor = red, urlcolor  = blue]{hyperref}

% {{{ boxedminipage }}}
\usepackage{boxedminipage}

% punctation at the end of displayed formulas
\newcommand{\mper}{\,.}
\newcommand{\mcom}{\,,}

% boldface vectors

% prime/tilde vector

% parentheses
\newcommand{\paren}[1]{\left(#1 \right )}

% brackets
\newcommand{\brac}[1]{[#1 ]}

% set braces
\newcommand{\set}[1]{\left\{#1\right\}}

% absolute value sign
\newcommand{\abs}[1]{\left\lvert#1\right\rvert}
\newcommand{\Abs}[1]{\left\lvert#1\right\rvert}

% ceil floor

% norm
\newcommand{\norm}[1]{\left\lVert#1\right\rVert}

% define symbol for definition
\newcommand{\defeq}{\stackrel{\textup{def}}{=}}

% big vertical space

% linear algebra
\newcommand{\inprod}[1]{\left\langle #1\right\rangle}

% norm

% L2 norm
\newcommand{\normt}[1]{\norm{#1}_{\scriptstyle 2}}

% Projection Operators

% norms

% number sets
\newcommand{\Z}{{\mathbb Z}}

\newcommand{\R}{\mathbb R}

% common operators

\newcommand{\OPT}{{\sf OPT}}
\newcommand{\ALG}{{\sf ALG}}

% probability symbols
\newcommand{\Esymb}{\mathbb{E}}
\newcommand{\Psymb}{\mathbb{P}}

\DeclareMathOperator*{\E}{\Esymb}

\DeclareMathOperator*{\Bern}{{\sf Bern}}
\DeclareMathOperator*{\ProbOp}{\Psymb}

%conditioning 

% probability of an event \prob{e}=IP{e}

% expectation of variable \ex{X} = IE[X]

\newcommand{\e}{\epsilon}

% superscript with parentheses

% bits

% All ones vector
\newcommand{\one}{\mathbbm{1}}

% author notes macros
\definecolor{DSgray}{cmyk}{0,0,0,0.7}

% more macros

\let\e\varepsilon

%%%%%%%%%%%%%% Problems 

% {{{ alphabet }}}

% {{{ names }}}
% Hungarian/Polish/East European names 
\newcommand{\Erdos}{Erd\H{o}s\xspace}
\newcommand{\Renyi}{R\'enyi\xspace}

\newcommand{\etal}{et al. }
\newcommand{\bigO}{\mathcal{O}}
\newcommand{\bigo}[1]{\bigO\left(#1\right)}

\newcommand{\argmax}{{\sf argmax}}

 % gausssian cap

    % caps for thesis
	% caps for thesis

\usepackage[utf8]{inputenc}
\usepackage[letterpaper, margin=1in]{geometry}
\usepackage{hyperref}
\usepackage{bbm}
\usepackage{multicol}
\usepackage[shortlabels]{enumitem}
\allowdisplaybreaks

\title{Planted Models for the Densest $k$-Subgraph Problem}
\author{Yash Khanna \\IISc, Bangalore \footnote{Department of Computer Science and Automation, Indian Institute of Science, Bangalore, India.}\\\href{mailto:yashkhanna@iisc.ac.in}{yashkhanna@iisc.ac.in}
\and Anand Louis \\IISc, Bangalore \footnotemark[1]\\\href{mailto:anandl@iisc.ac.in}{anandl@iisc.ac.in}}
\date{\today}

\newcommand{\dksprob}{{\sc Densest $k$-subgraph} problem}
\newcommand{\sseprob}{{\sc Small Set Expansion} problem}
\newcommand{\probabilityconstant}{\kappa}
\newcommand{\matrixconstant}{\xi}

\newcommand{\dks}{{\sc D$k$S}}
\newcommand{\dksparams}{\dks{\sf Exp}$(n, k, d, \delta, d', \lambda)$}
\newcommand{\dksregparams}{\dks{\sf ExpReg}$(n, k, d, \delta, d', \lambda)$}
\newcommand{\dkssparams}{\dks$(n, k, d, \delta, \gamma)$}
\newcommand{\dkssregparams}{\dks{\sf Reg}$(n, k, d, \delta, \gamma)$}

\begin{document}

\maketitle
\begin{abstract}
Given an undirected graph $G$, the \dksprob~(DkS) asks to compute a set 
$S \subset V$ of cardinality $\Abs{S} \leq k$ such that the weight of edges inside $S$ is maximized. This is a fundamental NP-hard problem whose approximability, inspite of many decades of research, is yet to be settled. The current best known approximation algorithm due to Bhaskara et al. (2010) computes a $\bigo{n^{1/4 + \e}}$ approximation in time $n^{\bigo{\ffrac{1}{\e}}}$, for any $\e > 0$.

We ask what are some ``easier'' instances of this problem? We propose some natural semi-random models of instances with a planted dense subgraph, and study approximation algorithms for computing the densest subgraph in them. These models are inspired by the semi-random models of instances studied for various other graph problems such as the independent set problem, graph partitioning problems etc. For a large range of parameters of these models, we get significantly better approximation factors for the \dksprob. Moreover, our algorithm recovers a large part of the planted solution.
\end{abstract}

\section{Introduction}
Given a weighted undirected graph $G = (V,E,w)$ with non-negative edge weights given by $w:E \to \R^+$, and an integer $k \in \Z^+$, the \dksprob~(\dks) asks to compute a set $S \subset V$ of cardinality $\Abs{S} \leq k$ such that the weight of edges inside $S$ (i.e., $\sum_{i,j \in S} w\paren{\set{i,j}}$) is maximized (if $\set{i,j} \notin E$, we assume w.l.o.g. that $w\paren{\set{i,j}} = 0$). Computing the \dks~of a graph is a fundamental NP-hard problem. There has been a lot of work on studying approximation algorithms for \dks, we give a brief survey in \prettyref{sec:related}.

The current best known approximation algorithm \cite{DBLP:conf/stoc/BhaskaraCCFV10} 
computes an $\bigo{n^{1/4 + \e}}$ approximation in time $n^{\bigo{\ffrac{1}{\e}}}$ for any $\e > 0$. On the hardness side, Manurangsi \cite{DBLP:conf/stoc/Manurangsi17} showed that assuming the exponential time hypothesis (ETH), there is no polynomial time algorithm that approximates this to within $n^{1/(\log \log n)^c}$ factor where $c >0$ is some fixed constant. There are hardness of approximation results known for this problem assuming various other hardness assumptions, see \prettyref{sec:related} for a brief survey. But there is still a huge gap between the upper and lower bounds on the approximability of this problem.

Given this status of the approximability of the \dksprob, we ask what are some ``easier'' instances of this problem? We propose some natural semi-random models of instances with a planted dense subgraph, and study approximation algorithms for computing the densest subgraph in them. Studying semi-random models of instances has been a very fruitful direction of study towards understanding the complexity for various NP-hard problems such as graph partitioning problems \cite{Makarychev:2012:AAS:2213977.2214013, Makarychev:2014:CFA:2591796.2591841, DBLP:conf/icalp/LouisV18, DBLP:conf/fsttcs/LouisV19}, independent sets \cite{10.1006/jcss.2001.1773, DBLP:conf/soda/McKenzieMT20}, graph coloring \cite{Alon:1997:STC:270560.270935, Coja-Oghlan:2007:CSG:1273845.1273847, David:2016:ERP:2897518.2897561}, etc. By studying algorithms for instances where some parts are chosen to be arbitrary and some parts are chosen to be random, one can understand which aspects of the problem make it computationally intractable. Besides being of natural theoretical interest, studying approximation algorithms for semi-random models of instances can also be practically useful since some natural semi-random models of instances  can be better models of instances arising in practice than the worst-case instances. Therefore, designing algorithms specifically for such models can help to bridge the gap between theory and practice in the study of algorithms. Some random and semi-random models of instances of the \dksprob~(and its many variants) have been studied in \cite{959929, DBLP:conf/stoc/BhaskaraCCFV10, 2014arXiv1406.6625H, 10.1007/s10957-015-0777-x, Montanari, 7523889, e86c6c373bf9484a861530abef1cb4a5, bombina2019convex}, we discuss them in \prettyref{sec:related}. Our models are primarily inspired by the densest subgraph models mentioned above as well as the semi-random models of instances for other problems \cite{10.1006/jcss.2001.1773, DBLP:conf/soda/McKenzieMT20} studied in the literature. For a large range of parameters of these models, we get significantly better approximation factors for the \dksprob, and also show that we can recover a large part of the planted solution. 

We note that semidefinite programming (SDP) based methods have been popularly used in many randomized models for different problems, including the \dksprob~\cite{2014arXiv1406.6625H, 7523889, e86c6c373bf9484a861530abef1cb4a5}. And thus, another motivation for our work is to understand the power of SDPs in approximating the \dksprob. Since even strong SDP relaxations of the problem have a large integrality gap \cite{10.5555/2095116.2095150} for worst case instances (see \prettyref{sec:related}), we ask what families of instances can SDPs approximate well? 
In addition to being of theoretical interest, algorithms using the basic SDP also have a smaller running time. In comparison, the algorithm of  \cite{DBLP:conf/stoc/BhaskaraCCFV10} produces an $\bigo{n^{1/4 + \e}}$ approximation for worst-case instances in time $n^{\bigo{\ffrac{1}{\e}}}$; their algorithm is based on rounding an LP hierarchy, but they also show that their algorithm can be executed without solving an LP and obtain the same guarantees.

\subsection{Our models and results} 
\label{sec:results}

The main inspiration for our models are the semi-random models of instances for the independent set problem
\cite{10.1006/jcss.2001.1773, DBLP:conf/soda/McKenzieMT20}.
Their instances are constructed as follows. Starting with a set of vertices $V$, a subset of 
$k$ vertices is chosen to form the independent set $S$, and edges are added between each pair 
in $S \times (V \setminus S)$ independently with probability $p$. Finally, 
an arbitrary graph is added on $V \setminus S$. 
They study the values of $k$ and $p$ for which they can recover a large independent set. 
Our models can be viewed as analogs of this model to the \dksprob: 
edges are added between each pair in $S \times (V \setminus S)$ independently with probability $p$,
and then edges are added in $S$ to form a dense subset. Since 
we also guarantee that we can recover a large part of the planted dense subgraph $S$, 
we also need to assume that the graph induced on $V \setminus S$ is ``far'' from
containing a dense subgraph. 
We now define our models.

\begin{definition}[\dksparams]
\label{def:dks-expander}
An instance of \dksparams~is generated as follows,
\begin{enumerate}
\item  
\label{step:one}
We partition $V$ into two sets, $S$ and $V \setminus S$ with $\abs{S} = k$. We add edges (of weight 1) between pairs in $S \times \paren{V\setminus S}$ independently with probability $p \defeq \ffrac{\delta d}{k}$.
\item 
\label{step:two} 
We add edges of arbitrary non-negative weights between arbitrary pairs of vertices in $S$ such that the graph induced on $S$ has average weighted degree $d$.
\item 
\label{step:three}
We add edges of arbitrary non-negative weights between arbitrary pairs of vertices in $V \setminus S$ such that the graph induced on $V \setminus S$ is a $(d',\lambda)$-expander (see \prettyref{def:expander} for definition). 
\item 
\label{step:four} (\emph{Monotone adversary}) Arbitrarily delete any of the edges added in \prettyref{step:one} and \prettyref{step:three}. 
\item 
\label{step:five} Output the resulting graph.
\end{enumerate}
\end{definition}

We note that the \prettyref{step:two}, \prettyref{step:three}, and \prettyref{step:four} in the construction of the instance above are adversarial steps.

\dksparams~are a class of instances that have a prominent dense subset of size $k$. Note that, since the graph induced on $V \setminus S$ is a subset of an expander graph, it would not have any dense subsets. We also note that the monotone adversary can make significant changes to graph structure. For example, the graph induced on $V \setminus S$ can be neither $d'$-regular nor an expander after the action of the monotone adversary. 

We require $\delta < 1$ in \prettyref{step:one} for the following reason. For any fixed set $S' \subset V \setminus S$ such that $\Abs{S'} = \bigo{k}$, 
the expected weight of edges in the bipartite graph induced on $S \cup S'$ is $\bigo{\delta k d}$. Since we want the graph induced on $S$ to be the densest $k$-subgraph (the total of edges in the graph induced on $S$ is $kd/2$), we restrict $\delta$ to be at most $1$.

We present our main results below, note that our algorithm outputs a dense subgraph of size $k$ and its performance is measured with respect to the density of the planted subgraph $G\brac{S}$, i.e. $\ffrac{kd}{2}$. 

\begin{definition}
	We define $\rho(V') \defeq \ffrac{\paren{ \sum_{i,j \in V'} w \paren{\set{i,j}}}}{2}$ for any $V' \subseteq V$.
\end{definition}

\begin{theorem}[Informal version of \prettyref{thm:main_exp}]
	\label{thm:inf_main_exp}
	Given an instance of \dksparams~\\where 
	\[  
	\delta = \Theta\paren{\dfrac{kd'}{nd}},\qquad \dfrac{\delta d}{k} = \Omega\paren{\dfrac{\log n}{n}},\qquad \text{ and } \qquad
	\nu = \Theta\paren{\sqrt{\delta+\dfrac{\lambda+\sqrt{d'}}{d}}}, \] 
	there exists a deterministic polynomial time algorithm that outputs
	with high probability (over the instance) a vertex set $\mathcal{Q}$ of 
	size $k$ such that 
	$ \rho\paren{ \mathcal{Q}} \geq \paren{1-\nu} \dfrac{kd}{2} \mper$
	The above algorithm also computes a vertex set $T$ such that
	\begin{multicols}{2}
		\begin{enumerate}[(a)]
			\item $\abs{T} \leq \paren{1 + \bigo{\nu}}k \mper$
			\item $\rho(T \cap S) \geq \paren{1-\bigo{\nu}} \dfrac{kd}{2}\mper$
		\end{enumerate}
	\end{multicols}
\end{theorem}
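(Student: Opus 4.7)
The plan is to design a basic SDP relaxation of \dks, argue that the planted set $S$ is essentially the only dense structure the SDP can ``see'', and then recover a size-$k$ set $\mathcal{Q}$ by threshold rounding. Concretely, I would use the natural quadratic-programming relaxation with a unit vector $v_0$ and vectors $v_i$ for $i \in V$ satisfying $\inprod{v_0, v_i} = \snorm{v_i}$, $\sum_i \snorm{v_i} \leq k$, and $\inprod{v_i, v_j} \in [0, \snorm{v_i}]$, with objective $\sum_{\set{i,j} \in E} w\paren{\set{i,j}} \inprod{v_i, v_j}$. The planted indicator ($v_i = v_0$ for $i \in S$ and $v_i = 0$ otherwise) is feasible and achieves value $kd/2$, so the SDP optimum is at least $kd/2$.

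The next step is a matching upper bound. Writing $y_i \defeq \snorm{v_i}$, I would decompose the objective into contributions from edges inside $S$, bipartite edges, and edges inside $V \setminus S$. For the first, $\inprod{v_i, v_j} \leq \sqrt{y_i y_j}$ together with $\sum_i y_i \leq k$ yields at most $kd/2$. For the bipartite part, the hypothesis $p = \delta d/k = \Omega(\log n/n)$ lets me apply matrix Bernstein (or a Chernoff union bound over a net) to show that the bipartite adjacency matrix concentrates around its mean with spectral deviation $\bigo{\sqrt{pn}}$; since the monotone adversary only decreases the associated bilinear form, the bipartite contribution is at most $\delta d k + \bigo{k\sqrt{pn}}$. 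For the third part, the expander mixing lemma applied to the $(d', \lambda)$-expander on $V\setminus S$ gives $\sum_{ij \in E(V\setminus S)} w_{ij}\sqrt{y_i y_j} \leq (d'/n)\Paren{\sum_{i \notin S} y_i}^2 + \lambda \sum_{i\notin S} y_i$, and again the monotone adversary only helps.

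Plugging in $\delta = \Theta(kd'/(nd))$ one verifies that all three bounds combine into an SDP upper bound of $(1 + \bigo{\nu^2}) kd/2$ for $\nu = \Theta\paren{\sqrt{\delta + (\lambda + \sqrt{d'})/d}}$. A Markov-type argument based on this slack then forces $\sum_{i \notin S} y_i = \bigo{\nu k}$ and $\sum_{i \in S}(1 - y_i) = \bigo{\nu k}$. Defining $T \defeq \set{i : y_i \geq 1 - c\nu}$ for a suitable constant $c$ immediately yields $\abs{T} \leq (1+\bigo{\nu})k$ and $\rho(T \cap S) \geq (1 - \bigo{\nu}) kd/2$. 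To produce $\mathcal{Q}$ with $\abs{\mathcal{Q}} = k$ exactly, I would pad or prune $T$ using SDP-weighted degrees; the net change in $\rho$ is $\bigo{\nu} kd/2$.

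The hardest step is handling the monotone adversary together with the random bipartite edges. Adversarial deletions destroy the independence needed for Chernoff/Bernstein concentration, so I would exploit that the bilinear form $y^\top B y$ (where $B$ is the bipartite adjacency) is coordinate-wise monotone in the edge set, reducing the upper-bound analysis to the pre-adversary random graph; the same monotonicity argument handles the expander layer. A second delicate point is the appearance of $\sqrt{d'}$ rather than $d'$ inside $\nu$: getting this improved scaling requires feeding the mass-localization bound $\sum_{i \notin S} y_i = \bigo{\nu k}$ back into the expander-mixing estimate in a bootstrapping fashion, so the SDP upper bound and the support concentration have to be derived in tandem rather than sequentially.
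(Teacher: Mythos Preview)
Your overall plan---SDP relaxation, three-way decomposition of the objective, matrix concentration for the bipartite part, expander mixing for $V\setminus S$, threshold rounding plus greedy pruning, and monotonicity to absorb the adversary---is exactly the paper's approach. Two points deserve correction, though.

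\textbf{The $\sqrt{d'}$ term does not come from bootstrapping.} You identify this as the hardest step and propose feeding a mass-localization bound back into the expander-mixing estimate. That is not how the term arises. In the paper's accounting the $V\setminus S$ contribution to $\eta$ is $\lambda/d + d'k/((n-k)d)$; under the parameter assumption $\delta = \Theta(kd'/(nd))$ the second summand is just $\Theta(\delta)$ and is absorbed. The $\sqrt{d'}/d$ term is the \emph{bipartite} spectral error: $\sqrt{np}/d = \sqrt{\delta n/(dk)}$, which under the same assumption equals $\Theta(\sqrt{d'}/d)$. So a single pass already gives $\eta = \Theta(\delta + \lambda/d + \sqrt{d'}/d)$ and $\nu = \Theta(\sqrt{\eta})$; no iteration is needed, and the ``hardest step'' you describe does not exist.

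\textbf{The vertex-mass claim is not justified for arbitrary $G[S]$.} Your slack argument, via $\inprod{v_i,v_j}\leq \sqrt{y_iy_j}$ and AM--GM, yields $\sum_{i\in S} d_i^S (1-y_i) = \bigo{\nu^2}\,kd$, a \emph{degree-weighted} bound. For non-regular $G[S]$ this does not imply $\sum_{i\in S}(1-y_i)=\bigo{\nu k}$ (equivalently $\sum_{i\notin S} y_i=\bigo{\nu k}$): high-degree vertices could soak up all the constraint while low-degree vertices in $S$ have $y_i$ near $0$. The paper sidesteps this by working with the edge distribution on $G[S]$, proving $\E_{\{i,j\}\sim E(G[S])}\inprod{X_i,X_j}\geq 1-\eta$ and then using $\inprod{X_i,X_j}\leq \|X_i\|^2$ to pass to vertices. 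Your conclusions still go through---$|T|\leq (1+\bigo{\nu})k$ follows from $\sum_i y_i\leq k$ alone, and $\rho(T\cap S)\geq (1-\bigo{\nu})kd/2$ follows from the degree-weighted bound since the lost edge weight is at most $\sum_{i\in S\setminus T} d_i^S$---but the intermediate claim you wrote is false as stated. (The paper does obtain genuine vertex-mass localization, but only in the $d$-regular variant \dksregparams.)
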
 

\begin{remark}
	\label{rem:simplified_apx_1}
	In \prettyref{thm:inf_main_exp}, we restrict the range of $\delta$ for the following reason. 
	An interesting setting of parameters is when the average degree of vertices in $S$ and 
	$V \setminus S$ are within constant factors of each other. Then the expected average degree of a vertex in $S$ is $d + p(n-k)$. And for a vertex in $V \setminus S$, the expected average degree is $d' + kp$. Thus setting,
	\[
		d+p(n-k) = \Theta(d'+kp) \implies \delta = \Theta\paren{\dfrac{kd'}{nd}} \qquad\paren{\text{Recall, } p = \frac{\delta d}{k}}\mper
	\]
\end{remark}

We also study another interesting model with a different assumption on the subgraph $G\brac{V \setminus S}$. 
\begin{definition}
\dkssparams~is generated similarly to \dksparams~except in \prettyref{step:three}, where 
we add edges between arbitrary pairs of vertices in $V \setminus S$ such that the graph induced on $V \setminus S$ has the following property : 
$ \rho(V') \leq \gamma d \Abs{V'} \quad \forall V' \subseteq V \setminus S \mper$
\end{definition}

By construction, the graph induced on $V \setminus S$ does not have very dense subsets. 

\begin{theorem}[Informal version of \prettyref{thm:main_gamma}]
	\label{thm:inf_main_gamma}
	Given an instance of \dkssparams~where 
	\[  
	\delta = \Theta\paren{\dfrac{k}{n}},\qquad \dfrac{\delta d}{k} = \Omega\paren{\dfrac{\log n}{n}},\qquad \text{ and }
	\qquad\tau = \Theta\paren{\sqrt{\delta+\gamma+\dfrac{1}{\sqrt{d}}}}, \] 
	there is a deterministic polynomial time algorithm that outputs
	with high probability (over the instance) a vertex set $\mathcal{Q}$ of 
	size $k$ such that 
	$ \rho\paren{ \mathcal{Q}} \geq \paren{1-\tau} \dfrac{kd}{2}\mper$
	The above algorithm also computes a vertex set $T$ such that
	\begin{multicols}{2}
		\begin{enumerate}[(a)]
			\item $\abs{T} \leq \paren{1+ \bigo{\tau}} k \mper$
			\item $\rho(T \cap S) \geq \paren{1-\bigo{\tau}} \dfrac{kd}{2}\mper$
		\end{enumerate}
	\end{multicols}
\end{theorem}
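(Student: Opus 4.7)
The plan is to mirror the proof of \prettyref{thm:main_exp}, replacing the spectral bound on the expander $G[V\setminus S]$ with the pointwise density bound $\rho(V') \leq \gamma d |V'|$ for every $V' \subseteq V \setminus S$; the generative model is identical up to this replacement, so most of the argument carries over verbatim. First I would solve the same SDP relaxation of \dks~used there: variables $v_0, v_1, \ldots, v_n$ with $x_i \defeq \inprod{v_0, v_i} = \snormt{v_i}$, constraints $\sum_i x_i \leq k$ and $\inprod{v_i, v_j} \in [0, \min(x_i, x_j)]$, and objective $\sum_{\{i,j\} \in E} w_{ij} \inprod{v_i, v_j}$. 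The $\{0,1\}$-indicator of the planted set $S$ is feasible with value $\rho(S) = kd/2$, so the SDP optimum is at least $kd/2$.

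The heart of the proof is a matching upper bound that forces any near-optimal SDP solution to place almost all its $x_i$-mass on $S$. I would assemble this from three ingredients. (i) A Chernoff plus union bound, valid because $\delta d / k = \Omega(\log n / n)$, lets me bound the contribution of the random \prettyref{step:one} edges to the SDP by $\delta d \sum_{i \in V \setminus S} x_i + o(kd)$. (ii) The pointwise density bound on $V \setminus S$ directly yields $\sum_{\{i,j\} \subset V \setminus S} w_{ij} \inprod{v_i, v_j} \leq 2\gamma d \sum_{i \in V \setminus S} x_i$, since $\inprod{v_i, v_j} \leq \min(x_i, x_j)$. (iii) The contribution from $G[S]$ is at most $kd/2$ up to an additive SDP-rounding loss of order $\sqrt{d}$ per unit of displaced mass, which is the source of the $1/\sqrt{d}$ term in $\tau$. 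Combining these three, any SDP solution of value at least $(1-o(1))kd/2$ must satisfy $\sum_{i \notin S} x_i = O(\tau^2 k)$; the square root in the definition of $\tau$ arises when passing from this $\ell_2$-type mass bound to an $\ell_1$-type concentration guarantee via Cauchy--Schwarz. Rounding is then standard: take $T$ to be the vertices whose $x_i$ exceeds a suitable threshold and $\mathcal{Q}$ to be the top-$k$ values of $x_i$ (padding arbitrarily if needed). The concentration statement gives $|T \setminus S|, |S \setminus T| = O(\tau k)$, from which all three bounds claimed in the theorem follow by direct computation against $\rho(S) = kd/2$.

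The main obstacle will be the monotone adversary in \prettyref{step:four}, which can delete arbitrary subsets of the \prettyref{step:one} and \prettyref{step:three} edges. The plan is to phrase each of the three ingredients above so that it remains valid after arbitrary deletions: the Chernoff bound survives because it is an \emph{upper} bound on the contribution of the \prettyref{step:one} edges, the density bound $\rho(V') \leq \gamma d |V'|$ is automatically preserved under edge deletions, and the SDP value itself is monotone in the edge weights while $G[S]$ is untouched by \prettyref{step:four}. Making this monotonicity explicit in each estimate, and tracking constants carefully enough to land exactly on the stated $\tau = \Theta(\sqrt{\delta + \gamma + 1/\sqrt d})$, will be the bulk of the technical work.
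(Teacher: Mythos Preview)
Your high-level plan---mirror the proof of \prettyref{thm:main_exp} and swap the expander bound on $G[V\setminus S]$ for the density bound---is exactly what the paper does, and your ingredient (ii) is essentially \prettyref{prop:upper_bound_cross} (the paper routes it through Charikar's LP, \prettyref{lem:charikar_lp}, but the level-set argument implicit in ``since $\inprod{v_i,v_j}\le\min(x_i,x_j)$'' is precisely how that LP's integrality is proved).

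The gap is in your accounting for the $1/\sqrt d$ term in $\tau$. In the paper this term does \emph{not} come from any property of $G[S]$; it comes from the deviation of the random bipartite graph on $S\times(V\setminus S)$ from its expectation. The cross contribution is split as $p\sum_{i\in S,\,j\notin S}\inprod{X_i,X_j}$ plus $\sum_{i\in S,\,j\notin S}(A_{ij}-p)\inprod{X_i,X_j}$, and the second piece is controlled by the spectral-norm bound $\norm{B}=O(\sqrt{np})$ of \prettyref{cor:ub_spectral_norm_b}, yielding the $\matrixconstant k\sqrt{np}$ term in \prettyref{prop:upper_bound_v_s}; after dividing by $kd$ this is $\matrixconstant\sqrt{\delta n/(dk)}$, which under $\delta=\Theta(k/n)$ becomes $\Theta(1/\sqrt d)$ (see \prettyref{rem:simplified_apx_3}). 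Your ingredient (i), a Chernoff-plus-union bound on degrees into $S$, would instead give a per-vertex deviation of order $\sqrt{\delta d\log n}$, i.e.\ $\sqrt{\delta\log n/d}$ after normalization---a different quantity that does not reproduce the stated $\tau$. Your ingredient (iii), an ``SDP-rounding loss of order $\sqrt d$ per unit of displaced mass'' from $G[S]$, is not a step the paper takes and is not something that holds for an arbitrary average-degree-$d$ graph on $S$; in the non-regular model the paper never upper-bounds the $G[S]$ contribution, it lower-bounds $\E_{\{i,j\}\sim E(G[S])}\inprod{X_i,X_j}$ (the analogue of \prettyref{prop:exp_edge_len}) and thresholds on that. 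Two smaller divergences worth flagging: the paper prunes $T$ down to size $k$ by greedily removing minimum-degree vertices and invoking \prettyref{lem:pruning}, not by taking the top-$k$ values of $x_i$; and the conclusion $\abs{S\setminus T}=O(\tau k)$ you assert need not hold when $G[S]$ is irregular, which is why the theorem claims only $\rho(T\cap S)\ge(1-O(\tau))kd/2$ rather than $\abs{T\cap S}\ge(1-O(\tau))k$.
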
 

\subsubsection*{Other results}
We also study two variants of \dksparams~and \dkssparams~where the subgraph $G\brac{S}$ is $d$-regular.

\begin{enumerate}
\item \dksregparams~is same as \dksparams~except in \prettyref{step:two}, which requires the subgraph $G\brac{S}$ to be an arbitrary $d-$regular graph.
\begin{theorem}[Informal version of \prettyref{thm:main_exp_reg}]
	\label{thm:inf_main_exp_reg}
	Given an instance of \dksregparams\\ where 
	\[  
	\delta = \Theta\paren{\dfrac{kd'}{nd}},\qquad \dfrac{\delta d}{k} = \Omega\paren{\dfrac{\log n}{n}},\qquad \text{ and } \qquad
	\nu' = \Theta\paren{\dfrac{\sqrt{d'}}{d\left(1 - \delta - \dfrac{\lambda}{d}\right)}}, \] 
	there is a deterministic polynomial time algorithm that outputs
	with high probability (over the instance) a vertex set $\mathcal{Q}$ of 
	size $k$ such that 
	\begin{multicols}{2}
		\begin{enumerate}
			\item $\rho\paren{ \mathcal{Q}} \geq \paren{1-\nu'} \dfrac{kd}{2} \mper$
			\item $\abs{\mathcal{Q} \cap S} \geq \paren{1-\bigo{\nu'}} k \mper$
		\end{enumerate}
	\end{multicols}
\end{theorem}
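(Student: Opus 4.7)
The plan is to adapt the SDP-based approach of \prettyref{thm:inf_main_exp} to this regular setting, where the $d$-regularity of $G[S]$ permits a direct spectral analysis that avoids the Cauchy--Schwarz step responsible for the outer square root in $\nu$; this is what produces the improved error $\nu' \sim \sqrt{d'}/d$ (rather than $\sqrt{\sqrt{d'}/d}$). At a high level, I would solve the natural vector relaxation for densest-$k$-subgraph, show that its optimum is concentrated on $S$, and then round by sorting the squared-norms $\|v_i\|^2$ and taking the top $k$ vertices.

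Concretely, I would use the SDP with vectors $v_i$ satisfying $\|v_i\|^2 \leq 1$ and $\sum_i \|v_i\|^2 \leq k$, together with the standard PSD and box constraints on $X_{ij} = \langle v_i, v_j \rangle$, and objective $\tfrac{1}{2}\langle A, X \rangle$, where $A$ is the adjacency matrix of the observed graph. The indicator of $S$ certifies ${\sf SDP} \geq kd/2$, and this lower bound survives the monotone adversary. For the concentration step, decompose $A = A_S + A_{V \setminus S} + B$ into its planted, expander, and bipartite pieces. The $d$-regularity of $A_S$ gives $\langle A_S, X \rangle \leq d \cdot \mathrm{Tr}(X_{SS})$ via $\lambda_{\max}(A_S) = d$; the expander decomposition $A_{V \setminus S} = (d'/(n-k)) J_{V \setminus S} + M$ with $\|M\| \leq \lambda$ yields a bound of the form $\lambda \cdot \mathrm{Tr}(X_{V \setminus S, V \setminus S})$ plus a rank-one expectation term; and matrix Bernstein---applicable since $p = \delta d / k = \Omega(\log n/n)$---bounds $\|B - p J_{S, V \setminus S}\| = O(\sqrt{d'})$ with high probability, a bound only strengthened by the adversary's deletions. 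Combining these estimates, the lower bound $\tfrac{1}{2}\langle A, X \rangle \geq kd/2$ forces $\mathrm{Tr}(X_{SS}) \geq (1 - O(\nu'))k$: the denominator $1 - \delta - \lambda/d$ appears when isolating the $S$-mass after moving the bipartite contribution (of order $\delta d \cdot \mathrm{Tr}(X)$) and the expander contribution (of order $\lambda \cdot \mathrm{Tr}(X_{V \setminus S, V \setminus S})$) to the other side, while the $\sqrt{d'}/d$ numerator comes from the bipartite concentration bound normalized by the target density $d$.

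For the rounding step, sort vertices by $\|v_i\|^2$ and output the top $k$ as $\mathcal{Q}$. The concentration of SDP mass on $S$ directly yields $|\mathcal{Q} \cap S| \geq (1 - O(\nu'))k$; regularity of $G[S]$ (every planted vertex has exactly $d$ planted neighbors) then converts this overlap into $\rho(\mathcal{Q}) \geq (1 - \nu')kd/2$, since a $(1 - O(\nu'))$ fraction of $S$ already accounts for $(1 - O(\nu'))kd/2$ planted edges. I expect the main obstacle to be the need to rule out ``impostor'' subsets $T' \subseteq V \setminus S$ that accumulate SDP mass comparable to $S$: because the adversary can freely delete cross and non-planted edges, the only available tool is the spectral bound on the expander piece, which must be leveraged to show that the edge-count inside any such $T'$ is $O((d' + \lambda)|T'|)$ and therefore cannot compete with the $d|T'|$ supplied by $G[S]$ on the corresponding portion of $\mathcal{Q} \cap S$. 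Getting the constants in this impostor-exclusion argument to match the clean $1 - \delta - \lambda/d$ denominator in $\nu'$ is the delicate part of the proof.
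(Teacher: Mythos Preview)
Your overall strategy---solve the SDP, use $d$-regularity to bound $\langle A_S,X\rangle \le d\cdot\mathrm{Tr}(X_{SS})$, combine with the expander and bipartite bounds to force $\mathrm{Tr}(X_{SS})\ge(1-O(\nu'))k$, then round by norm and convert overlap to density via regularity---matches the paper's approach (\prettyref{lem:four} through \prettyref{lem:eight}). Your rounding by taking the top $k$ norms is in fact slightly cleaner than the paper's threshold-then-greedy-prune route; both work because the constraint $\sum_i\|X_i\|^2=k$ together with $\sum_{i\in S}\|X_i\|^2\ge(1-\eta')k$ already caps the number of high-norm vertices in $V\setminus S$.

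There is one genuine error. You write that the matrix-Bernstein bound $\|B-pJ_{S,V\setminus S}\|=O(\sqrt{d'})$ is ``only strengthened by the adversary's deletions.'' This is false: deleting entries of a random bipartite adjacency matrix can \emph{increase} the spectral norm of the centered matrix (e.g.\ deleting all edges gives $-pJ$, whose norm is $p\sqrt{k(n-k)}$). The paper does not control the post-adversary spectral norm at all; instead it observes (beginning of \prettyref{sec:2}) that the SDP constraint $\langle X_i,X_j\rangle\ge 0$ makes the \emph{SDP contribution} $\sum A'_{ij}\langle X_i,X_j\rangle$ monotone non-increasing under edge deletions, so one may analyze the pre-adversary graph throughout. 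You need this argument, not a spectral one.

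Your closing worry about ``impostor'' subsets $T'\subseteq V\setminus S$ is misplaced. No edge-count comparison is needed: once $\mathrm{Tr}(X_{SS})\ge(1-\eta')k$ is established, the constraint $\mathrm{Tr}(X)=k$ immediately gives $\sum_{i\in V\setminus S}\|X_i\|^2\le\eta'k$, and Markov's inequality bounds the number of high-norm vertices outside $S$ (this is \prettyref{lem:seven}). The spectral bound on the expander is used earlier, in the mass-concentration step, not here. The ``delicate part'' you anticipate---matching the $1-\delta-\lambda/d$ denominator---is handled in \prettyref{prop:exp_vertex_len} by dividing through by $kd(1-\E_{i\sim S}\|X_i\|^2)$ and solving the resulting inequality for $\E_{i\sim S}\|X_i\|^2$; no separate impostor-exclusion argument is required.
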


\item \dkssregparams~is same as \dkssparams~except in \prettyref{step:two}, which requires the subgraph $G\brac{S}$ to be an arbitrary $d-$regular graph.
\begin{theorem}[Informal version of \prettyref{thm:main_gamma_reg}]
	\label{thm:inf_main_gamma_reg}
	Given an instance of \dkssregparams\\ where 
	\[  
\delta = \Theta\paren{\dfrac{k}{n}},\qquad \dfrac{\delta d}{k} = \Omega\paren{\dfrac{\log n}{n}},\qquad \text{ and }
\qquad\tau' = \Theta\paren{\dfrac{1}{\sqrt{d}\paren{1-\gamma-\delta}}}, \]
there is a deterministic polynomial time algorithm that outputs
	with high probability (over the instance) a vertex set $\mathcal{Q}$ of 
	size $k$ such that 
	\begin{multicols}{2}
		\begin{enumerate}
			\item $\rho\paren{ \mathcal{Q}} \geq \paren{1-\tau'} \dfrac{kd}{2} \mper$
			\item $\abs{\mathcal{Q} \cap S} \geq \paren{1-\bigo{\tau'}} k \mper$
		\end{enumerate}
\end{multicols}
\end{theorem}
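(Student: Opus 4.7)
The plan is to adapt the SDP-based approach used for the non-regular case (\prettyref{thm:inf_main_gamma}), exploiting $d$-regularity of $G\brac{S}$ for sharper control on the overlap $\Abs{\mathcal{Q} \cap S}$. Consider the natural SDP relaxation for \dks: maximize $\tfrac{1}{2}\sum_{i,j} w_{ij} X_{ij}$ subject to $X \succeq 0$, $X_{ii} = x_i \in [0,1]$, $0 \leq X_{ij} \leq \min(x_i, x_j)$, and $\sum_i x_i = k$. The planted indicator $x = \mathbf{1}_S$ with $X = \mathbf{1}_S \mathbf{1}_S^\top$ is feasible and achieves value exactly $kd/2$ because $G\brac{S}$ is $d$-regular. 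Since the monotone adversary in \prettyref{step:four} acts only on the bipartite edges and on $G\brac{V \setminus S}$, this lower bound on the SDP optimum is unaffected.

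Next, I would decompose the adjacency matrix $A = A_S + A_{\bar S} + B$ into its $S \times S$, $(V\setminus S) \times (V\setminus S)$, and bipartite blocks. The $d$-regularity gives $\lambda_{\max}(A_S) = d$. The condition $\rho(V') \leq \gamma d \Abs{V'}$ on $V \setminus S$ implies $\lambda_{\max}(A_{\bar S}) \leq 2\gamma d$ by the standard max-average-degree to spectral-radius bound. For the random bipartite block, the assumption $\delta d/k = \Omega(\log n/n)$ combined with matrix Bernstein (or the Feige--Ofek trimming approach for sparse random bipartite graphs) gives $\|B - \E B\|_{op} = \bigo{\sqrt{pn}} = \bigo{\sqrt{d}}$ w.h.p., while $\E B$ is rank-$2$ with $\|\E B\|_{op} = \bigo{d\sqrt{k/n}}$. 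Monotone adversarial deletions only tighten these spectral-norm upper bounds.

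Given any near-optimal SDP solution $(X,x)$, I would bound the objective blockwise. Using $X \succeq 0$ and the block eigenvalue bounds: $\operatorname{tr}(A_S X) \leq d \sum_{i \in S} x_i$ and $\operatorname{tr}(A_{\bar S} X) \leq 2\gamma d \sum_{i \notin S} x_i$. The bipartite term is more delicate: writing $B = \E B + (B - \E B)$, the mean part's inner product with $X$ is bounded using $X_{ij} \leq \sqrt{x_i x_j}$ (a consequence of $X \succeq 0$) followed by Cauchy--Schwarz, while the fluctuation contributes $\bigo{\sqrt{d}\,k}$ via its operator-norm bound times $\operatorname{tr}(X) = k$. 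Writing $s \defeq \sum_{i \in S} x_i$, combining these with the SDP lower bound $(1-\tau')kd/2$ yields an inequality of the shape
\[
	(1-\tau')\tfrac{kd}{2} \;\leq\; \tfrac{1}{2}\Paren{d\,s + 2\gamma d (k-s) + \bigo{\delta d\, k + \sqrt{d}\,k}}\mcom
\]
which, using $\delta = \Theta(k/n)$, rearranges to $s \geq \paren{1 - \bigo{\tau'}}k$ exactly when $\tau' = \Theta(1/(\sqrt{d}(1-\gamma-\delta)))$. Rounding $\mathcal{Q}$ as the top-$k$ entries of $x$ preserves the concentration, so $\Abs{\mathcal{Q} \cap S} \geq (1-\bigo{\tau'})k$. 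The density claim $\rho(\mathcal{Q}) \geq (1-\tau')kd/2$ then follows because $\mathcal{Q}\cap S$ is a near-complete subset of the $d$-regular graph $G\brac{S}$, losing only $\bigo{\tau' k d}$ edges compared to $G\brac{S}$ itself.

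The hard part will be obtaining the correct $\delta$-dependence in the bipartite bound. A crude spectral estimate $\operatorname{tr}(BX) \leq \|B\|_{op}\,k$ is insufficient because $\|\E B\|_{op}$ can dominate $\sqrt{d}$ when $d > n/k$; instead, the mean part must be handled using that $\E B$ lives in the planted rank-$2$ direction, so its inner product with $X$ is controlled by the deviation $k - s$ via the $X_{ij} \leq \sqrt{x_i x_j}$ inequality, not by its operator norm. A secondary concern is ensuring $\|B - \E B\|_{op}$ concentrates in the sparse regime $p = \Theta(\log n/n)$, which is the role of the matrix Bernstein or Feige--Ofek edge-trimming step; since $X \succeq 0$ has bounded trace, no $\epsilon$-net is needed to pass from this single operator-norm bound to a uniform bound over feasible $X$.
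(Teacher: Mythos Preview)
There is a genuine gap: your claim that $\rho(V') \leq \gamma d\,\abs{V'}$ for all $V' \subseteq V\setminus S$ implies $\lambda_{\max}(A_{\bar S}) \leq 2\gamma d$ is false. There is no such ``max-average-degree to spectral-radius'' bound. A star $K_{1,m}$ satisfies $\rho(V') \leq \abs{V'}$ for every subset (so one could take $\gamma d = 1$), yet its adjacency matrix has largest eigenvalue $\sqrt{m}$. Consequently the inequality $\sum_{i,j\in V\setminus S} A_{ij} X_{ij} \leq 2\gamma d \sum_{i \notin S} x_i$ cannot be obtained spectrally from the density hypothesis, and the main inequality you assemble does not follow.

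The paper handles this block without any eigenvalue bound. Using the SDP constraint $X_{ij}\leq X_{ii}$ that you already include, it rescales the SDP solution restricted to $V\setminus S$ into a feasible solution of Charikar's LP for the densest-subgraph problem on $G[V\setminus S]$ (\prettyref{lem:lpfeasibility}); Charikar's theorem (\prettyref{lem:charikar_lp}) then says the LP value is at most $\max_{W\subseteq V\setminus S} \rho(W)/\abs{W} \leq \gamma d$, which directly yields $\sum_{i,j\in V\setminus S} A_{ij} X_{ij} \leq 2\gamma d \sum_{i\notin S} x_i$ (\prettyref{prop:upper_bound_cross}). With this substitution the blockwise accounting works; the bipartite and $S$-block bounds you sketch are close to the paper's (\prettyref{prop:upper_bound_v_s}, \prettyref{lem:four}), though the paper uses an additional SDP constraint involving an auxiliary unit vector $I$ to sharpen the bipartite mean term, and it rounds via a norm threshold followed by greedy pruning (\prettyref{lem:pruning}) rather than ``top-$k$ entries of $x$'' to obtain the explicit intersection and density constants.
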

\end{enumerate}

We will show that for most natural regime of parameters, we get better approximation factors in the case when $G\brac{S}$ is a $d$-regular graph. We formalize this in \prettyref{app:b}.

\begin{remark}
	It has been pointed out to us by anonymous reviewers that for a large range of parameters of the \dkssparams~and \dkssregparams~models, $\argmax_{W \subseteq V} \rho(W)/\abs{W}$ will be a subset of $S$; for any graph $G=(V, E)$, the algorithm due to Charikar \cite{Charikar:2000:GAA:646688.702972} can be used to compute $\argmax_{W \subseteq V} \rho(W)/\abs{W}$ in polynomial time. It is plausible that using this algorithm iteratively, one can recover a ``large'' part of $S$. However the algorithm described in \prettyref{thm:inf_main_gamma} and \prettyref{thm:inf_main_gamma_reg} gives a more direct approach to recover a large part of $S$.
\end{remark}

\subsection{Notation}
\label{sec:notation}
We use $n \defeq \abs{V}$, and use $V$ and $[n] \defeq \set{1,2,\hdots,n}$ interchangeably.
We assume w.l.o.g. that $G$ is a complete graph: if $\set{i,j} \notin E$, we add $\set{i,j}$
to $E$ and set $w \paren{\set{i,j}} = 0$.
We use $A$ to denote the weighted adjacency matrix of $G$, i.e. 
$A_{ij} = w\paren{\set{i,j}} \ \forall i,j \in V$. 
The degree of vertex $i$ is defined as $d_i \defeq\sum\limits_{j \in V} w\paren{\set{i,j}}$. 

For $V' \subseteq V$, we use $G\brac{V'}$ to denote the subgraph induced on $V'$ and $\overline{V'}$ to denote $V \setminus V'$. 
For a vector $v$, we use $\norm{v}$ to denote the $\normt{v}$.
For a matrix $A$, we use $\norm{A}$ to denote the spectral norm $\norm{A} \defeq \max\limits_{x \neq 0} \dfrac{\norm{Ax}}{\norm{x}}$.

We define probability distributions $\mu$ over finite sets $\Omega$. For a random variable (r.v.) $X : \Omega \to \R$, its expectation is denoted by $\E_{\omega \sim \mu}\brac{X}$. In particular, we define the two distributions which we use below.
\begin{enumerate}
	\item For a vertex set $V'\subseteq V$, we define a probability (uniform) distribution $(f_{V'})$ on the vertex set $V'$ as follows. For a vertex $i \in V'$, $f_{V'}(i) = \dfrac{1}{\abs{V'}}$. We use $i \sim V'$ to denote $i \sim f_{V'}$ for clarity.
	\item For a vertex set $V'\subseteq V$, we define a probability distribution $(f_{E(G[V'])})$ on the edges of $G[V']$ as follows. For an edge $e \in E(G[V'])$, $f_{E(G[V'])}(e) = \dfrac{w\paren{e}}{\rho(V')}$. Again, we use $e \sim E(G[V'])$ to denote $e \sim f_{E(G[V'])}$ for convenience.
\end{enumerate}

\begin{definition}[$(d,\lambda)$-expanders]
\label{def:expander}
A graph $H = (V, E, w)$ is said to be a $(d,\lambda)$-expander if $H$ is $d$-regular and $\Abs{\lambda_i} \leq \lambda$, $\forall i \in [n]\setminus \set{1}$, where $\lambda_1 \geq \lambda_2 \hdots \geq \lambda_{n}$ are the eigenvalues of the weighted adjacency matrix of $H$.
\end{definition}

\subsection{Related Work}
\label{sec:related}

\subparagraph*{Densest $k$-subgraph.}
There has been a lot of work on the \dksprob~and its variants. The current best known approximation algorithm, due to Bhaskara \etal \cite{DBLP:conf/stoc/BhaskaraCCFV10}, gives an approximation ratio of $\bigO(n^{1/4+\e})$ in time $n^{\bigO(1/\e)}$, for all values of $\e > 0$ (for $\e = 1/\log n$, we get a ratio of $\bigo{n^{1/4}}$). They also extend their approach to give a $\bigO(n^{\ffrac{1}{4}-\e})$ approximation algorithm which runs in time $2^{n^{\bigO(\e)}}$. They improved the prior results of Feige \etal \cite{DBLP:journals/algorithmica/FeigePK01} which gave a $n^{1/3-\e}$ approximation for some small $\e > 0$. \cite{DBLP:journals/algorithmica/FeigePK01} also give a greedy algorithm which has an approximation factor of $\bigo{\ffrac{n}{k}}$.

When $k = \Theta(n)$, Asahiro et al. \cite{10.1007/3-540-61422-2_127} gave a constant factor approximation algorithm. Many other works have looked at this problem using linear and semidefinite programming techniques. Srivastav \etal \cite{Srivastav:1998:FDS:646687.702946} gave a randomized rounding algorithm using a SDP relaxation in the case when $k = n/c$ for $c > 1$, they improved the constants for certain values of $k$ over the results of \cite{10.1007/3-540-61422-2_127}. Feige and Langberg \cite{FEIGE2001174} use a different SDP to get an approximation of slightly above $k/n$ for the case when $k$ is roughly $\ffrac{n}{2}$. Feige and Seltser \cite{Feige97onthe} construct examples for which their SDP has an integrality gap of $\Omega{(n^{1/3})}$.

There has been work done on a related problem called the maximum density subgraph, where the objective is to find a subgraph which maximizes the ratio of number of edges to the number of vertices. Goldberg \cite{Goldberg:1984:FMD:894477} and Gallo \etal \cite{Gallo:1989:FPM:63408.63424} had given an algorithm to solve this problem exactly using maximum flow techniques. Later, Charikar \cite{Charikar:2000:GAA:646688.702972} gave an algorithm based on a linear programming method. This paper also solves the problem for directed graphs using a notion of density given by Kannan and Vinay \cite{kannan1999analyzing}. Khuller and Saha \cite{Khuller:2009:FDS:1577399.1577451} gave a max-flow based algorithm in the directed setting. 

On the hardness side, Khot \cite{Khot:2006:ROP:1328008.1328009} showed that it does not have a PTAS unless NP has subexponential algorithms. There has been some works based on some other hardness assumptions. Assuming the small-set expansion hypothesis, Raghavendra and Steurer \cite{Raghavendra:2010:GEU:1806689.1806792} show that it is NP-hard to approximate \dks~to any constant factor. Under the deterministic ETH assumption, Braverman \etal \cite{Braverman:2017:EHD:3039686.3039772} show that it requires $n^{\Omega(\log n)}$ time to approximate \dks~with perfect completeness to within $1+\varepsilon$ factor (for a universal constant $\varepsilon >0$). More recently Manurangsi \cite{DBLP:conf/stoc/Manurangsi17} showed assuming the exponential time hypothesis (ETH), that there is no polynomial time algorithm that approximates this to within $n^{1/(\log \log n)^c}$ factor where $c >0$ is some fixed constant independent of n. 

Bhaskara \etal \cite{10.5555/2095116.2095150} study strong SDP relaxations of the problem and show that the integrality gap of \dks~remains $n^{\Omega_\varepsilon\paren{1}}$ even after $n^{1-\varepsilon}$ rounds of the Lasserre hierarchy. Also for $n^{\Omega\paren{\varepsilon}}$ rounds, the gap is as large as $n^{2/53-\varepsilon}$. Moreover for the Sherali-Adams relaxation, they show a lower bound of $\Omega\paren{\ffrac{n^{1/4}}{\log^3 n}}$ on the integrality gap for $\Omega\paren{\ffrac{\log n}{\log\log n}}$ rounds.

Ames \cite{10.1007/s10957-015-0777-x} studies the planted \dks~problem using a non-SDP convex relaxation for instances of the following kind. Let $S$ be the planted dense subgraph (of size $k$), they claim that if $G\brac{S}$ contains at least ${k \choose 2} - c_1 k^2$ edges and the subgraph $G\brac{V \setminus S}$ contains at most $c_2 k^2$ edges where $c_1, c_2$ are constants depending on other parameters of the graph like the density of the subgraph $G\brac{S}$ etc, then under some mild technical conditions, they show that the unique optimal solution to their convex program is integral and corresponds to the set $S$. They also study analogous models for bipartite graphs.

\subparagraph*{Random models for DkS.}
Bhaskara \etal \cite{DBLP:conf/stoc/BhaskaraCCFV10} study a few random models of instances for the \dksprob, we describe them here. 
Let $\mathcal{D}_1$ denote the distribution of \Erdos-\Renyi random graphs $G(n, p)$ and let  
$\mathcal{D}_2$ denote the distribution of graphs constructed as follows. Starting with a ``host graph'' of average degree $D$ ($D\defeq np$),  a set $S$ of $k$ vertices is chosen arbitrarily and the subgraph on $S$ is replaced with a dense subgraph of average degree $d$. 
Given $G_1 \sim \mathcal{D}_1$ and $G_2 \sim \mathcal{D}_2$, the problem is to distinguish between the two distributions. They consider this problem in three different models with varying assumptions on $\mathcal{D}_2$,
(i) \emph{Random Planted Model} : the host graph and the planted dense subgraph are random,
(ii) \emph{Dense in Random Model} : an arbitrary dense graph is planted inside a random graph, and
(iii) \emph{Dense vs Random Model} : an arbitrary dense graph is planted inside an arbitrary graph.

The \emph{planted dense subgraph} recovery problem is similar in spirit to the \emph{Random Planted Model} where the goal is to recover a hidden community of size $k$ within a larger graph which is constructed as follows : two vertices are connected by an edge with probability $p$ if they belong to the same community and with probability $q$ otherwise. The typical setting of parameters is, $p > q$. The works by \cite{959929, 2014arXiv1406.6625H, 10.1007/s10957-015-0777-x, Montanari, 7523889, e86c6c373bf9484a861530abef1cb4a5, bombina2019convex} studies this problem using SDP based, spectral, statistical, message passing algorithms etc.

We give a brief overview of their distinguishing algorithms in the three models. Given a graph on $n$ vertices with average degree $d_{\text{avg}}$, its $\log$-density is defined as $\dfrac{\log d_{\text{avg}}}{\log n}$. Let $\Theta_1$ and $\Theta_2$ denote the $\log$-density of $G_1$ and the $\log$-density of the planted subgraph $G_2[S]$ respectively. Their algorithm is based on the counts of a specially constructed small-sized tree (the size of which is parameterized by relatively prime integers $r,s$ such that $s > r > 0$) as a subgraph in $G_1$ and $G_2$.  They show that if $\Theta_1 \leq r/s$, then $G_1$ will have at most poly-logarithmic $(\bigo{\log n}^{s-r})$ number of such subtrees. On the other hand, when $\Theta_2 \geq r/s+\varepsilon$ where $\e > 0$ is a small constant, they show that there at least $k^{\e}$ such subtrees (even in the \emph{Dense vs Random Model}). Now if $k > (\log n)^{\omega(1)}$, they use this difference in the $\log$-densities to show the gap between counts of such trees in $G_1$ and $G_2$, and hence are able to distinguish between the two distributions. They show that the running time of this algorithm is $n^{\bigo{r}}$. Also for constant $\Theta_1$ and $\Theta_2$, the running time is $n^{\bigo{\ffrac{1}{\paren{\Theta_2-\Theta_1}}}}$ (\cite{DBLP:conf/stoc/BhaskaraCCFV10, adityathesis}). We call this algorithm the ``subgraph counting'' algorithm.

The distinguishing problem can be restated as the following : For a given $n, k, p$, we are interested in finding the smallest value of $d$ for which the problem can be solved. For a certain range of parameters, spectral, SDP based methods, etc. can be used to work for small values of $d$. For example, in the \emph{Dense vs Random Model}, when $k > \sqrt{n}$ a natural SDP relaxation of \dks~can be used to distinguish between $G_1$ and $G_2$ for $d > \sqrt{D}+kD/n$ (which is smaller than  $D^{\log_{n} k}$, the threshold of the subgraph counting algorithm). They upper bound the cost of the optimal SDP solution for a random graph $G_1$, by constructing a feasible dual solution which certifies (w.h.p.) that it cannot contain a $k$-subgraph with density more than that of $\sqrt{D}+kD/n$. We use their results in bounding the cost of the SDP contribution from $G\brac{V \setminus S}$ in the \dksparams~and \dksregparams~models. 

The distribution $\mathcal{D}_2$ of graphs considered in the \emph{Dense in Random Model} (arbitrary dense graph planted in a random graph)  is similar to a subset of \dksparams~instances since $G\brac{S}$ is an arbitrary dense subgraph in both models and $G\brac{S, V \setminus S}$ is a random graph in both the models. The difference is in the subgraph $G\brac{V \setminus S}$, where this is a random graph in the {\em Dense in Random} model whereas our models require it to be a regular expander. While our proofs require the expander to be regular, they can also be made to work for random graphs since we use the bound on the SDP value from \cite{DBLP:conf/stoc/BhaskaraCCFV10} (analysis in \prettyref{sec:2.2}). We note that while random graphs are good expanders w.h.p., the converse of this fact is not true in general, since there are known deterministic constructions of expander graphs.

We look at the range of parameters where the following two algorithms can be used to solve the \emph{Dense in Random} problem. One is the SDP based algorithm proposed in our work (closely related to \dksparams~model) and second is the subgraph counting algorithm which uses the difference in the $\log$-densities of the planted subgraph and the host graph to distinguish the two distributions from \cite{DBLP:conf/stoc/BhaskaraCCFV10, adityathesis}. For the purposes of comparison, we consider the case when $k, d = poly(n)$ and $p = 1/poly(n)$. Also we ignore the low-order terms in these expressions. In this regime, our algorithms' threshold is 
\begin{equation}
	\label{eq:alg_threshold}
	d =  \Omega\paren{\max\set{pk, \sqrt{np}}}
\end{equation}
since we can use the objective value of the \prettyref{sdp:dks} to distinguish between the cases in this range of $d$. For $G_1$, this value is at most $\ffrac{k\paren{pk+\sqrt{np}}}{2}$ (\prettyref{lem:sdp-random-graph}) while for $G_2$ it is at least $\ffrac{kd}{2}$. Moreover, \prettyref{alg:one} can be used to recover a part of the planted solution as the value of $\nu$ is small (when $d$ satisfies \prettyref{eq:alg_threshold}, $\nu$ is bounded away and smaller than 1) in this regime (see \prettyref{sec:2} and \prettyref{thm:main_exp}).

The counting algorithms' threshold (or the $\log$-density threshold) is
\[\dfrac{\log d}{\log k} - \dfrac{\log np}{\log n} > 0 \iff \log d > \dfrac{\log k\log np}{\log n} \iff d = \Omega\paren{(np)^{\log_{n} k}}\]
and its running time is $n^{\bigo{\dfrac{1}{\log_{k} d - \log_{n} np}}}$. We look at different ranges of $k$ and compare the values of $d$ for which the two algorithms can solve the distinguishing problem.

\begin{enumerate}
	\item $k = \Theta\paren{\sqrt{n}}$.\\
		In this case, $\max\set{pk, \sqrt{np}} = \sqrt{np}$. This matches with the $\log$-density threshold. Note that for $p = \Theta\paren{\ffrac{1}{\sqrt{n}}}$, we get $d = \Omega\paren{n^{1/4}}$. To the best of our knowledge, there is no poly-time algorithm which beats this lower bound.
	\item $k = \omega\paren{\sqrt{n}}$.\\
	In this setting, $(np)^{\log_{n} k} = \omega\paren{\sqrt{np}}$. Also, $(np)^{\log_{n} k} = k(p)^{\log_{n} k} = \omega\paren{pk}$. Thus our algorithm has a better threshold in this regime. There is a spectral algorithm, see Section 6.2 of \cite{DBLP:conf/stoc/BhaskaraCCFV10}, which uses the second eigenvalue of the adjacency matrix which can distinguish with the same threshold as our algorithm in this regime.
	\item $k = o\paren{\sqrt{n}}$.\\
	In this case, $(np)^{\log_{n} k} = o\paren{\sqrt{np}}$. Here the $\log$-density threshold is smaller than our threshold. Therefore the algorithm by Bhaskara \etal \cite{DBLP:conf/stoc/BhaskaraCCFV10} works for a larger range of parameters than our algorithms.
\end{enumerate}

\subparagraph*{Other semi-random models.}
Semi-random instances of many other fundamental problems have been studied in the literature.
This includes the unique games problem \cite{DBLP:conf/focs/KollaMM11}, graph coloring \cite{Alon:1997:STC:270560.270935, Coja-Oghlan:2007:CSG:1273845.1273847, David:2016:ERP:2897518.2897561}, graph partitioning problems such as balanced-cut, multi-cut, small set expansion \cite{Makarychev:2012:AAS:2213977.2214013, Makarychev:2014:CFA:2591796.2591841,DBLP:conf/icalp/LouisV18, DBLP:conf/fsttcs/LouisV19}, etc. 
\cite{pmlr-v49-makarychev16} studies the problem of learning communities in the Stochastic Block Model in the presence of adversarial errors. 

McKenzie, Mehta and Trevisan \cite{DBLP:conf/soda/McKenzieMT20} study the complexity of the independent set problem in the Feige-Kilian model \cite{10.1006/jcss.2001.1773}. Instead of using a SDP relaxation for the problem, they use a ``crude'' SDP (introduced in \cite{DBLP:conf/focs/KollaMM11}) which exploits the geometry of vectors (orthogonality etc.) to reveal the planted set. They bound the SDP contribution by the vertex pairs, $S \times V \setminus S$ using the Grothendieck inequality and thereby showing that the vectors in $S$ are ``clustered'' together. Their algorithm outputs w.h.p. a large independent set when $k = \Omega\paren{\ffrac{n^{2/3}}{p^{1/3}}}$. Also, for the parameter range $k = \Omega\paren{\ffrac{n^{2/3}}{p}}$, it outputs a list of at most $n$ independent sets of size $k$, one of which is the planted one.

\subparagraph*{Semi-random models for graph partitioning problems.}
The problem of \dks~is very closely related to the \sseprob~(SSE, henceforth). This problem has been very well studied in the literature. At the first glance, the problem of \dks~can be thought of as finding a small set $S$ of size $k$ which is non-expanding. The densest set is typically a non-expanding set because most of the edges incident on $S$ would remain inside it than leaving it. But the converse is not true, since all sets of cardinality $k$ which have small expansion are not dense. In particular, in our model, by the action of the monotone adversary on $V \setminus S$, there can exist many small sets (of size $\bigo{k}$) which not only have a very small fraction of edges going outside but can have very few edges left inside as well. This makes the problem of \dks~very different from the SSE problem. Nevertheless, we survey some related works of semi-random models of SSE. The works \cite{10.1145/1806689.1806776, DBLP:journals/siamcomp/BansalFKMNNS14} study the worst-case approximation factors for the SSE problem and give bi-criteria approximation algorithms for the same. Their algorithms are also based on rounding a SDP relaxation.

Makarychev, Markarychev and Vijayaraghavan \cite{Makarychev:2012:AAS:2213977.2214013} study the complexity of many graph partitioning problems including balanced cut, SSE, and multi-cut etc. They consider the following model : Partition $V$ into $(S, V \setminus S)$ such that $G\brac{S}$ and $G\brac{V \setminus S}$ are arbitrary while $G\brac{S, V \setminus S}$ is a random graph with some probability $\varepsilon$. They allow an adversary to add edges within $S$ and $V \setminus S$, and delete any edges across these sets. They get constant factor bi-criteria approximation algorithms (under some mild technical conditions) in this model. In the case of balanced cut and SSE problems, when the partitions themselves have enough expansion within them, they can recover the planted cut upto a small error.

Louis and Venkat \cite{DBLP:conf/icalp/LouisV18} study the problem of balanced vertex expansion in a natural semi-random model and get a bi-criteria approximation algorithm for the same. They even get an exact recovery for a restricted set of parameters in their model. Their proof consisted of constructing an optimal solution to the dual of the SDP relaxation and using it to show the integrality of the optimal primal solution. In \cite{DBLP:conf/fsttcs/LouisV19}, they study the problem for a general, balanced $k-$way vertex (and edge) expansion and give efficient algorithms for the same. Their construction consists of $k$ (almost) regular expander graphs (over vertices $\set{S_i}_{i=1}^{k}$, each of size $n/k$) and then adding edges across them ensuring that the expansion of each of the $G\brac{S_i}'s$ is small. Their algorithm is based on rounding a SDP relaxation and then showing that the vertices of each $S_i$ are ``clustered'' together around the mean vector $\mu_i$ and for different sets $S_i$ and $S_j$, $\mu_i$ and $\mu_j$ are sufficiently apart. This gives a way to recover a good solution. Our approach also shows that the SDP vectors for the vertices in $S$ are ``clustered'' together. However arriving at such a conclusion requires different ideas because of the new challenges posed by the nature of the problem and assumptions on our models.

\subsection{SDP formulation}
\label{sec:sdp}
We use the following Semidefinite/Vector Programming relaxation for our problem, over the vectors $X_i$ $(i \in [n])$ and $I$.
\begin{SDP}
	\label{sdp:dks}
	\begin{align}
		\textbf{maximize}\qquad\qquad\qquad
		\label{eq:sdp1}
		\dfrac{1}{2}\sum\limits_{i,j=1}^{n}  A_{ij}\inprod{X_i, X_j}& \\
		\textbf{subject to}\qquad\qquad\qquad\qquad~
		\label{eq:sdp2}
		\sum\limits_{i=1}^{n} \inprod{X_i, X_i} &= k \\
		\label{eq:sdp3}
		\sum\limits_{j=1}^{n} \inprod{X_i, X_j} &\leq k\inprod{X_i, X_i} & \forall i \in [n]\\
		\label{eq:sdp4}
		0 \leq	\inprod{X_i, X_j} &\leq \inprod{X_i, X_i} & \forall i, j \in [n],\ (i \neq j)\\
		\label{eq:sdp5}
		\inprod{X_i, X_i} &\leq 1 & \forall i \in [n]\\
		\label{eq:sdp7}
		\inprod{X_i, I} &= \inprod{X_i, X_i} & \forall i \in [n]\\
		\label{eq:sdp8}
		\inprod{I, I} &= 1
	\end{align}
\end{SDP}
We note that these programs can be solved efficiently using standard algorithms, like ellipsoid and interior point methods. To see, why the above \prettyref{sdp:dks} is a relaxation, let $S$ be the optimal set and $v$ be any unit vector. It is easy to verify the solution set,
\[X_{i} = \begin{cases}
      v & i \in S \\
      0 & i \in V \setminus S\\
   \end{cases}
   \qquad\text{and}\qquad I = v\mper
\]
is feasible for \prettyref{sdp:dks} and gives the objective value equal to its optimal density.

\subsection{Proof Overview}
Our algorithms are based on rounding an SDP relaxation (\prettyref{sdp:dks}) for the \dksprob.
At a high level, we show that most of the SDP mass is concentrated on the vertices in $S$
(\prettyref{prop:exp_edge_len}, \prettyref{prop:exp_vertex_len}).
To show this, we begin by observing that
the SDP objective value is at least $kd/2$ since the integer optimal solution to the SDP
has value at least $kd/2$. Therefore, by proving an appropriate upper bound
on the SDP value from edges in $S \times (V \setminus S)$ (\prettyref{prop:upper_bound_v_s}) and the edges in $V \setminus S$ (\prettyref{prop:upper_bound_v_s_expander}, \prettyref{prop:upper_bound_cross}),
we can get a lower bound on the SDP value from the edges inside $S$. 

The edges in $S \times (V \setminus S)$ form a random bipartite graph. We can bound the contribution towards the SDP mass from this part by bounding the contribution from the ``expected graph'' (\prettyref{lem:ub_xi_xj_s_sbar}) and the contribution from the random graph minus the expected graph (\prettyref{cor:ub_b_ij_xi_xj_s_sbar_expanded}). The contribution from the latter part can be bounded using bounds on the spectra of random matrices (\prettyref{cor:ub_spectral_norm_b}). Since the expected graph is a complete weighted graph with edge weights equal to the edge probability, the contribution from this part can be bounded using the SDP constraints (\prettyref{lem:ub_xi_xj_s_sbar}).

For \dksparams~and \dksregparams, we use a result by \cite{DBLP:conf/stoc/BhaskaraCCFV10}. They construct a feasible solution to the dual of the SDP for random graphs, thereby bounding the cost of the optimal solution of the primal. Their proof only uses a bound on the spectral gap of the graph, and therefore, holds also for expander graphs. 
Therefore, this result gives us the desired bound on the SDP value on the edges inside $V \setminus S$
in these models (\prettyref{prop:upper_bound_v_s_expander}).
We also give an alternate proof of the same result using the spectral properties of the adjacency matrix of $V \setminus S$ in \prettyref{lem:svd_expander}; this approach is similar in spirit to the proof of the classical {\em expander mixing lemma}.

For \dkssparams~and \dkssregparams, we bound the SDP value on the edges inside $V \setminus S$
using a result of Charikar \cite{Charikar:2000:GAA:646688.702972}.
This work showed that for a graph $H=(V',E')$, a natural LP relaxation can be used to compute 
$\max_{W \subseteq V'} \rho(W)/\Abs{W}$.
We show that we can use our SDP solution to construct a feasible solution for 
this LP. 
Since $\rho(W)/\Abs{W} \leq \gamma d$, $\forall W\subset V \setminus S$ in this model, 
Charikar's result \cite{Charikar:2000:GAA:646688.702972} implies that the cost of 
any feasible LP solution can be bounded by $\gamma d$.
This gives us the desired bound on the SDP value on the edges inside $V \setminus S$ in these models
(\prettyref{prop:upper_bound_cross}).

These bounds establish that most of the SDP mass is on the edges inside $S$. 
Using the SDP constraints, we show that the set of vertices 
corresponding to all the ``long'' vectors will contain
a large weight of edges inside $S$ (\prettyref{cor:edges_inside_set_T}). 
Moreover, since the sum of squared lengths of the vectors is $k$
(from the SDP constraints), we can only have $\bigo{k}$ long vectors (\prettyref{lem:sizeT}).
Using standard techniques from the literature, we can prune this set to obtain a 
set of size at most $k$ and having large density \cite{Srivastav:1998:FDS:646687.702946}.
In the case when the graph induced on $S$ is $d$-regular, we show that if a set 
contains a large fraction of the edges inside $S$, then it must also have a 
large intersection with $S$ (\prettyref{lem:six} and \prettyref{lem:eight}). We present our complete procedure in \prettyref{alg:one}.

We note that while this framework for showing that the SDP mass is 
concentrated on the planted solution has been used for 
designing algorithms for semi-random instances of other problems as well, 
proving quantitative bounds is problem-specific and model-specific:
different problems and different models require different approaches.

\section{Analysis of \dksparams}
\label{sec:2}
In this section, we will analyse the \dksparams~model. Our main result is the following. 

\begin{theorem}[Formal version of \prettyref{thm:inf_main_exp}]
	\label{thm:main_exp}
	There exist universal constants $\probabilityconstant, \matrixconstant \in \mathbb{R}^{+}$ and a deterministic polynomial time algorithm, which takes an instance of \dksparams~where 
	\[\nu = 2\sqrt{3\paren{6\delta + \matrixconstant\sqrt{\dfrac{\delta n}{dk}} + \dfrac{\lambda}{d} + \dfrac{d'k}{\paren{n-k}d}}},\] 
	satisfying $\nu \in (0,1)$, and $\ffrac{\delta d}{k} \in [\probabilityconstant \ffrac{\log n}{n}, 1)$, and outputs with high probability (over the instance) a vertex set $\mathcal{Q}$ of size $k$ such that
	\[\rho(\mathcal{Q}) \geq \paren{1-\nu} \dfrac{kd}{2}\mper\]
	The above algorithm also computes a vertex set $T$ such that
	\begin{multicols}{2}
		\begin{enumerate}[(a)]
			\item $\abs{T} \leq k\paren{1+\dfrac{\nu}{5}}\mper$
			\item $\rho(T \cap S) \geq \paren{1-\dfrac{\nu}{2}} \dfrac{kd}{2}\mper$
		\end{enumerate}
	\end{multicols}
\end{theorem}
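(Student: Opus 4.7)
The plan is to round an optimal solution $\set{X_i}_{i \in V}, I$ of \prettyref{sdp:dks}. Setting $X_i = I$ for $i \in S$ and $X_i = 0$ otherwise gives a feasible solution of value $kd/2$, so the SDP optimum is at least $kd/2$. The bulk of the proof consists of showing that almost all of this mass must come from edges inside $S$, by upper bounding separately the SDP contributions from the random bipartite part $S \times \overline{S}$ and from $G[\overline{S}]$.

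For the cross contribution $\tfrac{1}{2}\sum_{i \in S, j \in \overline{S}} A_{ij} \inprod{X_i, X_j}$, I would decompose $A_{ij} = p + B_{ij}$ with $p = \delta d/k$ and $B_{ij}$ independent, mean-zero, and bounded in $[-p, 1-p]$. The expected piece $p \sum_{i \in S, j \in \overline{S}} \inprod{X_i, X_j}$ can be controlled using \prettyref{eq:sdp3} together with $\sum_i \snorm{X_i} = k$ to bound the double sum by $k^2$, yielding an error of order $\delta \cdot kd$. The noise piece I would rewrite as a Frobenius inner product between $B$ (padded by zeros outside $S \times \overline{S}$) and the Gram matrix of the $X_i$'s; bounding it by $\norm{B} \cdot \sum_i \snorm{X_i}$ and invoking spectral-norm concentration for random rectangular matrices — this is precisely where the assumption $\delta d/k \geq \probabilityconstant \log n/n$ is used — gives an error of order $\sqrt{\delta n/(dk)} \cdot kd$.

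For the contribution from $G[\overline{S}]$ I would appeal to the dual certificate of \cite{DBLP:conf/stoc/BhaskaraCCFV10}: their construction only uses the spectral gap of the host graph and therefore applies verbatim to any $(d',\lambda)$-expander, with an alternate expander-mixing-style calculation also available. This bounds the SDP value on $\overline{S}$ by a quantity of order $\paren{\lambda/d + d'k/((n-k)d)} \cdot kd/2$. Subtracting the three error bounds from the SDP optimum forces $\tfrac{1}{2}\sum_{i,j \in S} A_{ij}\inprod{X_i, X_j} \geq (1 - \nu^2/12) \cdot kd/2$. Combined with the constraint $\inprod{X_i, X_j} \leq \inprod{X_i, X_i} \leq 1$, this pins most edges of $G[S]$ to have $\inprod{X_i, X_j}$ close to $1$, forcing the corresponding vectors to have near-unit length.

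Given this concentration, I would set $T \defeq \set{i \in V : \snorm{X_i} \geq 1 - c\nu}$ for a suitable constant $c$, use $\sum_i \snorm{X_i} = k$ to deduce $\abs{T} \leq (1 + \nu/5)k$, and apply a Markov-type averaging argument to the edges of $G[S]$ having small inner product to obtain $\rho(T \cap S) \geq (1 - \nu/2) \cdot kd/2$. Finally, to trim $T$ down to exactly $k$ vertices forming $\mathcal{Q}$ while preserving $\rho(\mathcal{Q}) \geq (1 - \nu) \cdot kd/2$, I would apply the standard sort-and-delete pruning of \cite{Srivastav:1998:FDS:646687.702946}. The main obstacle is the cross-edges step: it requires a spectral bound on the random $\abs{S} \times \abs{\overline{S}}$ matrix $B$ that is both tight enough at the threshold $p \geq \probabilityconstant \log n / n$ and usable against the worst-case SDP Gram matrix produced by the optimizer. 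Everything else either reduces to prior work (the $G[\overline{S}]$ bound and the final rounding step) or is a routine consequence of the SDP constraints and the definition of $T$.
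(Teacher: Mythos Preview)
Your proposal is correct and follows essentially the same route as the paper: bound the cross and $G[\overline{S}]$ SDP contributions to force $\tfrac12\sum_{i,j\in S}A_{ij}\inprod{X_i,X_j}\geq(1-\eta)\,kd/2$ with $\eta=\nu^2/12$, threshold on $\snorm{X_i}$ to define $T$, apply Markov on edge inner products to get $\rho(T\cap S)\geq(1-1/\alpha)\,kd/2$, bound $\abs{T}$ via $\sum_i\snorm{X_i}=k$, and prune with \cite{Srivastav:1998:FDS:646687.702946}. Your treatment of the cross term is in fact slightly cleaner than the paper's (bounding $\sum_{i\in S,j\in\overline S}\inprod{X_i,X_j}\leq k^2$ directly from \prettyref{eq:sdp3}, and the noise via the nuclear-norm/spectral-norm duality $\abs{\langle B,G\rangle}\leq\norm{B}\cdot\operatorname{tr}(G)$), yielding a coefficient of $2\delta$ rather than the paper's $6\delta$; this only strengthens the stated bound.
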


In the analysis below, without loss of generality we can ignore the adversarial action (\prettyref{step:four} of the model construction) to have taken place. Let us assume the montone adversary removes edges arbitrarily from the subgraphs $G[V \setminus S]~\&~G[S, V \setminus S]$ and the new resulting adjacency matrix is $A'$. Then for any feasible solution $\set{\set{Y_i}_{i=1}^{n}, I}$ of the SDP, we have $\sum\limits_{i \in P, j \in Q} A'_{ij}\inprod{Y_i, Y_j} \leq \sum\limits_{i \in P, j \in Q} A_{ij} \inprod{Y_i, Y_j}$ for $\forall P, Q \subseteq V$. This holds because of the non-negativity constraint \prettyref{eq:sdp4}. Thus the upper bounds on SDP contribution by vectors in $G\brac{S, V \setminus S}$ and $G\brac{V \setminus S}$  as claimed by \prettyref{prop:upper_bound_v_s} and \prettyref{prop:upper_bound_cross} respectively are intact and the rest of the proof follows exactly. Hence, without loss of generality, we can ignore this step in the analysis of our algorithm.

\subsection{Edges between $S$ and $V \setminus S$}
In this section, we show an upper bound on $\sum\limits_{i \in S, j \in V \setminus S} A_{ij} \inprod{X_i, X_j}$. 
\begin{proposition}
\label{prop:upper_bound_v_s}
W.h.p. (over the choice of the graph), we have
\[	\sum\limits_{i \in S, j \in V \setminus S} A_{ij}\inprod{X_i,X_j} \leq 3pk^2\paren{1-\E_{i \sim S}\norm{X_i}^2} 
+ \matrixconstant k\sqrt{np}\sqrt{\paren{\E_{i \sim S} \norm{X_i}^2} \paren{1-\E_{i \sim S} \norm{X_i}^2}} \mper \]
\end{proposition}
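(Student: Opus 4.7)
The plan is to decompose the random cross-adjacency as $A_{ij} = p + B_{ij}$ for $i \in S,\ j \in V \setminus S$, where $B_{ij} \defeq A_{ij} - p$ is centred, and bound the expected-graph contribution and the stochastic fluctuation separately:
\[
\sum_{\substack{i \in S \\ j \in V \setminus S}} A_{ij}\inprod{X_i,X_j}
= \underbrace{p\sum_{\substack{i \in S \\ j \in V \setminus S}}\inprod{X_i,X_j}}_{(\mathrm{I})}
+ \underbrace{\sum_{\substack{i \in S \\ j \in V \setminus S}} B_{ij}\inprod{X_i,X_j}}_{(\mathrm{II})}.
\]
Term $(\mathrm{I})$ is deterministic given the SDP solution and is handled by the SDP constraints alone; term $(\mathrm{II})$ is where the randomness of the planted model genuinely enters.

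For $(\mathrm{I})$, I would invoke SDP constraint \prettyref{eq:sdp4} in the form $\inprod{X_i,X_j} \le \inprod{X_j,X_j} = \|X_j\|^2$. Summing first over $i \in S$ gives $\sum_{i \in S}\inprod{X_i,X_j} \le k\|X_j\|^2$; then, writing $\alpha \defeq \E_{i \sim S}\|X_i\|^2$ and using $\sum_{i=1}^n \|X_i\|^2 = k$ from \prettyref{eq:sdp2}, a further summation over $j \in V \setminus S$ yields $(\mathrm{I}) \le pk \cdot k(1-\alpha) = pk^2(1-\alpha)$. Up to a modest universal constant, this is precisely the content of \prettyref{lem:ub_xi_xj_s_sbar}; any residual slack can be absorbed into the factor~$3$ in the statement.

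For $(\mathrm{II})$, the plan is to exploit the spectral norm of the centred noise matrix. Embedding all $\{X_i\}$ in a common $\R^N$ and expanding $\inprod{X_i,X_j} = \sum_\ell (X_i)_\ell (X_j)_\ell$, term $(\mathrm{II})$ becomes $\sum_\ell u_\ell^{\top} B\, v_\ell$, where $B \in \R^{k \times (n-k)}$ is the bipartite noise matrix with $B_{ij} = A_{ij} - p$, and $u_\ell, v_\ell$ are the coordinate-$\ell$ slices of $\{X_i\}_{i \in S}$ and $\{X_j\}_{j \in V \setminus S}$ respectively. The definition of spectral norm gives $|u_\ell^{\top} B v_\ell| \le \|B\|\,\|u_\ell\|\,\|v_\ell\|$; summing in $\ell$ and applying Cauchy-Schwarz yields
\[
\sum_\ell \|u_\ell\|\,\|v_\ell\| \le \sqrt{\textstyle\sum_\ell \|u_\ell\|^2}\,\sqrt{\textstyle\sum_\ell \|v_\ell\|^2} = \sqrt{\textstyle\sum_{i \in S}\|X_i\|^2}\,\sqrt{\textstyle\sum_{j \in V \setminus S}\|X_j\|^2} = k\sqrt{\alpha(1-\alpha)},
\]
so $(\mathrm{II}) \le \|B\| \cdot k\sqrt{\alpha(1-\alpha)}$. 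This is the structure packaged in \prettyref{cor:ub_b_ij_xi_xj_s_sbar_expanded}.

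The main obstacle, and the only step that truly uses the random planted model, is a high-probability bound $\|B\| \le \matrixconstant \sqrt{np}$ for some absolute constant $\matrixconstant$; this is intended to be \prettyref{cor:ub_spectral_norm_b}. Since the entries of $B$ are independent, bounded, centred with variance $p(1-p) \le p$, standard sparse random matrix concentration (Vu, Feige--Ofek, or the non-commutative Bernstein inequality) delivers this bound exactly in the regime $p = \delta d/k = \Omega(\log n/n)$ hypothesised in the theorem; below this threshold a single heavy row would dominate $\|B\|$ and the Gaussian-like estimate $\sqrt{np}$ would fail. Combining the bound on $(\mathrm{I})$ with $(\mathrm{II}) \le \matrixconstant k\sqrt{np}\sqrt{\alpha(1-\alpha)}$ then yields the claimed inequality, with the ``w.h.p.\ over the choice of the graph'' qualifier inherited from the spectral bound on $B$.
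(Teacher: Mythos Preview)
Your proposal is correct, and the overall structure --- the split $A_{ij}=p+B_{ij}$ into an expected-graph term $(\mathrm{I})$ and a fluctuation term $(\mathrm{II})$ --- matches the paper's eqn~\prettyref{eq:aij}. Your handling of $(\mathrm{II})$ (coordinate-wise expansion, spectral-norm/Cauchy--Schwarz, then the random-matrix bound $\|B\|\le \matrixconstant\sqrt{np}$ in the regime $p=\Omega(\log n/n)$) is essentially the paper's \prettyref{lem:ub_b_ij_xi_xj_s_sbar} together with \prettyref{cor:ub_spectral_norm_b} and \prettyref{cor:ub_b_ij_xi_xj_s_sbar_expanded}.

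The one genuine difference is in term $(\mathrm{I})$. The paper does \emph{not} use constraint~\prettyref{eq:sdp4} directly; instead it writes $\sum_{i\in S,\,j\in V\setminus S}\inprod{X_i,X_j}=\sum_{i\in S,\,j\in V}\inprod{X_i,X_j}-\sum_{i,j\in S}\inprod{X_i,X_j}$, bounds the first sum via constraint~\prettyref{eq:sdp3}, and lower-bounds the second via a separate structural lemma (\prettyref{lem:lb_exp_xi_xj_gen}/\prettyref{cor:lb_exp_xi_xj_s}, which decomposes $X_i=\|X_i\|^2 I+Z_i$ and shows $\E_{i,j\sim S}\inprod{X_i,X_j}\ge 4\E_{i\sim S}\|X_i\|^2-3$). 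This yields $3k^2(1-\alpha)$. Your argument --- $\inprod{X_i,X_j}\le\|X_j\|^2$ from \prettyref{eq:sdp4}, then sum --- is shorter and actually gives the tighter bound $k^2(1-\alpha)$; so there is no ``residual slack'' to absorb, your constant already beats the paper's~$3$. What the paper's detour buys is \prettyref{lem:lb_exp_xi_xj_gen} itself, which has independent meaning (long vectors in $S$ must be pairwise close), but for proving this particular proposition your route is the more elementary one.
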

Note that 
\begin{equation}
\label{eq:aij}
\sum\limits_{i \in S, j \in V \setminus S} A_{ij} \inprod{X_i, X_j} = p\sum\limits_{i \in S, j \in V \setminus S} \inprod{X_i, X_j} + \sum\limits_{i \in S, j \in V \setminus S} (A_{ij}-p) \inprod{X_i, X_j} \mper
\end{equation}
We will bound the two terms in the R.H.S. of eqn \prettyref{eq:aij} separately.
The first term relies only on the {\em expected graph} and can be bounded using the SDP constraints. 
We use bounds on the eigenvalues of random bipartite graphs to bound the second term. 

\subsubsection*{Bounding the contribution from the {\em expected graph}}
We first prove some properties of the SDP solutions that we will use to bound this term.
The following lemma shows that if the expected value of the squared norm of the vectors corresponding to the set $S$ is ``large'', then their expected pairwise inner product is ``large'' as well. 
\begin{lemma}
	\label{lem:lb_exp_xi_xj_gen}
	Let $\set{\set{Y_i}_{i=1}^{n}, I}$ be any feasible solution of \prettyref{sdp:dks} and $T \subseteq V$ such that, $\E\limits_{i \sim T}\norm{Y_i}^2 \geq 1-\e$ where $0 \leq \e \leq 1$, then
	$\E\limits_{i,j \sim T} \inprod{Y_i, Y_j} \geq 1-4\e$.
\end{lemma}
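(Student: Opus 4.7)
The plan is to leverage the SDP constraints $\inprod{X_i, I} = \norm{X_i}^2$ and $\norm{I} = 1$ by decomposing every vector along and orthogonal to $I$. Explicitly, write $Y_i = \norm{Y_i}^2\, I + Z_i$ with $Z_i \perp I$; the two SDP constraints above immediately force $\norm{Z_i}^2 = \norm{Y_i}^2 - \norm{Y_i}^4 = \norm{Y_i}^2\bigl(1 - \norm{Y_i}^2\bigr)$. Then the pairwise inner products split as
\[\inprod{Y_i, Y_j} \;=\; \norm{Y_i}^2\norm{Y_j}^2 + \inprod{Z_i, Z_j} \;\geq\; \norm{Y_i}^2\norm{Y_j}^2 - \norm{Z_i}\cdot\norm{Z_j},\]
by Cauchy--Schwarz. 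The benefit of this decomposition is that the dominant ``$I$-aligned'' piece becomes a product in $i$ and $j$, which separates cleanly under independent sampling.

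With this in hand, I would take the expectation over independent $i, j \sim T$. The product structure yields
\[\E_{i,j \sim T}\inprod{Y_i, Y_j} \;\geq\; \Bigl(\E_{i \sim T}\norm{Y_i}^2\Bigr)^2 - \Bigl(\E_{i \sim T}\norm{Z_i}\Bigr)^2.\]
The first term is at least $(1-\e)^2$ by hypothesis. For the second, Jensen's inequality gives $\bigl(\E_i \norm{Z_i}\bigr)^2 \leq \E_i \norm{Z_i}^2 = \E_i \norm{Y_i}^2 - \E_i \norm{Y_i}^4$. Since the SDP constraint $\norm{Y_i}^2 \leq 1$ implies $\norm{Y_i}^4 \leq \norm{Y_i}^2$, combining $\E_i \norm{Y_i}^2 \leq 1$ with Jensen's inequality $\E_i \norm{Y_i}^4 \geq \bigl(\E_i \norm{Y_i}^2\bigr)^2 \geq (1-\e)^2$ yields $\E_i \norm{Z_i}^2 \leq 1 - (1-\e)^2 \leq 2\e$. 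Putting these together, $\E_{i,j \sim T}\inprod{Y_i, Y_j} \geq (1-\e)^2 - 2\e \geq 1 - 4\e$.

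The main obstacle is really the initial decomposition --- it is the only non-routine step, and it works only because the SDP provides an anchor vector $I$ of unit norm together with the constraint $\inprod{X_i, I} = \norm{X_i}^2$, which together pin down the component of each $Y_i$ along $I$. Once this is noticed, the remaining bounds are three successive applications of elementary inequalities (Cauchy--Schwarz on the inner product, Jensen on $\norm{Z_i}$, and the trivial $\norm{Y_i}^4 \leq \norm{Y_i}^2$), and the stated slack $1-4\e$ comfortably absorbs a lower-order $\e^2$ term.
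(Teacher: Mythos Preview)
Your argument is correct and essentially identical to the paper's proof: the same orthogonal decomposition $Y_i = \norm{Y_i}^2 I + Z_i$, the same Cauchy--Schwarz bound $\inprod{Y_i,Y_j}\ge \norm{Y_i}^2\norm{Y_j}^2 - \norm{Z_i}\norm{Z_j}$, and then Jensen on both $\E\norm{Z_i}$ and $\E\norm{Y_i}^4$ to arrive at $2(1-\e)^2 - 1 \ge 1-4\e$ (your intermediate step $(1-\e)^2 - 2\e$ is just a slightly weaker simplification of the same quantity). One small quibble: in your closing summary you list ``$\norm{Y_i}^4 \le \norm{Y_i}^2$'' as one of the three inequalities used, but you never actually invoke it---the bound $\E\norm{Z_i}^2 \le 1 - (1-\e)^2$ comes from $\E\norm{Y_i}^2 \le 1$ (constraint~\eqref{eq:sdp5}) together with Jensen on $\norm{Y_i}^4$.
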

\begin{proof}
	We first introduce vectors $Z_i$ and scalars $\alpha_i \in \R$ (for all $i \in [n]$) such that $Y_i = \alpha_i I + Z_i$ and $\inprod{I, Z_i} = 0$. 
	Using \prettyref{eq:sdp7} we get 
	\begin{align*}
		\norm{Y_i}^2 = \inprod{Y_i, I} = \inprod{\alpha_i I + Z_i, I} = \alpha_i \inprod{I,I} + \inprod{I,Z_i} = \alpha_{i}\mper
	\end{align*}
	Next,
	\begin{align}
	\nonumber
	\norm{Y_i}^2 &= \alpha_i^2\norm{I}^2 + \norm{Z_i}^2 = \norm{Y_i}^4 + \norm{Z_i}^2\\
	\label{eq:y_val}
	\implies \norm{Z_i} &= \sqrt{\norm{Y_i}^2 - \norm{Y_i}^4}.
	\end{align}
	For $i, j \in T$,
	\begin{align*}
	\inprod{Y_i, Y_j} &= \inprod{\norm{Y_i}^2I+Z_i,\norm{Y_j}^2I+Z_j}\\
	&= \norm{Y_i}^2\norm{Y_j}^2 \inprod{I, I} + \norm{Y_i}^2\inprod{I, Z_j} + \norm{Y_j}^2\inprod{I, Z_i} + \inprod{Z_i, Z_j}\\ 
	&= \norm{Y_i}^2\norm{Y_j}^2 + \inprod{Z_i, Z_j} \qquad (\because \inprod{I, Z_i} = 0)\\
	&\geq \norm{Y_i}^2\norm{Y_j}^2 - \norm{Z_i}\norm{Z_j} \qquad (\text{since the maximum angle between them can be }\pi)\\
	&= \norm{Y_i}^2\norm{Y_j}^2 - \left(\sqrt{\norm{Y_i}^2 - \norm{Y_i}^4}\right)\left(\sqrt{\norm{Y_j}^2 - \norm{Y_j}^4}\right). \qquad (\text{by eqn } \prettyref{eq:y_val})
	\end{align*}
	Summing both sides $\forall i, j \in T$ and dividing by $\abs{T}^2$,
	\[
	\sum\limits_{i, j \in T} \dfrac{\inprod{Y_i, Y_j}}{\abs{T}^2} \geq \left(\sum\limits_{i \in T} \dfrac{\norm{Y_i}^2}{\abs{T}}\right)\left(\sum\limits_{j \in T} \dfrac{\norm{Y_j}^2}{\abs{T}}\right) - \left(\sum\limits_{i \in T}\dfrac{\sqrt{\norm{Y_i}^2 - \norm{Y_i}^4}}{\abs{T}}\right)\left(\sum\limits_{j \in T} \dfrac{\sqrt{\norm{Y_j}^2 - \norm{Y_j}^4}}{\abs{T}}\right) .
	\]
	\begin{align*}
	\therefore\E_{i, j \sim T} \inprod{Y_i, Y_j} &\geq \left(\E_{i \sim T} \norm{Y_i}^2\right)^2 - \left(\E_{i \sim T} \sqrt{\norm{Y_i}^2 - \norm{Y_i}^4}\right)^2\\
	&\geq \left(\E_{i \sim T} \norm{Y_i}^2\right)^2 - \left(\E_{i \sim T} [\norm{Y_i}^2 - \norm{Y_i}^4]\right)\\ 
	& \qquad\quad \paren{\text{by Jensen's inequality, for a random variable }U \geq 0,-\left(\E\left[\sqrt{U}\right]\right)^2 \geq -\E[U]}\\
	&\geq \left(\E_{i \sim T} \norm{Y_i}^2\right)^2 - \E_{i \sim T} \norm{Y_i}^2 + \left(\E_{i \sim T} \norm{Y_i}^2\right)^2 \qquad \paren{\because \E\norm{Y_i}^4 \geq \paren{\E\norm{Y_i}^2}^2}\\
	&= 2\left(\E_{i \sim T} \norm{Y_i}^2\right)^2 - \E_{i \sim T} \norm{Y_i}^2
	\geq 2\left(1-\e\right)^2 - 1 \qquad \paren{\because \E\norm{Y_i}^2  \leq 1} \\
	& = 1 - 4 \e + 2 \e^2 \geq 1 - 4\e \mper
	\end{align*}
\end{proof}

\begin{corollary}
	\label{cor:lb_exp_xi_xj_s}
	\[ \E_{i,j \sim S}\inprod{X_i,X_j} \geq 4\E_{i \sim S} \norm{X_i}^2 - 3 \mper \]
\end{corollary}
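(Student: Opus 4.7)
The plan is to derive this as a direct specialization of \prettyref{lem:lb_exp_xi_xj_gen}. The lemma is stated for an arbitrary feasible SDP solution $\{\{Y_i\}_{i=1}^n, I\}$ and an arbitrary subset $T \subseteq V$, with an accuracy parameter $\e \in [0,1]$ controlling how close $\E_{i \sim T}\norm{Y_i}^2$ is to $1$. The corollary is just the instantiation at $Y_i = X_i$, $T = S$, and the natural choice $\e = 1 - \E_{i \sim S}\norm{X_i}^2$.

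First I would verify that this choice of $\e$ is admissible, i.e. lies in $[0,1]$. The upper bound $\e \le 1$ follows because $\norm{X_i}^2 \ge 0$ for every $i$, so the average $\E_{i \sim S}\norm{X_i}^2 \ge 0$. The lower bound $\e \ge 0$ follows from the SDP constraint \prettyref{eq:sdp5}, which gives $\norm{X_i}^2 \le 1$ for every $i \in [n]$, hence $\E_{i \sim S}\norm{X_i}^2 \le 1$. With this $\e$, the hypothesis $\E_{i \sim S}\norm{X_i}^2 \ge 1 - \e$ of \prettyref{lem:lb_exp_xi_xj_gen} holds with equality by construction.

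Applying \prettyref{lem:lb_exp_xi_xj_gen} then yields $\E_{i,j \sim S}\inprod{X_i,X_j} \ge 1 - 4\e = 1 - 4\bigl(1 - \E_{i \sim S}\norm{X_i}^2\bigr) = 4\,\E_{i \sim S}\norm{X_i}^2 - 3,$ which is exactly the inequality claimed. There is no real obstacle here; the corollary is essentially a bookkeeping step that repackages the lemma in the form that will be most convenient later when we need to lower-bound the SDP mass contributed by the pairs inside $S$ in terms of the mean squared length of the vectors on $S$.
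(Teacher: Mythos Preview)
Your proposal is correct and matches the paper's own proof essentially verbatim: both apply \prettyref{lem:lb_exp_xi_xj_gen} to the set $S$ with $\e = 1 - \E_{i \sim S}\norm{X_i}^2$ and simplify $1-4\e$ to $4\E_{i \sim S}\norm{X_i}^2 - 3$. Your explicit check that $\e \in [0,1]$ (via nonnegativity of $\norm{X_i}^2$ and constraint \prettyref{eq:sdp5}) is a small bonus the paper leaves implicit.
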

\begin{proof}
	Using \prettyref{lem:lb_exp_xi_xj_gen} on the set $S$ and with $\e = 1-\E\limits_{i \sim S} \norm{X_i}^2$, we get the lower bound $1-4\paren{1-\E\limits_{i \sim S} \norm{X_i}^2} = 4\E\limits_{i \sim S} \norm{X_i}^2 - 3$.
\end{proof}

We are now ready to bound the first term in eqn \prettyref{eq:aij}.
\begin{lemma}
	\label{lem:ub_xi_xj_s_sbar}
	\[ \sum\limits_{i \in S, j \in V \setminus S} \inprod{X_i,X_j} \leq 3k^2\paren{1-\E_{i \sim S} \norm{X_i}^2} \mper \]
\end{lemma}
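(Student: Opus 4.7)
The plan is to express the quantity $\sum_{i \in S, j \in V \setminus S} \inprod{X_i, X_j}$ as the difference between a sum over $S \times V$ and the sum over $S \times S$, bound the first from above using the SDP constraint~\prettyref{eq:sdp3}, and bound the second from below using \prettyref{cor:lb_exp_xi_xj_s}.

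Concretely, I would first write
\[
\sum_{i \in S, j \in V \setminus S} \inprod{X_i, X_j} \;=\; \sum_{i \in S, j \in V} \inprod{X_i, X_j} \;-\; \sum_{i, j \in S} \inprod{X_i, X_j} \mper
\]
For the first term, summing constraint \prettyref{eq:sdp3} over $i \in S$ gives
\[
\sum_{i \in S, j \in V} \inprod{X_i, X_j} \;\leq\; k \sum_{i \in S} \norm{X_i}^2 \;=\; k^2 \, \E_{i \sim S} \norm{X_i}^2 \mper
\]
For the second term, \prettyref{cor:lb_exp_xi_xj_s} (with $\abs{S} = k$) gives
\[
\sum_{i, j \in S} \inprod{X_i, X_j} \;=\; k^2 \, \E_{i, j \sim S} \inprod{X_i, X_j} \;\geq\; k^2 \paren{4 \, \E_{i \sim S} \norm{X_i}^2 - 3} \mper
\]
Subtracting, the $\E_{i \sim S} \norm{X_i}^2$ terms combine as $k^2 \E_{i \sim S} \norm{X_i}^2 - 4 k^2 \E_{i \sim S} \norm{X_i}^2 = -3 k^2 \E_{i \sim S} \norm{X_i}^2$, yielding the claimed bound $3 k^2 \paren{1 - \E_{i \sim S} \norm{X_i}^2}$.

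There is no real obstacle here: the lemma is a direct combination of the per-vertex degree constraint \prettyref{eq:sdp3} of the SDP with the already-proved inner product lower bound for $S \times S$, and the sizes of the two sets ($\abs{S} = k$, and summing \prettyref{eq:sdp3} only over $i \in S$ captures exactly the $S$-rows of the Gram matrix). The only thing one needs to be a little careful about is the direction of the inequality when subtracting the lower bound on the $S \times S$ contribution from the upper bound on the $S \times V$ contribution; that is why it is essential that \prettyref{cor:lb_exp_xi_xj_s} is a \emph{lower} bound, since we are subtracting it.
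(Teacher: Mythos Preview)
Your proposal is correct and follows essentially the same approach as the paper: decompose into $S\times V$ minus $S\times S$, bound the first via \prettyref{eq:sdp3} and the second via \prettyref{cor:lb_exp_xi_xj_s}, then combine.
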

\begin{proof}
\begin{align*}
\sum\limits_{i \in S, j \in V \setminus S} \inprod{X_i,X_j} 
& = \sum\limits_{i \in S, j \in V} \inprod{X_i,X_j}  - \sum\limits_{i \in S, j \in S} \inprod{X_i,X_j} \\ 
& \leq k\sum\limits_{i \in S} \norm{X_i}^2 - \sum\limits_{i \in S, j \in S} \inprod{X_i,X_j} \qquad (\text{by eqn \prettyref{eq:sdp3}})\\ 
&= k^2\paren{\E_{i \sim S}\norm{X_i}^2} - k^2\paren{\E_{i, j \sim S} \inprod{X_i,X_j}}\\
&\leq k^2\paren{\E_{i \sim S}\norm{X_i}^2} - k^2\paren{4\E_{i \sim S} \norm{X_i}^2 - 3} \qquad (\text{by \prettyref{cor:lb_exp_xi_xj_s}})\\
&= 3k^2\paren{1-\E_{i \sim S}\norm{X_i}^2} \mper
\end{align*}
\end{proof}

\subsubsection*{Bounding the deviation from the {\em expected graph}}
We now prove the following lemmas which we will use to bound the second term in \prettyref{eq:aij}.
Let $B$ be the $n \times n$ matrix defined as follows.
\[ B_{ij} \defeq \begin{cases}  A_{ij} - p & i \in S, j \in V \setminus S \text{ or } i\in V \setminus S, j \in S \\
		0 & \textrm{otherwise} \end{cases} \mper \]
\begin{lemma}
\label{lem:ub_b_ij_xi_xj_s_sbar}
\[ \sum\limits_{i, j \in V} B_{ij}\inprod{X_i,X_j} \leq 
	2k\norm{B}\sqrt{\paren{\E_{i \sim S} \norm{X_i}^2} \paren{1-\E_{i \sim S} \norm{X_i}^2}} \mper \]
\end{lemma}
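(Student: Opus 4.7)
The idea is to view the quadratic form in the vectors as a sum of quadratic forms in scalars indexed by the coordinates of the vectors, and then exploit both the spectral norm of $B$ and the bipartite support pattern of $B$ together with the SDP normalization $\sum_i \norm{X_i}^2 = k$.

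Fix an ambient dimension and write each $X_i \in \R^m$. For each coordinate $a \in [m]$, let $x_a \in \R^n$ be the vector whose $i$-th entry is $(X_i)_a$. Then swapping the order of summation gives
\[
\sum_{i,j \in V} B_{ij}\inprod{X_i,X_j} \;=\; \sum_{a=1}^{m} x_a^\top B\, x_a \mper
\]
Because $B$ is supported only on the $S \times (V\setminus S)$ and $(V\setminus S) \times S$ blocks, it has the form $B=\begin{pmatrix}0 & B'\\ (B')^\top & 0\end{pmatrix}$ for some rectangular matrix $B'$ indexed by $S\times (V\setminus S)$. Splitting each coordinate vector $x_a$ into its $S$-part $u_a$ and its $(V\setminus S)$-part $v_a$, we get $x_a^\top B x_a = 2\, u_a^\top B' v_a$, hence
\[
\sum_{i,j \in V} B_{ij}\inprod{X_i,X_j} \;=\; 2\sum_{a=1}^{m} u_a^\top B' v_a \mper
\]

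Next I would bound each term $u_a^\top B' v_a$ by $\|B'\|\cdot \|u_a\|\cdot \|v_a\|$ using the definition of the operator norm. Since the non-zero singular values of $B'$ coincide with the positive eigenvalues of the symmetric block matrix $B$, we have $\|B'\|=\|B\|$. Summing over $a$ and applying the Cauchy--Schwarz inequality yields
\[
\Abs{\sum_{a=1}^m u_a^\top B' v_a} \;\leq\; \|B\|\sum_{a=1}^m \|u_a\|\,\|v_a\| \;\leq\; \|B\|\sqrt{\sum_{a=1}^m \|u_a\|^2}\sqrt{\sum_{a=1}^m \|v_a\|^2}\mper
\]

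Finally, interchanging sums gives $\sum_a \|u_a\|^2 = \sum_{i \in S}\|X_i\|^2 = k\,\E_{i \sim S}\|X_i\|^2$, and by \prettyref{eq:sdp2}, $\sum_a \|v_a\|^2 = \sum_{i\in V\setminus S}\|X_i\|^2 = k - k\,\E_{i\sim S}\|X_i\|^2$. Substituting back and multiplying by the factor of $2$ recovers exactly the claimed bound
\[
\sum_{i,j\in V} B_{ij}\inprod{X_i,X_j} \;\leq\; 2k\|B\|\sqrt{\paren{\E_{i\sim S}\|X_i\|^2}\paren{1-\E_{i\sim S}\|X_i\|^2}}\mper
\]

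The only conceptual (rather than mechanical) step is recognizing that the bipartite block structure of $B$ is what lets us split $\|x_a\|^2$ into the product $\|u_a\|\|v_a\|$, thereby gaining the factor $\sqrt{(\E\|X_i\|^2)(1-\E\|X_i\|^2)}$ in place of the crude bound $\E\|X_i\|^2 \leq 1$ that the naive estimate $|x_a^\top B x_a|\leq \|B\|\|x_a\|^2$ would produce. Everything else is Cauchy--Schwarz plus the SDP normalization constraint.
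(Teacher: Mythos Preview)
Your proof is correct and is essentially the same argument as the paper's: both decompose the bilinear form coordinate-by-coordinate, exploit the bipartite block structure of $B$ to separate the $S$ and $V\setminus S$ contributions, apply the operator-norm bound together with Cauchy--Schwarz across coordinates, and finish with the SDP normalization \prettyref{eq:sdp2}. The only cosmetic difference is that the paper pads the $S$- and $(V\setminus S)$-parts with zeros and works with $B$ directly, whereas you extract the off-diagonal block $B'$ and note $\|B'\|=\|B\|$.
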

\begin{proof}
Recall that w.l.o.g., we can assume that the SDP vectors to be of dimension $n+1$. We define two matrices $Y, Z$ each of size $(n+1) \times n$. For all $i \in S$, the vector $X_i$ is placed at the $i^{th}$ column of the matrix $Y$ while the rest of the entries of $Y$ are zero. Similarly for all $j \in V \setminus S$, the vector $X_j$ is placed at the $j^{th}$ column of the matrix $Z$ and rest of the entries of $Z$ are zero. We use $Y_i$ to denote the $i^{th}$ column vector of the matrix $Y$. Similarly, $Y^{T}_{j}$ denotes the $j^{th}$ column vector of the matrix $Y^T$.
\begin{align*}
\sum\limits_{i, j \in V} B_{ij}\inprod{X_{i},X_{j}} 
	&= \sum\limits_{i, j \in V}\sum\limits_{l=1}^{n+1} B_{ij}X_i(l)X_j(l) = \sum\limits_{l=1}^{n+1}\sum\limits_{i, j \in V} B_{ij}X_i(l)X_j(l) \\
	&= 2\sum\limits_{l=1}^{n+1} \paren{Y^{T}_{l}}^T B \paren{Z^{T}_{l}}\\ &\leq 2\sum\limits_{l=1}^{n+1} \norm{Y^{T}_{l}}\norm{Z^{T}_{l}}\norm{B} \\
	\nonumber
	&\leq 2\norm{B}\sqrt{\sum\limits_{l=1}^{n+1} \norm{Y^{T}_{l}}^2}\sqrt{\sum\limits_{l=1}^{n+1} \norm{Z^{T}_{l}}^2} \qquad (\text{by Cauchy-Schwarz inequality})\\
	&= 2\norm{B}\sqrt{\sum\limits_{i \in S} \norm{X_i}^2}\sqrt{\sum\limits_{i \in V \setminus S} \norm{X_i}^2} \qquad (\text{rewriting sum of entries using columns})\\
	\nonumber
	&= 2k\norm{B}\sqrt{\paren{\E_{i \sim S} \norm{X_i}^2}\paren{1-\E_{i \sim S} \norm{X_i}^2}} \qquad (\text{by eqn \prettyref{eq:sdp2}}) .
	\end{align*}
\end{proof}

Now, we use the following folklore result to bound $\norm{B}$.
\begin{theorem}[\cite{e86c6c373bf9484a861530abef1cb4a5}, Lemma 30]
	\label{thm:ub_spectral_norm_gen}
	Let $M$ be a symmetric matrix of size $n \times n$ with zero diagonals and independent entries such that $M_{ij} = M_{ji} \sim \Bern\paren{p_{ij}}$ for all $i<j$ with $p_{ij} \in [0,1]$. Assume $p_{ij}\paren{1-p_{ij}} \leq r$ for all $i < j$ and $nr = \Omega\paren{\log n}$. Then, with high probability (over the randomness of matrix $M$),
	\[ \norm{M - \E\brac{M}} \leq {\bigO\paren{1}}\sqrt{nr} \mper\]
\end{theorem}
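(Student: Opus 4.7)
The plan is to view this as a matrix concentration statement and apply a tail bound for sums of independent symmetric random matrices. Write $N \defeq M - \E[M]$ as a sum of independent mean-zero symmetric matrices:
\[
N \;=\; \sum_{i<j} \xi_{ij}\paren{e_i e_j^T + e_j e_i^T},\qquad \xi_{ij} \defeq M_{ij} - p_{ij}\in[-1,1],
\]
where the $\xi_{ij}$ are independent with $\E[\xi_{ij}^2] = p_{ij}(1-p_{ij}) \leq r$. A short computation shows that $\sum_{i<j} \E\brac{\xi_{ij}^2\paren{e_i e_j^T + e_j e_i^T}^2}$ is diagonal, with $i$-th diagonal entry $\sum_{j\neq i} p_{ij}(1-p_{ij}) \leq (n-1)r$, so the variance parameter satisfies $\sigma^2 \defeq \norm{\sum_{i<j}\E[\xi_{ij}^2(e_ie_j^T+e_je_i^T)^2]} \leq nr$. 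Each summand also has operator norm at most $1$.

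The first instinct is to plug this into the standard matrix Bernstein inequality, which yields $\norm{N} = \bigo{\sqrt{nr\log n} + \log n}$ with high probability. Unfortunately this is off by a $\sqrt{\log n}$ factor from the claimed bound $\bigo{\sqrt{nr}}$, and the hypothesis $nr = \Omega(\log n)$ is exactly what is needed to absorb this factor. To actually get rid of the $\sqrt{\log n}$, I would use the sharper noncommutative Khintchine / Bandeira--van Handel bound for random symmetric matrices with heterogeneous entries, which gives
\[
\E\norm{N} \;\leq\; C_1\sigma + C_2\sigma_*\sqrt{\log n},
\]
where $\sigma_* = \max_{ij}\norm{\xi_{ij}}_\infty \leq 1$. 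Under $nr = \Omega(\log n)$ we have $\sigma \geq c\sigma_*\sqrt{\log n}$, so both terms are $\bigo{\sqrt{nr}}$ and hence $\E\norm{N} = \bigo{\sqrt{nr}}$.

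To upgrade expectation to a high-probability bound, I would invoke concentration of the norm around its mean: $\norm{N}$ is a $1$-Lipschitz convex function of the independent bounded entries $\set{\xi_{ij}}$, so by Talagrand's inequality (or by a simple bounded-differences argument applied to the symmetrized entries),
\[
\Prob{\,\big|\norm{N}-\E\norm{N}\big| \geq t\,} \;\leq\; 2\exp\paren{-c t^2}.
\]
Choosing $t = \Theta(\sqrt{\log n})$ and combining with the hypothesis $nr = \Omega(\log n)$ yields the deviation $t = \bigo{\sqrt{nr}}$, giving $\norm{N} = \bigo{\sqrt{nr}}$ with probability $1 - n^{-\Omega(1)}$.

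The main obstacle is precisely the suboptimal logarithmic factor from naive matrix Bernstein, and the fix is to leverage the assumption $nr = \Omega(\log n)$ together with a sharper bound (Bandeira--van Handel, or equivalently an explicit trace-moment calculation \`a la F\"uredi--Koml\'os in which one bounds $\E\,\mathrm{tr}(N^{2k})$ for $k = \Theta(\log n)$ and invokes $\norm{N} \leq \paren{\E\,\mathrm{tr}(N^{2k})}^{1/(2k)}$). Either route exploits the cancellation between non-backtracking and backtracking walks in the combinatorics of matrix powers to save the $\sqrt{\log n}$ factor.
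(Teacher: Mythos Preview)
The paper does not give its own proof of this statement: it is quoted as a folklore result and cited from an external source (Lemma~30 of \cite{e86c6c373bf9484a861530abef1cb4a5}), and is used only as a black box to derive \prettyref{cor:ub_spectral_norm_b}. So there is nothing in the paper to compare your argument against.

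That said, your proposal is a correct and standard route to the claimed bound. The key observations are all sound: the variance proxy $\sigma^2 \leq nr$ from the diagonal computation, the fact that plain matrix Bernstein loses a $\sqrt{\log n}$ factor, and the use of the sharper Bandeira--van Handel (or equivalently a F\"uredi--Koml\'os trace-moment) bound to get $\E\norm{N} = \bigo{\sqrt{nr}}$ once $nr = \Omega(\log n)$. The upgrade to high probability via Talagrand is also fine; just be careful that the relevant Lipschitz constant is with respect to the Euclidean norm on the vector of entries $\set{\xi_{ij}}_{i<j}$, and indeed $\big|\norm{N} - \norm{N'}\big| \leq \norm{N-N'}_{\mathrm{op}} \leq \norm{N-N'}_F = \sqrt{2}\,\norm{\xi - \xi'}_2$, so the function is $\sqrt{2}$-Lipschitz and convex, which is exactly what Talagrand's inequality for bounded independent coordinates requires.
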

\begin{corollary}
	\label{cor:ub_spectral_norm_b}
	There exists universal constants $\probabilityconstant, \matrixconstant \in \mathbb{R}^{+}$ such that if $p \in \left[\dfrac{\probabilityconstant \log n}{n}, 1\right)$, then
	\[\norm{B} \leq \matrixconstant\sqrt{np}\] with high probability (over the choice of the graph).
\end{corollary}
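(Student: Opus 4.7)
The plan is to exhibit $B$ as the deviation $M - \E[M]$ of a random matrix $M$ that fits the hypotheses of \prettyref{thm:ub_spectral_norm_gen}, and then simply invoke that theorem. Concretely, I would define $M \in \R^{n \times n}$ by $M_{ij} \defeq A_{ij}$ whenever $(i,j) \in (S \times (V\setminus S)) \cup ((V\setminus S) \times S)$ and $M_{ij} \defeq 0$ otherwise (in particular on the diagonal and on the two ``within-side'' blocks). Because the bipartite edges of $G[S, V\setminus S]$ are drawn i.i.d.\ as $\Bern(p)$ in \prettyref{step:one}, the matrix $M$ is symmetric with zero diagonal, and its off-diagonal entries $\{M_{ij}\}_{i<j}$ are mutually independent with $M_{ij} \sim \Bern(p_{ij})$ where $p_{ij} = p$ on the bipartite block and $p_{ij} = 0$ elsewhere.

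By construction $\E[M]_{ij} = p$ exactly on the bipartite block and zero elsewhere, so $M - \E[M] = B$ by the definition of $B$. Every $p_{ij}$ satisfies $p_{ij}(1 - p_{ij}) \leq p(1-p) \leq p$, so we may take $r = p$ in \prettyref{thm:ub_spectral_norm_gen}. The hypothesis $p \geq \probabilityconstant \log n / n$ of the corollary ensures $nr = np \geq \probabilityconstant \log n = \Omega(\log n)$ for a sufficiently large universal constant $\probabilityconstant$, which matches the $nr = \Omega(\log n)$ requirement of \prettyref{thm:ub_spectral_norm_gen}. Applying the theorem to $M$ then yields, with high probability over the edges in \prettyref{step:one},
\[
\norm{B} \;=\; \norm{M - \E[M]} \;\leq\; \matrixconstant\sqrt{nr} \;=\; \matrixconstant\sqrt{np}\mcom
\]
where $\matrixconstant$ is the universal constant absorbing the $\bigO(1)$ from \prettyref{thm:ub_spectral_norm_gen}.

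There is no genuine obstacle: the argument is essentially the identification of the right auxiliary matrix. The only points that warrant explicit checking are (i) that entries with $p_{ij} = 0$ (the degenerate $\Bern(0)$ deterministic zeros on the $S \times S$ and $(V\setminus S) \times (V\setminus S)$ blocks) are compatible with the hypothesis $p_{ij} \in [0,1]$ of \prettyref{thm:ub_spectral_norm_gen}, and (ii) that the randomness being invoked is that of \prettyref{step:one} only, so the analysis is indeed on the graph prior to the monotone deletions in \prettyref{step:four}, as justified in the opening remarks of \prettyref{sec:2}.
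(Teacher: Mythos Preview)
Your proposal is correct and matches the paper's proof essentially line for line: the paper defines the same auxiliary matrix (calling it $H$ instead of $M$), observes $B = H - \E[H]$, takes $r = p$, checks $np = \Omega(\log n)$, and invokes \prettyref{thm:ub_spectral_norm_gen}. Your additional remarks about the degenerate $\Bern(0)$ entries and about the randomness coming only from \prettyref{step:one} are slightly more explicit than the paper but change nothing substantive.
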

\begin{proof}
	Let $H$ be the adjacency matrix (symmetric) of the bipartite graph $(S, V \setminus S)$, i.e., without any edges inside $S$ or $V \setminus S$. Therefore, for $i, j \in S$ or $i, j \in V \setminus S$, $H_{ij}$ is identically $0$. Fix an $i \in S$ and $j \in V \setminus S$. We know that all such $H_{ij}$'s are independent because of the assumption of random edges being added independently. By definition, $H_{ij}$ is sampled from the Bernoulli distribution with parameter $p$ or $H_{ij} \sim \Bern\paren{p}$. For the parameter range $p \in \left[\ffrac{\probabilityconstant \log n}{n}, 1\right)$, we have $p(1-p) \leq p$ and $np = \Omega(\log n)$. We now apply \prettyref{thm:ub_spectral_norm_gen} to matrix $H$ with the parameter $r = p$ to get,
	\begin{equation}
		\label{eq:norm_h}
		\norm{H - \E\brac{H}} \leq {\bigO\paren{1}}\sqrt{np} = \matrixconstant\sqrt{np} \mper
	\end{equation}
	where $\matrixconstant \in \mathbb{R}^{+}$ is the constant from \prettyref{thm:ub_spectral_norm_gen}. By definition, we have $B = H - \E\brac{H}$. Thus by eqn \prettyref{eq:norm_h},
	\[\norm{B} = \norm{H - \E\brac{H}} \leq \matrixconstant\sqrt{np}\mper\]
\end{proof}

\begin{remark}
	Note that, \prettyref{cor:ub_spectral_norm_b} holds with high probability when $p = \Omega\paren{\log n/n}$. In the rest of the paper, we work in the range of parameters where this lower bound on $p$ is satisfied. However, we do restate it when explicitly using this bound.
\end{remark}

\begin{corollary}
\label{cor:ub_b_ij_xi_xj_s_sbar_expanded}
W.h.p. (over the choice of the graph),
\[ \sum\limits_{i, j \in V} B_{ij}\inprod{X_{i},X_{j}} \leq 
	2\matrixconstant k\sqrt{np}\sqrt{\paren{\E_{i \sim S} \norm{X_i}^2}\paren{1-\E_{i \sim S} \norm{X_i}^2}}
	\mper \]
\end{corollary}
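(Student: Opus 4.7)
The plan is to chain together the two preceding results with essentially no additional work. First I would invoke \prettyref{lem:ub_b_ij_xi_xj_s_sbar}, which deterministically bounds
\[
\sum_{i,j \in V} B_{ij}\inprod{X_i, X_j} \leq 2k\norm{B}\sqrt{\paren{\E_{i\sim S}\norm{X_i}^2}\paren{1-\E_{i\sim S}\norm{X_i}^2}} \mper
\]
Crucially, this inequality is valid for \emph{any} matrix $B$ and any feasible SDP vectors; the only quantity on the right that depends on the randomness of the instance is the spectral norm $\norm{B}$.

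Next I would invoke \prettyref{cor:ub_spectral_norm_b}, which states that under the assumption $p \in \brac{\probabilityconstant \log n / n,\, 1)$ the spectral norm of $B$ satisfies $\norm{B} \leq \matrixconstant \sqrt{np}$ with high probability over the random edges between $S$ and $V \setminus S$. Substituting this bound into the inequality above immediately yields
\[
\sum_{i,j \in V} B_{ij}\inprod{X_i,X_j} \leq 2\matrixconstant k \sqrt{np}\, \sqrt{\paren{\E_{i\sim S}\norm{X_i}^2}\paren{1-\E_{i\sim S}\norm{X_i}^2}} \mper
\]
The ``high probability'' qualifier transfers verbatim from \prettyref{cor:ub_spectral_norm_b}, since that is the sole source of randomness used.

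There is no real obstacle here; all the content of the corollary lives in \prettyref{lem:ub_b_ij_xi_xj_s_sbar} and \prettyref{cor:ub_spectral_norm_b}. The one minor consistency check is to note that the standing hypothesis $\ffrac{\delta d}{k} = p \in \brac{\probabilityconstant \log n / n,\, 1)$ assumed in \prettyref{thm:main_exp} is precisely the hypothesis required to apply \prettyref{cor:ub_spectral_norm_b}, so no auxiliary condition needs to be imposed to invoke the spectral bound.
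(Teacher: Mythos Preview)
Your proposal is correct and matches the paper's own proof essentially verbatim: the paper also simply states that the result follows by combining \prettyref{lem:ub_b_ij_xi_xj_s_sbar} and \prettyref{cor:ub_spectral_norm_b}. Your additional remarks about where the randomness enters and why the standing hypothesis on $p$ suffices are accurate and only make the argument more explicit.
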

\begin{proof}
	We get the desired result by combining \prettyref{lem:ub_b_ij_xi_xj_s_sbar} and \prettyref{cor:ub_spectral_norm_b}.
\end{proof}

We are now ready to prove \prettyref{prop:upper_bound_v_s}.
\begin{proof}[Proof of \prettyref{prop:upper_bound_v_s}]
The proof follows almost immediately by combining \prettyref{lem:ub_xi_xj_s_sbar} and \prettyref{cor:ub_b_ij_xi_xj_s_sbar_expanded}
\begin{align*}
\sum\limits_{i \in S, j \in V \setminus S} A_{ij}\inprod{X_i,X_j} 
	&= p \sum\limits_{i \in S, j \in V \setminus S} \inprod{X_i,X_j} + \sum\limits_{i \in S, j \in V \setminus S} (A_{ij} - p)\inprod{X_i,X_j}\\
	&\leq p\paren{3k^2\paren{1-\E_{i \sim S}\norm{X_i}^2}} + \dfrac{1}{2}\paren{\sum\limits_{i, j \in V} B_{ij}\inprod{X_i,X_j}} \qquad \text{(by symmetry)}\\
	&\leq 3pk^2\paren{1-\E_{i \sim S}\norm{X_i}^2} + \matrixconstant k\sqrt{np}\sqrt{\paren{\E_{i \sim S} \norm{X_i}^2}\paren{1-\E_{i \sim S} \norm{X_i}^2}} \mper
	\end{align*}
\end{proof}

\subsection{Edges in $V \setminus S$}
\label{sec:2.2}

We recall, the subgraph $G\brac{V \setminus S}$ is a $(d', \lambda)-$expander in the \dksparams~ model. We show the following upper bound on the SDP mass contribution by the vectors in $V \setminus S$. 
\begin{proposition}
	\label{prop:upper_bound_v_s_expander}
	\[ \sum\limits_{i, j \in V \setminus S} A_{ij}\inprod{X_i, X_j} \leq \paren{\lambda k + \dfrac{d'k^2}{n-k}} \left(1 - \E_{i \sim S} \norm{X_i}^2\right) \mper \]
\end{proposition}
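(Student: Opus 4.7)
The plan is to decompose the adjacency matrix of $G\brac{V \setminus S}$ into a ``uniform'' part and a spectrally small part, and bound the two contributions separately using the SDP constraints and the expander condition, respectively. As in the paragraph preceding \prettyref{prop:upper_bound_v_s}, the monotone adversary's deletions can only decrease $\sum_{i,j \in V \setminus S} A_{ij} \inprod{X_i, X_j}$ because of the non-negativity constraint \prettyref{eq:sdp4}, so it suffices to prove the bound against the pre-adversary graph, in which $G\brac{V \setminus S}$ is a genuine $(d',\lambda)$-expander.

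Let $A'$ denote the $(n-k) \times (n-k)$ principal submatrix of $A$ indexed by $V \setminus S$, and let $J$ denote the all-ones matrix of that size. By \prettyref{def:expander}, the top eigenvector of $A'$ is $\mathbf{1}/\sqrt{n-k}$ with eigenvalue $d'$, and all other eigenvalues are bounded in absolute value by $\lambda$. Hence we may write
\[
A' = \frac{d'}{n-k} J + M, \qquad \text{where } \norm{M} \leq \lambda \mper
\]

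For the uniform part, fix $i \in V \setminus S$. By \prettyref{eq:sdp4} and \prettyref{eq:sdp3},
$\sum_{j \in V \setminus S} \inprod{X_i, X_j} \leq \sum_{j \in V} \inprod{X_i, X_j} \leq k \norm{X_i}^2$.
Summing over $i \in V \setminus S$ and using \prettyref{eq:sdp2} to get $\sum_{i \in V \setminus S} \norm{X_i}^2 = k - \sum_{i \in S}\norm{X_i}^2 = k\paren{1 - \E_{i \sim S}\norm{X_i}^2}$, we obtain
\[
\frac{d'}{n-k} \sum_{i,j \in V \setminus S} \inprod{X_i, X_j} \leq \frac{d' k^2}{n-k}\paren{1 - \E_{i \sim S}\norm{X_i}^2} \mper
\]

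For the spectral-deviation part, mimic the coordinate-wise Gram argument of \prettyref{lem:ub_b_ij_xi_xj_s_sbar}. For each coordinate $\ell$, define $\tilde X^{(\ell)} \in \R^{n-k}$ by $\tilde X^{(\ell)}_i = X_i(\ell)$ for $i \in V \setminus S$. Then
\[
\sum_{i,j \in V \setminus S} M_{ij} \inprod{X_i, X_j} = \sum_{\ell} \paren{\tilde X^{(\ell)}}^{T} M \, \tilde X^{(\ell)} \leq \norm{M} \sum_{\ell} \norm{\tilde X^{(\ell)}}^2 = \norm{M} \sum_{i \in V \setminus S} \norm{X_i}^2 \leq \lambda k \paren{1 - \E_{i \sim S}\norm{X_i}^2} \mper
\]
Adding the two contributions gives the claimed bound. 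The only non-routine ingredient is the expander decomposition $A' = \tfrac{d'}{n-k} J + M$ with $\norm{M} \leq \lambda$; once this is in hand, both terms follow from the SDP constraints and a spectral-norm bound analogous to the one used in \prettyref{lem:ub_b_ij_xi_xj_s_sbar}.
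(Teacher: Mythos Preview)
Your proposal is correct and is essentially the paper's own argument: the main text just invokes \prettyref{lem:sdp-expander} (the expander analogue of the Bhaskara et al.\ bound) and scales by $1-\E_{i\sim S}\norm{X_i}^2$, while the alternate proof in \prettyref{app:a} does exactly your expander-mixing decomposition---writing $A_{V\setminus S}$ in its eigenbasis, bounding the rank-one (all-ones) part via the SDP constraints \prettyref{eq:sdp3}, \prettyref{eq:sdp4}, and the orthogonal remainder coordinate-wise by its spectral norm as in \prettyref{lem:ub_b_ij_xi_xj_s_sbar}. The only cosmetic difference is that \prettyref{app:a} keeps the slightly sharper constant $\tfrac{d'-\lambda}{n-k}$ on the uniform part, whereas your $A'=\tfrac{d'}{n-k}J+M$ split gives $\tfrac{d'}{n-k}$, which is precisely the constant stated in \prettyref{prop:upper_bound_v_s_expander}.
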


To prove the above proposition, we use the following results from the Bhaskara \etal \cite{DBLP:conf/stoc/BhaskaraCCFV10} paper.

\begin{lemma}[\cite{DBLP:conf/stoc/BhaskaraCCFV10}, Theorem 6.1]
	\label{lem:sdp-random-graph}
	For a $G(n, p)$ (\Erdos-\Renyi model) graph, the value of the SDP (\prettyref{sdp:dks}) is at most $k^2p + \bigo{k\sqrt{np}}$ with high probability when $p = \Omega\paren{\ffrac{\log n}{n}}$.
\end{lemma}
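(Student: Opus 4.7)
The plan is to decompose the adjacency matrix $A$ of $G\paren{n,p}$ as $A = \E\brac{A} + M$, where $\E\brac{A}$ has all off-diagonal entries equal to $p$ (and zero diagonal) and $M \defeq A - \E\brac{A}$ is the zero-mean noise matrix. By linearity of the SDP objective in $A$, the SDP value splits as
\[
\tfrac{1}{2}\sum_{i \neq j} A_{ij}\inprod{X_i,X_j} \;=\; \tfrac{p}{2}\sum_{i \neq j}\inprod{X_i,X_j} \;+\; \tfrac{1}{2}\sum_{i,j} M_{ij}\inprod{X_i,X_j},
\]
a ``mean part'' plus a ``noise part'' which I would bound separately.

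For the mean part I would exploit two SDP constraints: \prettyref{eq:sdp2} fixes $\sum_i \norm{X_i}^2 = k$, and summing \prettyref{eq:sdp3} over $i \in [n]$ gives $\Norm{\sum_i X_i}^2 = \sum_{i,j}\inprod{X_i,X_j} \leq k\sum_i\norm{X_i}^2 = k^2$. Since the mean part equals $\tfrac{p}{2}\paren{\Norm{\sum_i X_i}^2 - \sum_i\norm{X_i}^2}$, it is at most $\tfrac{pk^2}{2} \leq k^2 p$, which is the leading term of the target bound.

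For the noise part, let $X$ be the matrix whose $i$-th column is the SDP vector $X_i$; a direct computation gives $\tfrac{1}{2}\sum_{i,j} M_{ij}\inprod{X_i,X_j} = \tfrac{1}{2}\mathrm{tr}\paren{MX^{T}X}$. Since $X^{T}X$ is positive semidefinite (it is a Gram matrix) and $M$ is symmetric, the standard trace inequality for a symmetric matrix against a PSD matrix yields $\mathrm{tr}\paren{MX^{T}X} \leq \norm{M}\cdot\mathrm{tr}\paren{X^{T}X} = k\norm{M}$. The only remaining step is to upper bound $\norm{M}$: because $M$ is a symmetric zero-diagonal matrix with independent off-diagonal centered Bernoulli entries of variance $p\paren{1-p} \leq p$, \prettyref{thm:ub_spectral_norm_gen} applies with $r = p$ (the hypothesis $np = \Omega\paren{\log n}$ is exactly our assumption on $p$) and yields $\norm{M} = \bigo{\sqrt{np}}$ with high probability. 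Thus the noise part is $\bigo{k\sqrt{np}}$.

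Adding the two bounds proves the claim. I do not anticipate a real obstacle: once \prettyref{thm:ub_spectral_norm_gen} is invoked, the only conceptual subtleties are recognising that constraint \prettyref{eq:sdp3}, summed over $i$, supplies the sharp $\Norm{\sum_i X_i}^2 \leq k^2$ bound needed for the mean part, and that the noise contribution can be packaged as a trace inner product controllable by the spectral norm of $M$. An alternative primal approach via a dual certificate is possible but unnecessary here, since the primal argument above is very short.
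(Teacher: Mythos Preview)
Your argument is correct. The decomposition $A=\E[A]+M$ together with the bounds $\Norm{\sum_i X_i}^2\leq k^2$ (from summing \prettyref{eq:sdp3}) and $\mathrm{tr}(MX^TX)\leq \norm{M}\,\mathrm{tr}(X^TX)=k\norm{M}$ indeed yields the claim once \prettyref{thm:ub_spectral_norm_gen} supplies $\norm{M}=\bigo{\sqrt{np}}$ with high probability.

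Note, however, that the paper does not give its own proof of this lemma: it is quoted from \cite{DBLP:conf/stoc/BhaskaraCCFV10}, whose proof proceeds by constructing a feasible dual solution that certifies the bound on the primal SDP value. Your approach is therefore genuinely different: it is a short primal argument that never touches the dual, trading the explicit certificate of \cite{DBLP:conf/stoc/BhaskaraCCFV10} for a direct spectral estimate. The dual route has the advantage of producing a witness and of being the form that transfers most cleanly to the log-density framework of \cite{DBLP:conf/stoc/BhaskaraCCFV10}; your primal route is more elementary and reuses exactly the tools already in this paper (\prettyref{thm:ub_spectral_norm_gen} and the SDP constraints). In fact, the paper's alternate proof of the expander analogue in \prettyref{app:a} is much closer in spirit to what you do: it also bounds the quadratic form $\sum_{i,j}A_{ij}\inprod{X_i,X_j}$ directly via a spectral decomposition rather than via a dual certificate.
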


\begin{lemma}[\cite{DBLP:conf/stoc/BhaskaraCCFV10}, Theorem 6.1]
	\label{lem:sdp-expander}
	For a $(d', \lambda)$-expander graph on $n$ vertices, the value of the SDP (\prettyref{sdp:dks}) is at most $\dfrac{k^2d'}{n} + k\lambda \mper$
\end{lemma}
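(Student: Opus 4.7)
My plan is to prove this bound by spectrally decomposing the (weighted) adjacency matrix $A$ of the expander $H$, in the spirit of the classical expander mixing lemma. Because $H$ is $d'$-regular, $A$ admits an orthonormal eigenbasis $v_1, \ldots, v_n$ in which $v_1 = \mathbf{1}/\sqrt{n}$ corresponds to $\lambda_1 = d'$ and every other eigenvalue satisfies $\abs{\lambda_l} \leq \lambda$. Writing $A = \sum_l \lambda_l v_l v_l^{T}$ and interchanging summations, the SDP quadratic form expressed in this basis becomes
\[
	\sum_{i,j \in V} A_{ij}\inprod{X_i, X_j} = \sum_{l=1}^n \lambda_l \norm{u_l}^2 \mcom \qquad \text{where } u_l \defeq \sum_{i \in V} v_l(i) X_i \mper
\]

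Orthonormality of $\set{v_l}$ together with SDP constraint \prettyref{eq:sdp2} gives the Parseval-type identity $\sum_l \norm{u_l}^2 = \sum_i \norm{X_i}^2 = k$. Separating the $l=1$ term and bounding each remaining eigenvalue by $\lambda$ then yields
\[
	\sum_l \lambda_l \norm{u_l}^2 \leq d' \norm{u_1}^2 + \lambda\paren{k - \norm{u_1}^2} \mper
\]

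The remaining and crucial ingredient is a tight upper bound on $\norm{u_1}^2 = \tfrac{1}{n}\norm{\sum_i X_i}^2$. The trivial Cauchy--Schwarz bound only gives $\norm{u_1}^2 \leq k$, which would let the $d'k$ term dominate and wash out any gains from expansion. Instead, summing SDP constraint \prettyref{eq:sdp3} over all $i$ gives
\[
	\norm{\sum_i X_i}^2 = \sum_{i,j \in V}\inprod{X_i, X_j} \leq k \sum_i \norm{X_i}^2 = k^2 \mcom
\]
so $\norm{u_1}^2 \leq k^2/n$. If $\lambda \geq d'$ the claim is already immediate from $\sum_l \lambda_l \norm{u_l}^2 \leq \lambda k$; otherwise, substituting the bound on $\norm{u_1}^2$ into the previous display shows $\sum_{i,j} A_{ij}\inprod{X_i, X_j} \leq d' k^2/n + \lambda k$, and the factor of $1/2$ in the SDP objective then gives the claimed bound (in fact with a factor of $2$ to spare).

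The only real obstacle here is the bound on $\norm{u_1}^2$: SDP constraint \prettyref{eq:sdp3} is precisely what plays the role of ``restricting to a set of size $k$'' in the combinatorial expander mixing lemma. Once the $k^2/n$ bound on $\norm{u_1}^2$ is in hand, everything else is routine Rayleigh-quotient-style manipulation; this is why the argument closely parallels the classical proof of the expander mixing lemma, lifted from indicator vectors to SDP vectors.
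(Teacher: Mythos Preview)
Your proposal is correct and is essentially the same argument as the paper's alternate proof in \prettyref{app:a} (specifically \prettyref{lem:svd_expander} and the proof of \prettyref{prop:app_a}): both use the spectral decomposition of the expander's adjacency matrix, bound the contribution of the top eigenspace via SDP constraint \prettyref{eq:sdp3} (giving the $k^2/n$ factor), and control the remaining eigenspaces by $\lambda$ times the total mass $k$ from \prettyref{eq:sdp2}. The only cosmetic difference is packaging: the paper proves the scalar quadratic-form bound $U^T A U \le \tfrac{d'-\lambda}{n}(\sum_i U(i))^2 + \lambda\norm{U}^2$ and then applies it coordinate-wise to the rows of the SDP solution matrix, whereas you work directly with the vectors $u_l = \sum_i v_l(i) X_i$ and use the Parseval identity $\sum_l \norm{u_l}^2 = k$; these are two ways of writing the same double sum.
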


We note that, though the statement proved in \cite{DBLP:conf/stoc/BhaskaraCCFV10} is about random graphs (\prettyref{lem:sdp-random-graph}), their proof follows as is for an expander graph. Since, we are only applying \prettyref{lem:sdp-expander}
to the subgraph $G\brac{V \setminus S}$, we use a scaling factor of $\paren{1- \E_{i \sim S} \norm{X_i}^2}$. 
The proof of \prettyref{prop:upper_bound_v_s_expander} follows directly from the above lemma. We also provide an alternate proof of \prettyref{prop:upper_bound_v_s_expander} in \prettyref{app:a}.

\begin{remark}
	\label{rem:using_random_graph}
	If the subgraph, $G\brac{V \setminus S}$ is a random graph $(G(n-k, p))$ as considered in our discussion in \prettyref{sec:related}, we can analogously use \prettyref{lem:sdp-random-graph} to get upper bounds on $\sum_{i, j \in V \setminus S} A_{ij}\inprod{X_i,X_j}$.
\end{remark}

\subsection{Putting things together}
\label{sec:2.3}
We have shown upper bounds on the SDP mass from the edges in $S \times (V\setminus S)$ (\prettyref{prop:upper_bound_v_s})
and from the edges in $V \setminus S$ (\prettyref{prop:upper_bound_v_s_expander}).
We combine these results to show that the average value of $\inprod{X_u, X_v}$ where $\set{u, v} \in E\paren{G\brac{S}}$ is  ``large" (\prettyref{prop:exp_edge_len}). 
The SDP constraint \prettyref{eq:sdp4} implies the corresponding vertices, $u$ and $v$ have large squared norms as well. 
This immediately guides us towards a selection criteria/recovery algorithm. However we need to output a vertex set of size at most $k$, we prune this set using a greedy strategy (\prettyref{alg:one}).

\begin{lemma}
	\label{lem:exp_edge_sdp_sum}
	\[\sum\limits_{i, j \in S} A_{ij}\inprod{X_i, X_j} = \paren{kd} \E\limits_{\set{i,j} \sim E(G[S])} \inprod{X_i, X_j}\mper\]
\end{lemma}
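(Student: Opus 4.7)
The plan is to simply unfold the definitions and perform a direct calculation; this lemma is a bookkeeping identity rather than a substantive claim.

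First, I would apply the definition of the distribution $f_{E(G[S])}$ from the notation section: for any edge $\{i,j\} \in E(G[S])$, $f_{E(G[S])}(\{i,j\}) = w(\{i,j\})/\rho(S)$. Thus
\[
\E_{\set{i,j} \sim E(G[S])} \inprod{X_i, X_j} \;=\; \frac{1}{\rho(S)}\sum_{\set{i,j}\in E(G[S])} w\paren{\set{i,j}}\inprod{X_i, X_j}\mper
\]
Next, I would compute $\rho(S)$. By \prettyref{step:two} of \prettyref{def:dks-expander}, the graph $G[S]$ has average weighted degree $d$, so the sum of weighted degrees in $G[S]$ is $kd$, giving $\rho(S) = kd/2$.

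Finally, I would rewrite $\sum_{i,j\in S} A_{ij}\inprod{X_i,X_j}$ as a sum over unordered edges. Since $A_{ij} = w(\set{i,j})$ with $A_{ii}=0$ and $A_{ij}=A_{ji}$, and since $\inprod{X_i, X_j}$ is symmetric in $i,j$, each unordered edge $\set{i,j}\in E(G[S])$ is counted twice in the double sum over $S\times S$. Therefore
\[
\sum_{i,j \in S} A_{ij}\inprod{X_i, X_j} \;=\; 2\sum_{\set{i,j} \in E(G[S])} w\paren{\set{i,j}} \inprod{X_i, X_j} \;=\; 2\,\rho(S)\,\E_{\set{i,j} \sim E(G[S])} \inprod{X_i, X_j} \;=\; kd\cdot \E_{\set{i,j} \sim E(G[S])} \inprod{X_i, X_j}\mper
\]
There is no real obstacle here; the only point requiring a little care is the factor of $2$ coming from the convention that $\sum_{i,j\in S}$ ranges over ordered pairs while $\sum_{\set{i,j}\in E(G[S])}$ ranges over unordered edges, which is precisely the factor absorbed by the $\ffrac{1}{2}$ in the definition of $\rho(\cdot)$.
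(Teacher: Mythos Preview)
Your proposal is correct and follows essentially the same approach as the paper's proof, which is a two-line calculation converting the ordered double sum to a sum over unordered edges (picking up the factor of $2$) and then recognizing the normalization $2\rho(S)=kd$ from the edge-weighted expectation. Your version simply spells out these steps in more detail.
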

\begin{proof}
\[\sum\limits_{i, j \in S} A_{ij}\inprod{X_i, X_j} = \sum\limits_{\set{i,j} \in E(S)} 2w\paren{\set{i,j}} \inprod{X_i, X_j}\\ 
		= \paren{kd} \E\limits_{\set{i,j} \sim E(G[S])} \inprod{X_i, X_j} \mper\]
\end{proof}

\begin{proposition}
	\label{prop:exp_edge_len}
	W.h.p. (over the choice of the graph), we have $\E\limits_{\set{i,j} \sim E(G[S])} \inprod{X_i, X_j} \geq 1 - \eta$, where
	\[ \eta = 6\delta +\matrixconstant\sqrt{\dfrac{\delta n}{dk}} + \dfrac{\lambda}{d} + \dfrac{d'k}{(n-k)d} \mper \]
\end{proposition}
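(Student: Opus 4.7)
The plan is to lower bound the SDP objective value from the planted solution and subtract off the contributions from edges touching $V \setminus S$, leaving a lower bound on the contribution from $E(G[S])$ that can be converted to the desired edge-averaged form.

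First I would observe that the integral assignment $X_i = v$ for $i \in S$ and $X_i = 0$ otherwise (for any unit vector $v$) is SDP-feasible and has objective value $\rho(S) = kd/2$. Since $\set{X_i}$ is an optimal SDP solution, this gives
\[
\tfrac{1}{2}\sum_{i,j \in V} A_{ij}\inprod{X_i,X_j} \;\geq\; \tfrac{kd}{2}.
\]
Split the sum by whether $i,j \in S$, $i \in S$ and $j \in V\setminus S$ (and symmetrically), or $i,j \in V\setminus S$. Applying \prettyref{prop:upper_bound_v_s} and \prettyref{prop:upper_bound_v_s_expander}, and letting $t \defeq \E_{i \sim S}\norm{X_i}^2$, I obtain
\[
\sum_{i,j \in S} A_{ij}\inprod{X_i,X_j} \;\geq\; kd \;-\; 6pk^2(1-t) \;-\; 2\matrixconstant k\sqrt{np}\,\sqrt{t(1-t)} \;-\; \Paren{\lambda k + \tfrac{d'k^2}{n-k}}(1-t).
\]

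Next I would invoke \prettyref{lem:exp_edge_sdp_sum} to rewrite the left-hand side as $(kd)\,\E_{\set{i,j}\sim E(G[S])}\inprod{X_i,X_j}$, then divide through by $kd$. Using $p = \delta d/k$ makes $6pk/d = 6\delta$ and $\sqrt{np}/d = \sqrt{\delta n/(kd)}$. Finally, since $t \in [0,1]$ we have $1-t \leq 1$ and $\sqrt{t(1-t)} \leq 1/2$, so the four error terms become at most $6\delta$, $\matrixconstant\sqrt{\delta n/(kd)}$, $\lambda/d$, and $d'k/((n-k)d)$ respectively, yielding $\E_{\set{i,j}\sim E(G[S])}\inprod{X_i,X_j} \geq 1-\eta$ for the $\eta$ claimed.

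The argument is essentially bookkeeping once the two SDP-mass upper bounds from the preceding subsections are in place; the only subtle point is that \prettyref{prop:upper_bound_v_s} requires the spectral-norm bound from \prettyref{cor:ub_spectral_norm_b}, which holds with high probability under the assumption $\delta d/k = p \geq \probabilityconstant \log n/n$. So the ``w.h.p.\ over the choice of the graph'' qualifier in the conclusion is inherited directly from the randomness invoked in \prettyref{prop:upper_bound_v_s}, with no additional concentration needed. There is no real obstacle to overcome here; the step that needs the most care is tracking the factor of $2$ that arises from the two symmetric copies of $S\times(V\setminus S)$ in the full adjacency sum, and verifying that the $\sqrt{t(1-t)} \leq 1/2$ bound absorbs the leading coefficient of $2\matrixconstant$ into $\matrixconstant$ in the final error term.
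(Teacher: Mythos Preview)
Your proposal is correct and mirrors the paper's proof almost exactly: both start from the SDP lower bound $kd$, split the adjacency sum into the three blocks, apply \prettyref{prop:upper_bound_v_s} and \prettyref{prop:upper_bound_v_s_expander}, bound $(1-t)\leq 1$ and $\sqrt{t(1-t)}\leq 1/2$, substitute $p=\delta d/k$, and then invoke \prettyref{lem:exp_edge_sdp_sum} to pass to the edge-average form. The only cosmetic difference is that the paper applies the crude bounds $(1-t)\leq 1$, $\sqrt{t(1-t)}\leq 1/2$ immediately inside eqns~\prettyref{eq:weak_upper_bound_v_s} and~\prettyref{eq:weak_upper_bound_cross}, whereas you carry the $t$-dependence one step further before discarding it.
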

\begin{proof}
 From \prettyref{prop:upper_bound_v_s} we get 
\begin{align}
\label{eq:weak_upper_bound_v_s}
	\nonumber
	\sum\limits_{i \in S, j \in V \setminus S} A_{ij}\inprod{X_i,X_j} 
	&\leq 3pk^2\paren{1-\E_{i \sim S}\norm{X_i}^2} 
	+ \matrixconstant k\sqrt{np}\sqrt{\paren{\E_{i \sim S} \norm{X_i}^2}\paren{1-\E_{i \sim S} \norm{X_i}^2}}\\
	&\leq 3pk^2 + \dfrac{\matrixconstant k\sqrt{np}}{2} \mper
	\end{align}
	The inequality above follows from the observation that $a(1-a) \leq 1/4\ \forall a \in [0,1]$.
From  \prettyref{prop:upper_bound_v_s_expander} we get
\begin{equation}
	\label{eq:weak_upper_bound_cross}
	\sum\limits_{i, j \in V \setminus S} A_{ij}\inprod{X_i,X_j} 
	\leq \paren{\lambda k + \dfrac{d'k^2}{n-k}} \left(1 - \E_{i \sim S} \norm{X_i}^2\right)
	\leq \lambda k + \dfrac{d'k^2}{n-k} \mper
	\end{equation}
Since \prettyref{sdp:dks} is a relaxation of the problem, we have $\dfrac{kd}{2} \leq \dfrac{1}{2}\sum\limits_{i, j \in V} A_{ij}\inprod{X_i, X_j}\mper$ 
In other words
	\begin{align*}
	kd &\leq \sum\limits_{i, j \in S} A_{ij}\inprod{X_i, X_j} + 2\sum\limits_{i \in S, j \in V \setminus S} A_{ij}\inprod{X_i, X_j} + \sum\limits_{i, j \in V \setminus S} A_{ij}\inprod{X_i, X_j}\\
	&\leq \sum\limits_{i, j \in S} A_{ij}\inprod{X_i, X_j} + 6pk^2 
		+ \matrixconstant k\sqrt{np} + \lambda k + \dfrac{d'k^2}{n-k}
		\qquad \text{(by eqns \prettyref{eq:weak_upper_bound_v_s} and 
		\prettyref{eq:weak_upper_bound_cross})} \mper
	\end{align*}
Therefore
	\begin{align*}
	\sum\limits_{i, j \in S} A_{ij}\inprod{X_i, X_j} &\geq kd - 6pk^2 - \matrixconstant k\sqrt{np} - \lambda k - \dfrac{d'k^2}{n-k}\\ 
	&= \left(1 - \dfrac{6pk}{d} - \dfrac{\matrixconstant\sqrt{np}}{d} - \dfrac{\lambda}{d} - \dfrac{d'k}{(n-k)d} \right)kd\\ 
	&= \left(1 - 6\delta -\matrixconstant\sqrt{\dfrac{\delta n}{dk}} - \dfrac{\lambda}{d} - \dfrac{d'k}{(n-k)d}\right)kd \qquad (\text{substituting }p = \ffrac{\delta d}{k})\\
	&= (1-\eta)kd \mper
	\end{align*}
	Dividing both sides by $kd$ and using \prettyref{lem:exp_edge_sdp_sum} completes the proof. 
\end{proof}
Now, we present the complete algorithm below.

\RestyleAlgo{boxruled}
\begin{minipage}{0.95\linewidth}
	\begin{algorithm}[H]
		\caption{Recovering a dense set $\mathcal{Q}$.}
		\label{alg:one}
		\begin{algorithmic}[1]
			\REQUIRE An Instance of \dksparams~/~\dksregparams~/~
			\dkssparams~/\\\dkssregparams~and a parameter $0 < \eta < 1$.
			\ENSURE A vertex set $\mathcal{Q}$ of size $k$.
			\STATE Solve \prettyref{sdp:dks} to get the vectors $\set{\set{X_i}_{i=1}^{n}, I}$.
			\STATE $\alpha =
			\begin{cases}  
			\ffrac{1}{\sqrt{3\eta}} & \text{For instances of type, \dksparams~or \dkssparams~}.
			\\
			\ffrac{2}{\sqrt{\eta}} & \textrm{For instances of type, \dksregparams~or \dkssregparams~}.
			\end{cases}$
			\STATE Let $T = \set{i \in V : \norm{X_i}^2 \geq 1-\alpha\eta}.$
			\STATE Initialize $\mathcal{Q} = T$.
			\IF{$\abs{\mathcal{Q}} < k$}
			\STATE Arbitrarily add remaining vertices to set $\mathcal{Q}$ to make its size $k$.
			\ELSE
			\WHILE{$\abs{\mathcal{Q}} \neq k$}
			\STATE Remove the minimum weighted vertex from the set $\mathcal{Q}$.
			\ENDWHILE
			\ENDIF
			\STATE Return $\mathcal{Q}$.
		\end{algorithmic}
	\end{algorithm}
\end{minipage}
\\~\\

Note that if $\eta = 0$, the SDP returns an integral solution and we can recover the set $S$ exactly. Therefore, w.l.o.g. we assume $\eta \neq 0, 1$.

To analyse the cost of the solution returned by \prettyref{alg:one}, we define two sets as follows.
\[ T' \defeq \set{\set{i,j} \in E : \inprod{X_i, X_j} \geq 1-\alpha\eta} 
\qquad  \text{and} \qquad T \defeq \set{i \in V : \norm{X_i}^2 \geq 1-\alpha\eta} \mcom \] 
where $1 < \alpha < \ffrac{1}{\eta}$ is a parameter to be fixed later. 

We show that a {\em large} weight of the edges inside $S$ also lies in the set $T'$. 
\begin{lemma}
\label{lem:wt_edges_t'}
W.h.p. (over the choice of the graph), 
\[ \sum\limits_{e \in T' \cap E\paren{G[S]}} w(e) \geq \dfrac{kd}{2}\left(1-\dfrac{1}{\alpha}\right) \mper \] 
\end{lemma}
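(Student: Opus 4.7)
The plan is to apply Markov's inequality directly to the edge distribution $f_{E(G[S])}$ defined in the notation section, using the expected inner product bound from \prettyref{prop:exp_edge_len}.

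First, I would observe that for every edge $\{i,j\} \in E(G[S])$, the SDP constraint \prettyref{eq:sdp4} combined with \prettyref{eq:sdp5} gives $\inprod{X_i, X_j} \leq \inprod{X_i, X_i} \leq 1$, so the random variable $Y \defeq 1 - \inprod{X_i, X_j}$ is non-negative under the distribution $f_{E(G[S])}$. By \prettyref{prop:exp_edge_len}, w.h.p.\ over the choice of graph, $\E_{\{i,j\} \sim E(G[S])}[Y] \leq \eta$.

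Next, by Markov's inequality applied to $Y$,
\[ \Pr_{\{i,j\} \sim E(G[S])}\!\Brac{Y > \alpha \eta} \leq \dfrac{\E[Y]}{\alpha \eta} \leq \dfrac{1}{\alpha} \mcom \]
which is equivalent to
\[ \Pr_{\{i,j\} \sim E(G[S])}\!\Brac{\inprod{X_i, X_j} \geq 1 - \alpha \eta} \geq 1 - \dfrac{1}{\alpha} \mper \]
Unpacking the definition of $f_{E(G[S])}$, the left-hand side equals $\sum_{e \in T' \cap E(G[S])} w(e) / \rho(S)$. Multiplying both sides by $\rho(S) = kd/2$ and noting that $T' \cap E(G[S])$ is exactly the set of edges inside $S$ whose endpoints have large inner product yields the desired bound.

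I do not expect any substantial obstacle here; the lemma is essentially a one-line Markov argument translating the average-case guarantee of \prettyref{prop:exp_edge_len} into a weighted mass statement about the ``good'' edges of $G[S]$. The only minor care needed is to ensure the random variable is non-negative (which follows from the SDP constraints) and to remember that the probability under $f_{E(G[S])}$ is a weight-normalized quantity, so the final rescaling uses $\rho(S) = kd/2$ rather than an edge count.
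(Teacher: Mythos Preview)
Your proposal is correct and essentially identical to the paper's own proof: both apply Markov's inequality to the non-negative random variable $1-\inprod{X_i,X_j}$ under the edge distribution $f_{E(G[S])}$, using \prettyref{prop:exp_edge_len} for the expectation bound, and then unpack the resulting probability as a normalized edge-weight sum to conclude.
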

\begin{proof}
	By \prettyref{prop:exp_edge_len},
	\begin{align}
	\label{eq:eta2}
	\E_{\set{i,j} \sim E(G[S])} \inprod{X_i, X_j} \geq 1 - \eta
	\implies\E_{\set{i,j} \sim E(G[S])} \brac{1-\inprod{X_i, X_j}} \leq \eta\mper
	\end{align}
	Note that by eqns \prettyref{eq:sdp4} and \prettyref{eq:sdp5}, 
	$1-\inprod{X_i, X_j} \in [0,1]\ \forall i,j \in V$.
	Therefore,
	\begin{align*}
	\ProbOp_{\set{i,j} \sim E(G[S])} \brac{\inprod{X_i, X_j} \leq 1-\alpha\eta} 
	&= \ProbOp_{\set{i,j} \sim E(G[S])} \brac{1-\inprod{X_i, X_j} \geq \alpha\eta} \\
	&\leq \dfrac{\E\limits_{\set{i,j} \sim E(G[S])} \brac{1- \inprod{X_i, X_j}}}{\alpha\eta} 
		\qquad \text{(by Markov's inequality)}\\
	&\leq \dfrac{1}{\alpha} \qquad (\text{by eqn \prettyref{eq:eta2}}).
	\end{align*}
Hence,
	\begin{align}
		\label{eq:eta3}
		\ProbOp_{\set{i,j} \sim E(G[S])} \brac{\set{i,j} \in T'} 
		= \ProbOp_{\set{i,j} \sim E(G[S])} \brac{\inprod{X_i, X_j} \geq 1-\alpha\eta}  
		\geq 1-\dfrac{1}{\alpha} \mper
	\end{align}
	The probability of the above event can be rewritten as,
	\begin{align*}
		\ProbOp_{\set{i,j} \sim E(G[S])} \brac{\set{i,j} \in T'} = \sum_{e \in E(G[S])} \one_{\set{e \in T' \cap E(G[S])}}\cdot\dfrac{w(e)}{\rho(S)} = \sum_{e \in T' \cap E(G[S])} \dfrac{w(e)}{\rho(S)} = \sum_{e \in T' \cap E(G[S])} \dfrac{w(e)}{kd/2}\mper
	\end{align*}
	By eqn \prettyref{eq:eta3},
	\begin{align*}
	\sum\limits_{e \in T' \cap E\paren{G[S]}} \dfrac{w(e)}{kd/2} \geq 1-\dfrac{1}{\alpha}
	\implies \sum\limits_{e \in T' \cap E\paren{G[S]}} w(e) \geq \dfrac{kd}{2}\left(1-\dfrac{1}{\alpha}\right)\mper
	\end{align*}
\end{proof}

The following lemma shows that the subgraph induced on $T \cap S$ contains all the edges in $T' \cap E\paren{G[S]}$.
\begin{lemma}	
\label{lem:edges_subset}
W.h.p. (over the choice of the graph),
	\[ T' \cap E\paren{G[S]} \subseteq E(G[T \cap S]) \mper \] 
\end{lemma}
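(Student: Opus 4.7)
The plan is to verify the inclusion directly from the SDP constraints, since this is essentially a definitional check once we use \prettyref{eq:sdp4}. Specifically, take any edge $\set{i,j} \in T' \cap E(G[S])$. By definition of $E(G[S])$ we immediately have $i,j \in S$, so the only thing to verify is that both $i$ and $j$ lie in $T$, i.e. that $\norm{X_i}^2 \geq 1 - \alpha\eta$ and $\norm{X_j}^2 \geq 1 - \alpha\eta$.

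For this, I would invoke the SDP constraint \prettyref{eq:sdp4}, which gives $\inprod{X_i, X_j} \leq \inprod{X_i, X_i} = \norm{X_i}^2$ and symmetrically $\inprod{X_i, X_j} \leq \norm{X_j}^2$. Since $\set{i,j} \in T'$ means $\inprod{X_i, X_j} \geq 1 - \alpha\eta$, chaining these two inequalities yields
\[
	\norm{X_i}^2 \geq \inprod{X_i, X_j} \geq 1 - \alpha\eta \quad \text{and} \quad \norm{X_j}^2 \geq \inprod{X_i, X_j} \geq 1 - \alpha\eta,
\]
so by definition of $T$ we have $i, j \in T$. Combined with $i, j \in S$, this gives $i, j \in T \cap S$, hence $\set{i, j} \in E(G[T \cap S])$.

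There is no real obstacle here; the lemma is a direct consequence of the SDP inequality $\inprod{X_i, X_j} \leq \min\set{\norm{X_i}^2, \norm{X_j}^2}$ together with the definitions of $T$ and $T'$. The content of the proof is just to observe that ``long edges'' (edges with large SDP inner product) can only occur between endpoints with correspondingly large squared norms, and so restricting $T'$ to edges inside $S$ automatically lands us inside the vertex set $T \cap S$. Note also that this step does not invoke any probabilistic argument on its own, although the ``w.h.p.'' qualifier in the statement is inherited through $T'$ being defined in terms of the SDP solution $\set{X_i}$ on the random instance.
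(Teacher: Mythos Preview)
Your proof is correct and essentially identical to the paper's own argument: both pick an edge in $T' \cap E(G[S])$, use the SDP constraint \prettyref{eq:sdp4} to deduce that each endpoint has squared norm at least $1-\alpha\eta$, and conclude the endpoints lie in $T \cap S$. Your remark that the ``w.h.p.'' qualifier is inherited rather than used directly is accurate and a nice clarification.
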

\begin{proof}
	Consider any edge $e = (u,v) \in T' \cap E\paren{G[S]}$. By definition, $\inprod{X_u, X_v} \geq 1-\alpha\eta$. But by the SDP constraint \prettyref{eq:sdp4}, $\paren{\inprod{X_i, X_j} \leq \inprod{X_i, X_i}}$, we have $\norm{X_u}^2, \norm{X_v}^2 \geq 1-\alpha\eta$. Thus $u, v \in T$. Also, since $e \in E\paren{G[S]}$, so $u, v \in S$.
\end{proof}

\begin{corollary}
W.h.p. (over the choice of the graph),
	\label{cor:edges_inside_set_T}
	\[ \rho\paren{T} \geq \rho\paren{T \cap S} \geq \dfrac{kd}{2}\left(1-\dfrac{1}{\alpha}\right)\mper \]
\end{corollary}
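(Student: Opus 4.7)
The plan is to observe that this corollary is essentially a bookkeeping combination of the two lemmas immediately preceding it, together with the monotonicity of $\rho$ under taking supersets.

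First I would handle the second inequality $\rho(T) \geq \rho(T \cap S)$. Since $T \cap S \subseteq T$ and all edge weights are non-negative, the sum $\sum_{i,j \in T \cap S} w(\{i,j\})$ is term-wise bounded by $\sum_{i,j \in T} w(\{i,j\})$, so by the definition of $\rho$ we immediately get $\rho(T) \geq \rho(T \cap S)$. No probabilistic content is needed here.

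Next I would establish the first inequality $\rho(T \cap S) \geq \frac{kd}{2}(1 - 1/\alpha)$. By \prettyref{lem:edges_subset}, every edge in $T' \cap E(G[S])$ is an edge of $G[T \cap S]$, i.e.\ $T' \cap E(G[S]) \subseteq E(G[T \cap S])$. Combined with non-negativity of the weights, this gives
\[
\rho(T \cap S) \;=\; \sum_{e \in E(G[T \cap S])} w(e) \;\geq\; \sum_{e \in T' \cap E(G[S])} w(e).
\]
By \prettyref{lem:wt_edges_t'}, the right-hand side is at least $\frac{kd}{2}(1 - 1/\alpha)$ with high probability over the graph, completing the proof.

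There is no real obstacle here; the work was already done in the two preceding lemmas. The only thing to be careful about is the direction of containment in \prettyref{lem:edges_subset} (edges of $T'$ restricted to $G[S]$ sit \emph{inside} $G[T\cap S]$, not the other way around), and the fact that the high-probability events needed are exactly those already invoked to prove \prettyref{lem:wt_edges_t'} (via \prettyref{prop:exp_edge_len}, which in turn uses \prettyref{prop:upper_bound_v_s} and \prettyref{prop:upper_bound_v_s_expander}), so no additional union bound is required.
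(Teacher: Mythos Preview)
Your proposal is correct and matches the paper's own proof, which simply cites \prettyref{lem:wt_edges_t'} and \prettyref{lem:edges_subset}; you have merely spelled out the monotonicity step $\rho(T)\ge\rho(T\cap S)$ and the containment argument explicitly. (Minor quibble: you swapped the labels ``first'' and ``second'' for the two inequalities, but the content is right.)
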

\begin{proof}
	By \prettyref{lem:wt_edges_t'} and \prettyref{lem:edges_subset}.
\end{proof}

We have shown that the subgraph induced on $T$ has a large weight $\paren{\approx kd/2}$. In the next lemma, we show that the size of set $T$ is not too large compared to $k$.
\begin{lemma}
\label{lem:sizeT}
W.h.p. (over the choice of the graph),
	\[ \abs{T} \leq \dfrac{k}{1-\alpha\eta} \mper \] 
\end{lemma}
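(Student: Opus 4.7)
The plan is to exploit the SDP norm constraint \eqref{eq:sdp2}, namely $\sum_{i=1}^{n} \norm{X_i}^2 = k$, together with the defining threshold of $T$. Since each vertex $i \in T$ satisfies $\norm{X_i}^2 \geq 1-\alpha\eta$ by definition, restricting the sum to $T$ gives a lower bound $\sum_{i \in T} \norm{X_i}^2 \geq \abs{T}(1-\alpha\eta)$.

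On the other hand, the full sum over all vertices in $V$ equals exactly $k$, and all individual squared norms are non-negative (this is immediate from $\norm{X_i}^2 \geq 0$, or can be read off from \eqref{eq:sdp4} applied with $i=j$ in the limiting form, combined with \eqref{eq:sdp5}). Therefore the partial sum over $T$ is at most the full sum, so $\sum_{i \in T} \norm{X_i}^2 \leq k$.

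Combining the two bounds yields $\abs{T}(1-\alpha\eta) \leq k$. Since we have already assumed $\alpha < 1/\eta$, the factor $1-\alpha\eta$ is strictly positive, so we may divide to obtain $\abs{T} \leq k/(1-\alpha\eta)$, which is the desired bound. There is no real obstacle here; the statement is essentially a one-line consequence of the SDP trace constraint, and unlike the preceding lemmas it does not invoke any probabilistic property of the random bipartite part (so the ``with high probability'' qualifier is only inherited from earlier steps that defined the regime in which the SDP analysis is valid, not from this particular inequality).
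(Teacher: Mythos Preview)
Your proof is correct and is essentially identical to the paper's own argument: both use the SDP constraint \eqref{eq:sdp2} that $\sum_{i\in V}\norm{X_i}^2=k$ together with the threshold $\norm{X_i}^2\ge 1-\alpha\eta$ for $i\in T$ to conclude $\abs{T}(1-\alpha\eta)\le k$. Your remark that the ``w.h.p.'' qualifier is not actually needed for this particular inequality is also accurate.
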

\begin{proof}
	Since $\sum\limits_{i \in V} \norm{X_i}^2 = k$, the number of vertices which have a squared norm greater than or equal to $1-\alpha\eta$ (precisely the set $T$) cannot exceed the bound in the lemma. 
\end{proof}

To prune the set $T$ and obtain a set of size $k$, 
we use a lemma from the work by Srivastav \etal \cite{Srivastav:1998:FDS:646687.702946}.
\begin{lemma}[\cite{Srivastav:1998:FDS:646687.702946}, Lemma 1] 
	\label{lem:pruning}
	Let $V', V'' \subseteq V$ be non-emply subsets such that $\abs{V''} \geq \abs{V'}$, then the greedy procedure which picks the lowest weighted vertex from $V''$ and removes it iteratively till we have $\abs{V'}$ vertices left ensures,
	
	\[ \rho\paren{V'} \geq \dfrac{\abs{V'}\paren{\abs{V'}-1}}{\abs{V''}\paren{\abs{V''}-1}}~\rho\paren{V''} \mper \]
\end{lemma}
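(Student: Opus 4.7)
The plan is to analyze the greedy peeling procedure one removal at a time. Let $W_0 = V''$, and at each step let $W_{i+1} = W_i \setminus \set{u_i}$ where $u_i \in W_i$ is a vertex of smallest weighted degree in the induced subgraph $G[W_i]$; writing $d_W(u) = \sum_{v \in W} w(\set{u,v})$ for the induced weighted degree, we have the handshake identity $\sum_{u \in W} d_W(u) = 2\rho(W)$. The procedure terminates at step $\abs{V''} - \abs{V'}$, giving a set of size $\abs{V'}$.

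The key observation is that the minimum of a finite collection of nonnegative numbers is at most its average, so $d_{W_i}(u_i) \leq 2\rho(W_i)/\abs{W_i}$. Since removing $u_i$ subtracts exactly $d_{W_i}(u_i)$ from $\rho$, this yields the one-step estimate
\[
\rho(W_{i+1}) \;\geq\; \rho(W_i) \cdot \frac{\abs{W_i} - 2}{\abs{W_i}}\mper
\]
Setting $n_0 = \abs{V''}$ and $n_1 = \abs{V'}$, so that $\abs{W_i} = n_0 - i$, I would chain these estimates over $i = 0, 1, \ldots, n_0 - n_1 - 1$ to get
\[
\rho(V') \;\geq\; \rho(V'') \prod_{j = n_1 + 1}^{n_0} \frac{j-2}{j}\mper
\]

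The product telescopes: the numerator is $(n_1-1)\, n_1\, (n_1+1)\cdots(n_0-2)$ and the denominator is $(n_1+1)(n_1+2)\cdots n_0$, so everything in the middle cancels and the product collapses to $n_1(n_1-1)/[n_0(n_0-1)]$, which matches the claimed bound exactly. The entire argument is a short telescoping calculation on top of the standard ``min $\leq$ average'' inequality applied to weighted degrees, so I do not anticipate any serious obstacle. The only thing one must be careful about is the index bookkeeping in the telescoping step; in particular, the fact that each factor is $(\abs{W_i}-2)/\abs{W_i}$ rather than $(\abs{W_i}-1)/\abs{W_i}$ is precisely what produces the quadratic ratio $\abs{V'}(\abs{V'}-1) / [\abs{V''}(\abs{V''}-1)]$ in the lemma statement, rather than the naive linear ratio $\abs{V'}/\abs{V''}$.
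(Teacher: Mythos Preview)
Your argument is correct and is the standard proof of this greedy-peeling bound: the one-step estimate $\rho(W_{i+1}) \ge \rho(W_i)\cdot (|W_i|-2)/|W_i|$ follows from $\min \le \text{average}$ applied to the induced weighted degrees, and the telescoping product collapses exactly to $|V'|(|V'|-1)/\bigl[|V''|(|V''|-1)\bigr]$ as you show. The paper itself does not supply a proof of this lemma; it is quoted directly from \cite{Srivastav:1998:FDS:646687.702946} (their Lemma~1), so there is nothing further to compare against.
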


We are now ready to prove the main result which gives the approximation guarantee of our algorithm. We also set the value of parameter $\alpha$ which maximizes the density of the output graph.

\begin{proof}[Proof of \prettyref{thm:main_exp}]
We run \prettyref{alg:one} on \dksparams~with $\eta$ as given in \prettyref{prop:exp_edge_len}. From \prettyref{lem:pruning}, we have a handle on the density of the new set $(\mathcal{Q})$ \emph{after pruning} $T$ to a set of size $k$. The algorithm performs this exactly in the steps 5 to 11.
Let $\ALG$ denote the density of this new set (output of \prettyref{alg:one}). We have,
	\begin{align*}
		\ALG &\geq \paren{\dfrac{k(k-1)}{\abs{T}(\abs{T}-1)}} \left(1-\dfrac{1}{\alpha}\right) \dfrac{kd}{2}
		& (\text{by \prettyref{cor:edges_inside_set_T} and \prettyref{lem:pruning}})\\
	&\geq \paren{\dfrac{(k-1)(1-\alpha\eta)^2}{k-1+\alpha\eta}}\paren{1-\dfrac{1}{\alpha}} \dfrac{kd}{2} & (\text{by \prettyref{lem:sizeT}})\\
	&= \paren{\dfrac{(1-\alpha\eta)^2}{1+\ffrac{\alpha\eta}{(k-1)}}}\paren{1-\dfrac{1}{\alpha}} \dfrac{kd}{2} & (\text{dividing by }k-1)\\ 
	&\geq \paren{\dfrac{(1-\alpha\eta)^2}{1+\alpha\eta}}\paren{1-\dfrac{1}{\alpha}} \dfrac{kd}{2} & (\text{w.l.o.g., }k \geq 2)\\
	&\geq \paren{1-2\alpha\eta}\paren{1-\alpha\eta}\paren{1-\dfrac{1}{\alpha}} \dfrac{kd}{2} & \paren{\because (1-x)^2 \geq 1-2x \text{ and } \dfrac{1}{1+x} \geq 1-x,~\forall x \in \mathbb{R}_{\geq 0}}\\
	&\geq \paren{1-3\alpha\eta-\dfrac{1}{\alpha}} \dfrac{kd}{2} & (\text{rearranging and bounding the positive terms by 0})\\
	&= \paren{1-2\sqrt{3\eta}} \dfrac{kd}{2} & \paren{\text{we fix }\alpha = \ffrac{1}{\sqrt{3\eta}}}.
\end{align*}
Letting $\nu \defeq 2\sqrt{3\eta}$, we get that $\ALG \geq \paren{1 - \tau} \ffrac{kd}{2}$
where
	\[ \nu = 2\sqrt{3\paren{6\delta +\matrixconstant\sqrt{\dfrac{\delta n}{dk}} + \dfrac{\lambda}{d} + \dfrac{d'k}{(n-k)d}}} \qquad \paren{\text{using the value of } \eta \text{ from \prettyref{prop:exp_edge_len}}}.
	\]
	From \prettyref{lem:sizeT}, 
	\[
		\abs{T} \leq \dfrac{k}{1-\alpha\eta} = \dfrac{k}{1-\paren{\ffrac{\nu}{6}}} \leq k\paren{1+\dfrac{\nu}{5}} \mper
	\]
	And from \prettyref{cor:edges_inside_set_T},
	\[
		\rho\paren{T \cap S} \geq \dfrac{kd}{2}\left(1-\dfrac{1}{\alpha}\right)	= \dfrac{kd}{2}\left(1-\dfrac{\nu}{2}\right)\mper
	\]
\end{proof}
Note that for the parameter range $0 < 2\sqrt{3\eta} < 1 \iff 0 < \nu < 1$, the value of $\alpha~(= \ffrac{1}{\sqrt{3\eta}})$ fixed by the algorithm lies in the interval $(1, \ffrac{1}{\eta})$ as required.

\begin{remark}[on \prettyref{thm:inf_main_exp}]
	\label{rem:simplified_apx_4}
	In the restricted parameter case, we simplify the arguments in our informal theorem statements, i.e. the case when the average degree of vertices in $S$ and $V \setminus S$ is close, we have $\delta = \Theta\paren{\dfrac{kd'}{nd}}$. Assuming $\nu = 2\sqrt{3\eta}$, we rewrite $\dfrac{\delta n}{dk}$ as $\dfrac{d'}{d^2}$ from the above value of $\delta$ and the term $\dfrac{\paren{d' - \lambda}k}{\paren{n-k}d}$ is at most a constant for ``large'' n. So, the new value of $\tau$ is  $\Theta\paren{\sqrt{\delta+\dfrac{\lambda+\sqrt{d'}}{d}}}$. A similar argument gives the new value of $\nu'$ in \prettyref{thm:inf_main_exp_reg}.
\end{remark}

\section{Other Models}
\label{sec:3}

In this section, we analyse the remaining three models. As described in the previous section, the monotone adversary step during the model generation can be ignored w.l.o.g. for the analysis.

\subsection{Analysis of \dksregparams}
\label{sec:3.1}
Recall that \dksregparams~is same as \dksparams~except that the subgraph $G\brac{S}$ is required to be an arbitrary $d-$regular graph.
We prove the following theorem. 

\begin{theorem}[Formal version of \prettyref{thm:inf_main_exp_reg}]
	\label{thm:main_exp_reg}
	There exist universal constants $\probabilityconstant, \matrixconstant \in \mathbb{R}^{+}$ and a deterministic polynomial time algorithm, which takes an instance of \dksregparams\\where
	\[\nu' = \dfrac{5}{\sqrt{1+\dfrac{dk}{4\matrixconstant^2\delta}\left(1- \dfrac{\lambda}{d} - \dfrac{d'k}{\paren{n-k}d} - 6\delta\right)^2}},\]
	satisfying $\nu' \in (0,1)$, and $\ffrac{\delta d}{k} \in [\probabilityconstant \ffrac{\log n}{n}, 1)$, and outputs with high probability (over the instance) a vertex set $\mathcal{Q}$ of size $k$ such that
	\begin{multicols}{2}
		\begin{enumerate}[(a)]
			\item $\rho(\mathcal{Q}) \geq \paren{1-\nu'} \dfrac{kd}{2} \mper$
			\item $\abs{\mathcal{Q} \cap S} \geq \paren{1-\dfrac{\nu'}{6}} k \mper$
		\end{enumerate}
	\end{multicols}
\end{theorem}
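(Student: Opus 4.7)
The plan is to reuse the SDP framework from the proof of \prettyref{thm:main_exp} and to invoke the $d$-regularity of $G[S]$ at two new places. Since neither \prettyref{prop:upper_bound_v_s} nor \prettyref{prop:upper_bound_v_s_expander} uses the internal structure of $G[S]$, both remain available in this model.

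The first use of regularity produces an \emph{upper} bound on the SDP mass inside $S$. Writing $\tau = \E_{i \sim S}\|X_i\|^2$, the SDP constraint $\inprod{X_i,X_j}\le \|X_i\|^2$ from \prettyref{eq:sdp4} together with $\sum_{j\in S}A_{ij}=d$ for every $i\in S$ (by $d$-regularity of $G[S]$) gives
\[
\sum_{i,j \in S} A_{ij}\inprod{X_i,X_j} \;\le\; d\sum_{i\in S}\|X_i\|^2 \;=\; dk\tau.
\]
Combining this with SDP feasibility (value $\ge kd/2$), \prettyref{prop:upper_bound_v_s}, and \prettyref{prop:upper_bound_v_s_expander} yields, after cancellation, the key inequality
\[
d(1-\tau)\bigl(1 - 6\delta - \tfrac{\lambda}{d} - \tfrac{d'k}{(n-k)d}\bigr) \;\le\; 2\matrixconstant\sqrt{np}\,\sqrt{\tau(1-\tau)}.
\]
Squaring and solving for $\tau$ gives a strong lower bound on $\tau$, showing that essentially all of the SDP mass lies on the planted set $S$.

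Given this bound on $1-\tau$, I would invoke \prettyref{alg:one} with $\alpha = 2/\sqrt{\eta}$ for an appropriately chosen $\eta$. Markov's inequality applied to $\sum_{i\in S}(1-\|X_i\|^2)=k(1-\tau)$ yields $|T\cap S|\ge k\bigl(1-\tfrac{1-\tau}{2\sqrt{\eta}}\bigr)$, and applied to $\sum_{i \in V\setminus S}\|X_i\|^2=k(1-\tau)$ yields $|T\cap (V\setminus S)|\le \tfrac{k(1-\tau)}{1-2\sqrt{\eta}}$, so $|T|$ is only slightly larger than $k$. After the pruning step of the algorithm, substituting the bound on $1-\tau$ and choosing $\eta$ appropriately should yield $|\mathcal{Q} \cap S| \ge (1-\nu'/6)k$. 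Part (a) then follows from a second use of regularity: the total weight of edges of $G[S]$ incident to $S\setminus \mathcal{Q}$ is at most $(k-|\mathcal{Q}\cap S|)\cdot d$, so
\[
\rho(\mathcal{Q}) \;\ge\; \rho(\mathcal{Q}\cap S) \;\ge\; \tfrac{kd}{2} - (k-|\mathcal{Q}\cap S|)d \;\ge\; \Bigl(1-\tfrac{\nu'}{3}\Bigr)\tfrac{kd}{2} \;\ge\; (1-\nu')\tfrac{kd}{2}.
\]

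The main obstacle I anticipate is the precise algebraic matching between the bound on $1-\tau$ produced by the key inequality and the specific closed form $\nu' = 5/\sqrt{1 + dk(1-\epsilon)^2/(4\matrixconstant^2\delta)}$ claimed in the theorem. A naive worst-case pruning argument, in which all $|T|-k$ removed vertices are charged to $T\cap S$, suffices to produce a bound of roughly the right shape, but is unlikely to give the exact constants. Recovering them will likely require either a tight joint optimization of $\eta$ against the two Markov estimates, or exploiting the fact that the pruning in \prettyref{alg:one} removes the \emph{lowest}-weighted vertices, which are preferentially in $T\cap(V\setminus S)$ since their incident edges into $T$ come only from random Bernoulli$(p)$ connections and from the expander-controlled subgraph on $V\setminus S$.
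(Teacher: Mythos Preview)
Your plan is essentially the paper's own proof. The upper bound $\sum_{i,j\in S}A_{ij}\inprod{X_i,X_j}\le dk\tau$ is exactly \prettyref{lem:four}, and the ``key inequality'' you derive is precisely how \prettyref{prop:exp_vertex_len} obtains $\tau\ge 1-\eta'$; the correct choice is simply $\eta=\eta'$, after which your two Markov estimates are \prettyref{lem:six} and \prettyref{lem:seven} verbatim.

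Your anticipated obstacle does not actually arise. The paper matches the constant $\nu'=5\sqrt{\eta'}$ using exactly the naive worst-case pruning you describe (charging all $|T|-k$ removals to $T\cap S$); no appeal to the greedy nature of the pruning is made. The only genuine difference is in how part~(a) is obtained: the paper first bounds $\rho(T\cap S)\ge\tfrac{kd}{2}(1-\tfrac{2}{\alpha})$ via regularity (\prettyref{lem:eight}) and then applies the pruning lemma (\prettyref{lem:pruning}) to $T$, whereas you propose to first establish part~(b) and then apply regularity directly to $\mathcal{Q}\cap S$. Your route is slightly cleaner and in fact yields the stronger bound $\rho(\mathcal{Q})\ge(1-\tfrac{\nu'}{3})\tfrac{kd}{2}$, so it more than suffices.
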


Since in the model \dksregparams, the only difference is with respect to $G\brac{S}$ from \dksparams, the upper bounds on the SDP mass from the remainder graph (\prettyref{prop:upper_bound_v_s} and \prettyref{prop:upper_bound_v_s_expander}) hold here as well. The analysis is quite similar to the proof of \prettyref{thm:main_exp}.
\begin{lemma}
	\label{lem:four}
	\[ \sum\limits_{i, j \in S} A_{ij}\inprod{X_i, X_j} \leq kd\paren{\E_{i \sim S} \norm{X_i}^2}\mper \]
\end{lemma}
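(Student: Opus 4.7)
The plan is to push the SDP inner products up to squared norms using constraint \prettyref{eq:sdp4}, and then exploit the $d$-regularity of $G[S]$ to collapse the row sums of $A$ restricted to $S$.

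More concretely, by constraint \prettyref{eq:sdp4} we have $\inprod{X_i, X_j} \leq \inprod{X_i, X_i} = \norm{X_i}^2$ for every pair $i, j$. Since $A_{ij} \geq 0$, this gives the termwise bound $A_{ij}\inprod{X_i, X_j} \leq A_{ij}\norm{X_i}^2$ for all $i, j \in S$. Summing over $j \in S$ first and using that $G[S]$ is $d$-regular, i.e. $\sum_{j \in S} A_{ij} = d$ for every $i \in S$, I would obtain
\[
\sum_{i, j \in S} A_{ij}\inprod{X_i, X_j} \leq \sum_{i \in S} \norm{X_i}^2 \sum_{j \in S} A_{ij} = d \sum_{i \in S} \norm{X_i}^2.
\]
Rewriting the last sum as an expectation over the uniform distribution on $S$ (of size $k$) yields $d \cdot k \cdot \E_{i \sim S} \norm{X_i}^2$, which is exactly the claimed bound.

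There is essentially no obstacle here — the entire argument is an application of the pointwise SDP constraint combined with regularity, and is symmetric in the two indices so no averaging trick is needed. The lemma is the regular-case analog of the weaker inequality one would derive without regularity, and it is precisely this tightening (row sums equal to $d$ exactly, rather than only on average) that will subsequently let the algorithm certify recovery of a constant fraction of $S$ in \prettyref{thm:main_exp_reg}.
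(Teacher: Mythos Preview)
Your proof is correct and essentially matches the paper's: both bound $\inprod{X_i,X_j}$ by a squared norm and then use $d$-regularity of $G[S]$ to collapse the inner sum. The only cosmetic difference is that the paper uses the symmetric bound $\inprod{X_i,X_j}\leq\tfrac{1}{2}(\norm{X_i}^2+\norm{X_j}^2)$ (from $\norm{X_i-X_j}^2\geq 0$) rather than constraint~\prettyref{eq:sdp4}, but the two routes are interchangeable here.
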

\begin{proof}
\begin{align*}
\paren{\frac{1}{kd}}\sum\limits_{i, j \in S} A_{ij}\inprod{X_i, X_j} 
& \leq  \paren{\frac{1}{kd}}\sum\limits_{i, j \in S} A_{ij}
 \left(\dfrac{\norm{X_i}^2 + \norm{X_j}^2}{2}\right) \qquad \paren{\text{by expanding, } \norm{X_i - X_j}^2 \geq 0} \\
&= \left(\dfrac{1}{kd}\right) \sum\limits_{i \in S}  \paren{\sum_{j\in S}A_{ij} } \norm{X_i}^2 
 = \E_{i \sim S} \norm{X_i}^2 \mper 
\end{align*}
Here the last equality follows from the $d$-regularity of the graph induced on $S$:
$\sum_{j\in S}A_{ij}  = d$ for each $i \in S$.
\end{proof}

Next we combine, the upper bounds from \prettyref{prop:upper_bound_v_s} and \prettyref{prop:upper_bound_v_s_expander} (in terms of $\E_{i \sim S} \norm{X_i}^2$) with the above lemma (\prettyref{lem:four}) to show a result similar to \prettyref{prop:exp_edge_len}, its proof is along the same lines.
\begin{proposition}
	\label{prop:exp_vertex_len}
	W.h.p. (over the choice of the graph), we have $ \E\limits_{i \sim S} \norm{X_i}^2 \geq 1 - \eta' $, where
	\[ \eta' = \dfrac{1}{1+\dfrac{dk}{4\matrixconstant^2 \delta n}\left(1- \dfrac{\lambda}{d}-\dfrac{d'k}{\paren{n-k}d} - 6\delta\right)^2}\mper \]
\end{proposition}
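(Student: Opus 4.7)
The plan is to mimic the strategy used for \prettyref{prop:exp_edge_len}, but now using \prettyref{lem:four} in place of \prettyref{lem:exp_edge_sdp_sum}. Since \prettyref{sdp:dks} is a relaxation of \dks~and the integer optimum in this model is $kd/2$, we have
\[ kd \leq \sum_{i,j \in S} A_{ij}\inprod{X_i,X_j} + 2\sum_{i \in S, j \in V \setminus S} A_{ij}\inprod{X_i,X_j} + \sum_{i,j \in V\setminus S} A_{ij}\inprod{X_i,X_j}. \]
I will plug in: \prettyref{lem:four} for the first term (which is where $d$-regularity of $G[S]$ is used), \prettyref{prop:upper_bound_v_s} for the second, and \prettyref{prop:upper_bound_v_s_expander} for the third. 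Setting $\mu \defeq \E_{i\sim S}\norm{X_i}^2$ and substituting $p = \delta d/k$, this yields
\[ kd \leq kd\,\mu + 6\delta\, kd\,(1-\mu) + 2\matrixconstant k\sqrt{np}\,\sqrt{\mu(1-\mu)} + \paren{\lambda k + \tfrac{d'k^2}{n-k}}(1-\mu). \]

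Dividing through by $kd$ and grouping the $(1-\mu)$ coefficients, let
\[ C \defeq 6\delta + \dfrac{\lambda}{d} + \dfrac{d'k}{(n-k)d}, \qquad D \defeq 2\matrixconstant\sqrt{\dfrac{\delta n}{dk}}, \]
so the inequality becomes $1 \leq \mu + C(1-\mu) + D\sqrt{\mu(1-\mu)}$, equivalently
\[ (1-C)(1-\mu) \leq D\sqrt{\mu(1-\mu)}. \]
Under the hypothesis $\nu' \in (0,1)$ (which forces $C < 1$), both sides are non-negative, so squaring and dividing by $(1-\mu)$ (noting we may assume $\mu < 1$, else the conclusion is trivial) gives $(1-C)^2(1-\mu) \leq D^2 \mu$. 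Solving this linear inequality in $\mu$:
\[ 1 - \mu \leq \dfrac{D^2}{(1-C)^2 + D^2} = \dfrac{1}{1 + (1-C)^2/D^2}. \]
Substituting back the values of $C$ and $D$,
\[ \dfrac{(1-C)^2}{D^2} = \dfrac{dk}{4\matrixconstant^2 \delta n}\paren{1 - \dfrac{\lambda}{d} - \dfrac{d'k}{(n-k)d} - 6\delta}^2, \]
yielding $1 - \mu \leq \eta'$ with the $\eta'$ defined in the statement.

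The only non-routine step is handling the mixed term $\sqrt{\mu(1-\mu)}$; the trick is simply to isolate it on one side so that both sides are manifestly non-negative before squaring, which is where the assumption $C < 1$ (equivalently $\nu' < 1$) gets used. The bound on $\norm{B}$ from \prettyref{cor:ub_spectral_norm_b} is what makes the whole bound hold \emph{with high probability over the graph}, and requires the assumption $\delta d/k = p \in [\probabilityconstant \log n/n,\,1)$ from the theorem hypothesis. Everything else is algebraic manipulation of the three upper bounds already established.
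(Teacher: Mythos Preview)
Your proof is correct and follows essentially the same approach as the paper's own proof: both combine \prettyref{lem:four}, \prettyref{prop:upper_bound_v_s}, and \prettyref{prop:upper_bound_v_s_expander} with the SDP lower bound $kd$, then isolate the $\sqrt{\mu(1-\mu)}$ term and solve the resulting inequality for $\mu$. Your version is slightly more careful in explicitly noting the edge case $\mu = 1$ and the sign condition $C < 1$ before squaring, but the argument is otherwise the same.
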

\begin{proof}
	Since \prettyref{sdp:dks} is a relaxation, we have 
	$\dfrac{kd}{2} \leq \dfrac{1}{2}\sum\limits_{i, j \in V} A_{ij}\inprod{X_i, X_j}$. Thus
	\begin{align*}
		kd &\leq \sum\limits_{i, j \in S} A_{ij}\inprod{X_i, X_j} + 2\sum\limits_{i \in S, j \in V \setminus S} A_{ij}\inprod{X_i, X_j} + \sum\limits_{i, j \in V \setminus S} A_{ij}\inprod{X_i, X_j}\\
		&\leq \paren{kd}\paren{\E_{i \sim S} \norm{X_i}^2} + 2\paren{3pk^2\paren{1-\E_{i \sim S}\norm{X_i}^2} + \matrixconstant k\sqrt{np}\sqrt{\paren{\E_{i \sim S} \norm{X_i}^2}\paren{1-\E_{i \sim S} \norm{X_i}^2}}}\\
		&\quad+\paren{\lambda k + \dfrac{d'k^2}{n-k}} \left(1 - \E_{i \sim S} \norm{X_i}^2\right)
		\\&\qquad\qquad\qquad\qquad\qquad\qquad\qquad\qquad (\text{by \prettyref{prop:upper_bound_v_s}, \prettyref{prop:upper_bound_v_s_expander},
		and \prettyref{lem:four}}).
	\end{align*}
Therefore
\begin{align*}
	kd\paren{1-\E_{i \sim S} \norm{X_i}^2} &\leq 6pk^2\paren{1-\E_{i \sim S} \norm{X_i}^2} + 2\matrixconstant k\sqrt{np}\sqrt{\paren{\E_{i \sim S} \norm{X_i}^2}\paren{1-\E_{i \sim S} \norm{X_i}^2}} 
	\\&\qquad\qquad+\paren{\lambda k + \dfrac{d'k^2}{n-k}} \left(1 - \E_{i \sim S} \norm{X_i}^2\right) \mper
\end{align*}
Dividing both sides by $kd\paren{1-\E_{i \sim S} \norm{X_i}^2}$, we get
\begin{align*}
	1 \leq \dfrac{6pk}{d} + \dfrac{2\matrixconstant\sqrt{np}}{d}\sqrt{\dfrac{\E_{i \sim S} \norm{X_i}^2}{1-\E_{i \sim S} \norm{X_i}^2}} + \dfrac{\lambda}{d}+\dfrac{d'k}{\paren{n-k}d}\mper
\end{align*}
Rearranging the terms, we get
\begin{align*}
	\sqrt{\dfrac{\E_{i \sim S} \norm{X_i}^2}{1-\E_{i \sim S} \norm{X_i}^2}} &\geq \dfrac{d}{2\matrixconstant\sqrt{np}}\left(1- \dfrac{\lambda}{d}-\dfrac{d'k}{\paren{n-k}d} - \dfrac{6pk}{d}\right)\\
	\implies \E_{i \sim S} \norm{X_i}^2 &\geq 1 - \dfrac{1}{1+\paren{\dfrac{d}{2\matrixconstant\sqrt{np}}\left(1- \dfrac{\lambda}{d} - \dfrac{d'k}{\paren{n-k}d} - \dfrac{6pk}{d}\right)}^2}\\
	&= 1-\eta' \mcom
\end{align*}
	where by substituting $p=\ffrac{\delta d}{k}$ in the second last step completes the proof.
\end{proof}
Consider the vertex set \[T \defeq \set{i \in V : \norm{X_i}^2 \geq 1 - \alpha \eta'} \] where $2 < \alpha < \ffrac{1}{\eta'}$ is a parameter to be chosen later. 

In the remainder of this section we show that the subgraph induced on $T$ has a ``large" fraction of the weight of edges in $G\brac{S}$. In the next three lemmas, we show that we can recover most of the vertices of $S$ since $\abs{T \cap S}$ is large. Combining this with the regularity condition, we estimate $\rho\paren{T \cap S}$.
\begin{lemma}
\label{lem:five}
W.h.p. (over the choice of the graph), 
for $l < 1$,
	\[ \ProbOp_{i \sim S}\brac{\norm{X_{i}}^2 \geq l} \geq  1 - \dfrac{\eta'}{1-l}\mper\]
\end{lemma}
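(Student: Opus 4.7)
The plan is to prove this via a straightforward application of Markov's inequality to the random variable $1 - \|X_i\|^2$. The key observation is that, by the SDP constraint \prettyref{eq:sdp5}, we have $\|X_i\|^2 \leq 1$ for every $i$, so the quantity $1 - \|X_i\|^2$ is a non-negative random variable when $i$ is sampled uniformly from $S$.

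First I would note that \prettyref{prop:exp_vertex_len} gives us $\E_{i \sim S}\|X_i\|^2 \geq 1 - \eta'$, which is equivalent to
\[ \E_{i \sim S}\bigl[1 - \|X_i\|^2\bigr] \leq \eta'. \]
Next, for $l < 1$, I would rewrite the event $\{\|X_i\|^2 < l\}$ as $\{1 - \|X_i\|^2 > 1 - l\}$. Since $1 - l > 0$ and $1 - \|X_i\|^2 \geq 0$, Markov's inequality applies and gives
\[ \ProbOp_{i \sim S}\bigl[\|X_i\|^2 < l\bigr] \;=\; \ProbOp_{i \sim S}\bigl[1 - \|X_i\|^2 > 1 - l\bigr] \;\leq\; \frac{\E_{i \sim S}[1 - \|X_i\|^2]}{1 - l} \;\leq\; \frac{\eta'}{1 - l}. \]
The claimed lower bound then follows by taking complements.

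There is no real obstacle here; the lemma is essentially a restatement of \prettyref{prop:exp_vertex_len} in tail-probability form, and it will serve in subsequent arguments (\prettyref{lem:six} and beyond) to convert an averaged bound on $\|X_i\|^2$ over $S$ into a lower bound on $|T \cap S|$.
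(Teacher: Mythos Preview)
Your proposal is correct and essentially identical to the paper's own proof: both apply Markov's inequality to the non-negative random variable $1-\norm{X_i}^2$ (non-negative by \prettyref{eq:sdp5}), using the bound $\E_{i\sim S}[1-\norm{X_i}^2]\le\eta'$ from \prettyref{prop:exp_vertex_len}. The only cosmetic difference is that the paper writes the event with $\le$ rather than $<$, which is immaterial.
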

\begin{proof}
	By \prettyref{prop:exp_vertex_len},
	\begin{equation}
	\label{eq:eta'}
	\E_{i \sim S}\norm{X_{i}}^2 \geq 1-\eta'
	\implies \E_{i \sim S}\brac{1-\norm{X_{i}}^2} \leq \eta'.
	\end{equation}
	Note that by eqns \prettyref{eq:sdp4} and \prettyref{eq:sdp5}, 
	$1-\norm{X_{i}}^2 \in [0,1]\ \forall i \in V$. Therefore,
	\begin{align*}
		\ProbOp_{i \sim S}\brac{\norm{X_{i}}^2 \leq l} &= \ProbOp_{i \sim S}\brac{1-\norm{X_{i}}^2 \geq 1-l} \leq \dfrac{\E\limits_{i \sim S}\brac{1- \norm{X_{i}}^2}}{1-l} \leq \dfrac{\eta'}{1-l},
\end{align*}
where we used the Markov's inequality in the second last step and eqn \prettyref{eq:eta'} in the last step.
\end{proof}

\begin{lemma}
\label{lem:six}
W.h.p. (over the choice of the graph), 
	\[ \left(1 - \dfrac{1}{\alpha}\right)k \leq \abs{T \cap S} \leq k \mper\]
\end{lemma}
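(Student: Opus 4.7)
The upper bound $\abs{T \cap S} \leq k$ is immediate since $T \cap S \subseteq S$ and $\abs{S} = k$, so the entire content of the lemma is the lower bound. My plan is to obtain it as a direct application of \prettyref{lem:five} to the threshold parameter used in the definition of $T$.

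Specifically, recall $T = \set{i \in V : \norm{X_i}^2 \geq 1 - \alpha \eta'}$. I would invoke \prettyref{lem:five} with $l \defeq 1 - \alpha \eta'$, which lies in $(0,1)$ because $2 < \alpha < 1/\eta'$ forces $\alpha \eta' \in (0,1)$. The lemma then yields
\[
\ProbOp_{i \sim S}\brac{\norm{X_i}^2 \geq 1 - \alpha \eta'} \;\geq\; 1 - \dfrac{\eta'}{1-(1-\alpha\eta')} \;=\; 1 - \dfrac{1}{\alpha}\mper
\]
Since the distribution $f_S$ is uniform over $S$ and the event on the left-hand side is exactly the event $i \in T \cap S$, this probability equals $\abs{T \cap S}/k$. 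Multiplying through by $k$ gives $\abs{T \cap S} \geq (1 - 1/\alpha)k$, which is the required lower bound.

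There is no real obstacle here; the only thing to verify is that the hypothesis $l < 1$ of \prettyref{lem:five} holds, which follows from the range of $\alpha$ imposed just before the lemma statement, and that the high-probability event from \prettyref{prop:exp_vertex_len} (on which \prettyref{lem:five} itself depends) is the same event we condition on throughout the analysis. Both are automatic, so the proof reduces to a one-line instantiation of \prettyref{lem:five}.
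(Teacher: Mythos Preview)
Your proposal is correct and follows essentially the same argument as the paper: the upper bound is trivial, and the lower bound is obtained by invoking \prettyref{lem:five} with $l = 1 - \alpha\eta'$, then rewriting the uniform probability over $S$ as $\abs{T\cap S}/k$. Your check that $l < 1$ follows from the assumed range $2 < \alpha < 1/\eta'$ is slightly more explicit than the paper's version, but the two proofs are otherwise identical.
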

\begin{proof}
	Note, $\abs{T \cap S} \leq \abs{S} = k$.
	To prove the other inequality, we invoke \prettyref{lem:five} with $l = 1-\alpha\eta'$ (in the second step) to get,
	\begin{align}
		\label{eq:six_1}
		\ProbOp_{i \sim S} \brac{i \in T} = \ProbOp_{i \sim S}\brac{\norm{X_{i}}^2 \geq 1-\alpha \eta'} \geq 1-\dfrac{\eta'}{\alpha\eta'} = 1 - \dfrac{1}{\alpha}\mper
	\end{align}
	The probability of the above event can be rewritten as,
	\begin{align*}
	\ProbOp_{i \sim S} \brac{i \in T} = \sum_{i \in S} \one_{\set{i \in T \cap S}}\cdot\dfrac{1}{\abs{S}} = \dfrac{\abs{T \cap S}}{\abs{S}} = \dfrac{\abs{T \cap S}}{k}\mper
	\end{align*}
	By eqn \prettyref{eq:six_1},
	\begin{align*}
	\dfrac{\abs{T \cap S}}{k} \geq 1-\dfrac{1}{\alpha}
	\implies \abs{T \cap S} \geq k\paren{1-\dfrac{1}{\alpha}}\mper
	\end{align*}
\end{proof}

\begin{lemma}
	\label{lem:eight}
W.h.p. (over the choice of the graph), 
	\[ \rho\paren{T \cap S} \geq \dfrac{kd}{2}\left(1-\dfrac{2}{\alpha}\right) \mper \]
\end{lemma}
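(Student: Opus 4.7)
The plan is to exploit the $d$-regularity of $G[S]$ together with the lower bound on $|T \cap S|$ from \prettyref{lem:six}. The density $\rho(T\cap S)$ differs from $\rho(S) = kd/2$ only by the total weight of edges of $G[S]$ that have at least one endpoint in $S \setminus T$, so bounding the weight of these ``lost'' edges is the entire game.

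More concretely, first I would note that by \prettyref{lem:six} we have $|S \setminus T| \leq k/\alpha$ with high probability. Next, using that $G[S]$ is $d$-regular, every vertex $i \in S$ satisfies $\sum_{j \in S} A_{ij} = d$. Therefore, by summing weighted degrees over the removed vertices, the total weight of edges of $G[S]$ that touch $S \setminus T$ is at most $\sum_{i \in S \setminus T} \sum_{j \in S} A_{ij} = d \cdot |S \setminus T| \leq kd/\alpha$ (this upper bound possibly double-counts edges inside $S \setminus T$, which is harmless).

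Finally, since $\rho(S) = kd/2$ and
\[
\rho(T \cap S) \;=\; \rho(S) \;-\; \tfrac{1}{2}\sum_{\substack{\{i,j\} \in E(G[S]) \\ \{i,j\} \cap (S\setminus T) \neq \emptyset}} 2\, w(\{i,j\}),
\]
the bound on the lost weight gives $\rho(T \cap S) \geq kd/2 - kd/\alpha = (kd/2)(1 - 2/\alpha)$, as claimed. There is no real obstacle here; the only thing to be careful about is the factor of $2$ (from the fact that the degree sum double-counts edges entirely inside $S \setminus T$), which is exactly what produces the $2/\alpha$ rather than $1/\alpha$ in the statement. This is also the only place where $d$-regularity of $G[S]$ is used in an essential way, in contrast to the non-regular analysis of \prettyref{sec:2.3} which had to go through \prettyref{lem:pruning}.
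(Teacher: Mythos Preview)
Your proof is correct and essentially identical to the paper's: both invoke \prettyref{lem:six} to get $|S \setminus T| \leq k/\alpha$, use $d$-regularity of $G[S]$ to bound the total weight of edges of $G[S]$ incident to $S \setminus T$ by $d \cdot k/\alpha$, and subtract this from $\rho(S) = kd/2$. One minor inaccuracy in your closing commentary: the non-regular analysis in \prettyref{sec:2.3} also goes through \prettyref{lem:pruning} (to shrink $T$ down to size $k$); the real contrast is that there $\rho(T \cap S)$ is bounded via the edge-set $T'$ and \prettyref{lem:wt_edges_t'} rather than via $|T \cap S|$ and regularity.
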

\begin{proof}
	From \prettyref{lem:six}, we get $\abs{\overline{T} \cap S} \leq \dfrac{k}{\alpha}$. We obtain a lower bound on $\rho\paren{T \cap S}$ by subtracting the weight of all edges completely inside $G\brac{\overline{T} \cap S}$ and those across the subgraphs $G\brac{T \cap S}$ and $G\brac{\overline{T} \cap S}$. We can upper bound this by $\dfrac{k}{\alpha} \cdot d$, since $S$ is $d$-regular.
	\begin{align*}
		\therefore \rho\paren{T \cap S} \geq \dfrac{kd}{2}-\dfrac{kd}{\alpha} = \dfrac{kd}{2}\left(1-\dfrac{2}{\alpha}\right)\mper
	\end{align*} 
\end{proof}

We note that the result of the above lemma is similar to what we obtained in the \dksparams\\model (\prettyref{cor:edges_inside_set_T}). Using the next two lemmas, we give an upper bound on $\abs{T}$.

\begin{lemma}
W.h.p. (over the choice of the graph),
	\label{lem:seven}
	\[ \abs{T \cap (V \setminus S)} \leq \dfrac{\eta' k}{1-\alpha \eta'}\mper\]
\end{lemma}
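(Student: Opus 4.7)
The plan is to bound the total squared norm mass lying outside $S$ and then use the lower bound $1-\alpha\eta'$ on the squared norms of vertices in $T$.

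First, I would combine the SDP normalization constraint \prettyref{eq:sdp2}, which says $\sum_{i \in V} \norm{X_i}^2 = k$, with \prettyref{prop:exp_vertex_len}, which gives $\E_{i \sim S}\norm{X_i}^2 \geq 1-\eta'$ (equivalently, $\sum_{i \in S} \norm{X_i}^2 \geq (1-\eta')k$). Subtracting yields
\[
\sum_{i \in V \setminus S} \norm{X_i}^2 \;=\; k - \sum_{i \in S}\norm{X_i}^2 \;\leq\; k - (1-\eta')k \;=\; \eta' k \mper
\]

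Next, by the definition of $T$, every vertex $i \in T \cap (V \setminus S)$ satisfies $\norm{X_i}^2 \geq 1-\alpha\eta'$. Since the norms are non-negative (by \prettyref{eq:sdp4} applied with $i=j$), I can lower bound the sum over $V \setminus S$ by restricting it to $T \cap (V \setminus S)$:
\[
(1-\alpha\eta')\,\abs{T \cap (V \setminus S)} \;\leq\; \sum_{i \in T \cap (V \setminus S)} \norm{X_i}^2 \;\leq\; \sum_{i \in V \setminus S} \norm{X_i}^2 \;\leq\; \eta' k\mper
\]

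Finally, since the theorem's hypothesis guarantees $\alpha < 1/\eta'$, the factor $1-\alpha\eta'$ is strictly positive, so I can divide to conclude $\abs{T \cap (V \setminus S)} \leq \eta' k/(1-\alpha\eta')$. There is no real obstacle here; this is a direct Markov-style counting argument, and the only place the randomness enters is through \prettyref{prop:exp_vertex_len}, which is where the ``with high probability'' comes from.
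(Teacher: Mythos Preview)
Your proof is correct and follows essentially the same approach as the paper: both bound $\sum_{i \in V\setminus S}\norm{X_i}^2 \le \eta' k$ via \prettyref{eq:sdp2} and \prettyref{prop:exp_vertex_len}, then use the threshold $1-\alpha\eta'$ defining $T$ to cap the count. One tiny quibble: you cite \prettyref{eq:sdp4} with $i=j$ for nonnegativity of $\norm{X_i}^2$, but that constraint is stated for $i\neq j$; of course $\norm{X_i}^2\ge 0$ holds automatically, so this does not affect the argument.
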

\begin{proof}
	We know from \prettyref{eq:sdp2} that 
	$\sum\limits_{i \in S} \norm{X_i}^2 + \sum\limits_{i \in V \setminus S} \norm{X_i}^2 = k$. 
	Using \prettyref{prop:exp_vertex_len}, we get that 
	\[ \sum\limits_{i \in V \setminus S} \norm{X_i}^2 = k - \sum\limits_{i \in S} \norm{X_i}^2 \leq k - (1-\eta')k = \eta' k.
	\]
	Therefore, the number of vertices of $V \setminus S$ which have squared norm greater than $1-\alpha\eta'$ cannot exceed the bound in the lemma.
\end{proof}

\begin{corollary}
	\label{cor:two}
	By \prettyref{lem:six} and \prettyref{lem:seven},
	\[ \abs{T} = \abs{T \cap S} + \abs{T \cap (V \setminus S)} \leq k\left(1+\dfrac{\eta'}{1-\alpha \eta'}\right) \mper\]
\end{corollary}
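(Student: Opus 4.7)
The plan for this corollary is a one-line arithmetic combination of the two lemmas just established. Since $S$ and $V \setminus S$ partition $V$, the set $T$ also partitions as $T = (T \cap S) \sqcup (T \cap (V \setminus S))$, giving
\[ \abs{T} = \abs{T \cap S} + \abs{T \cap (V \setminus S)}. \]
First I would invoke \prettyref{lem:six}, which already furnished $\abs{T \cap S} \leq k$ as the trivial upper half of its statement (since $T \cap S \subseteq S$ and $\abs{S} = k$). Next I would invoke \prettyref{lem:seven}, which yields $\abs{T \cap (V \setminus S)} \leq \dfrac{\eta' k}{1-\alpha\eta'}$. Summing the two bounds gives exactly
\[ k + \dfrac{\eta' k}{1-\alpha\eta'} = k\paren{1+\dfrac{\eta'}{1-\alpha\eta'}}, \]
which is the claimed inequality.

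There is no real obstacle here. The only point worth flagging is that the bound is meaningful only when $1 - \alpha \eta' > 0$, which is guaranteed by the choice $2 < \alpha < 1/\eta'$ imposed when $T$ was first defined (and which will be consistent with the eventual choice $\alpha = 2/\sqrt{\eta'}$ that the algorithm makes in step~2 for the valid range of $\eta'$). Conceptually, this corollary packages two complementary facts: \prettyref{lem:six} tells us that most vertices of $S$ are captured by $T$, and \prettyref{lem:seven} tells us that only a small additional population of vertices from $V \setminus S$ can have vectors long enough to be included in $T$. Together they pin down $\abs{T}$ to within a $(1 + O(\eta'))$ factor of $k$.

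This upper bound on $\abs{T}$ is precisely what will be needed to feed into \prettyref{lem:pruning} in the subsequent proof of \prettyref{thm:main_exp_reg}: combined with the lower bound on $\rho(T \cap S)$ from \prettyref{lem:eight}, it will guarantee that trimming $T$ down to exactly $k$ vertices costs only a $(1 - O(\sqrt{\eta'}))$ factor in density, completing the analysis of the regular-graph variant of the expander model.
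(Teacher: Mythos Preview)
Your proposal is correct and follows exactly the approach the paper intends: the corollary is stated with no explicit proof beyond ``By \prettyref{lem:six} and \prettyref{lem:seven}'', and your one-line sum of $\abs{T \cap S} \leq k$ with $\abs{T \cap (V \setminus S)} \leq \eta' k/(1-\alpha\eta')$ is precisely that combination.
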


We are now ready to present the proof of the main result.

\begin{proof}[Proof of \prettyref{thm:main_exp_reg}]
	This proof is based exactly on the proof of \prettyref{thm:main_exp}. We will run the \prettyref{alg:one} on \dksregparams~with $\eta'$ as given in \prettyref{prop:exp_vertex_len}. Thus, after pruning the set $T$, we find its cost as follows.
	\begin{align*}
		\ALG &\geq \paren{\dfrac{k(k-1)}{\abs{T}(\abs{T}-1)}} \dfrac{kd}{2}\left(1-\dfrac{2}{\alpha}\right) & (\text{by \prettyref{lem:eight} and \prettyref{lem:pruning}})\\
		&\geq \paren{\dfrac{(k-1)(1-\alpha\eta')^2}{(1+\eta'-\alpha\eta')(k-1+k\eta'-\alpha\eta'(k-1))}}\paren{1-\dfrac{2}{\alpha}} \dfrac{kd}{2} & (\text{by \prettyref{cor:two}})\\
		&=\paren{\dfrac{(1-\alpha\eta')^2}{(1+\eta'-\alpha\eta')(1+(\ffrac{k}{k-1})\eta'-\alpha\eta')}}\paren{1-\dfrac{2}{\alpha}} \dfrac{kd}{2} & (\text{dividing by }k-1)\\
		&\geq\paren{\dfrac{1-\alpha\eta'}{1+2\eta'-\alpha\eta'}}^2\paren{1-\dfrac{2}{\alpha}} \dfrac{kd}{2} & \paren{\text{w.l.o.g., }k \geq 2 \iff \dfrac{k}{k-1} \leq 2}\\
		&\geq\paren{1-\alpha\eta'}^2\paren{1-\dfrac{2}{\alpha}} \dfrac{kd}{2} & \paren{\text{we seek an } \alpha > 2}\\
		&\geq\paren{1-2\alpha\eta'}\paren{1-\dfrac{2}{\alpha}} \dfrac{kd}{2} & (\because (1-x)^2 \geq 1-2x,~\forall x \in \mathbb{R})\\
		&=\paren{1-4\sqrt{\eta'}}\paren{1-\sqrt{\eta'}} \dfrac{kd}{2} & \paren{\text{we fix }\alpha \defeq \ffrac{2}{\sqrt{\eta'}}}\\
		&\geq\paren{1-5\sqrt{\eta'}} \dfrac{kd}{2}\mper
	\end{align*}
	Letting $\nu' \defeq 5\sqrt{\eta'}$ and using the value of $\eta'$ from \prettyref{prop:exp_vertex_len} completes the proof of the part (a).
	
	Note that for the parameter range $0 < 5\sqrt{\eta'} < 1 \iff 0 < \nu' < 1$, the value of $\alpha~(= \ffrac{2}{\sqrt{\eta'}})$ fixed by the algorithm lies in the interval $(2, \ffrac{1}{\eta'})$ as required.

	To show the part (b) of the theorem, let $\mathcal{Q}$ be the output of our algorithm $(\abs{\mathcal{Q}} = k)$ after pruning the set $T$. Note that the maximum number of vertices we remove are $k\left(\dfrac{\eta'}{1-\alpha \eta'}\right)$ from \prettyref{cor:two}. Considering the worst case, where all the removed vertices are in $\abs{T \cap S}$. We get,
\begin{align*}
	\abs{\mathcal{Q} \cap S} &\geq \abs{T \cap S} - k\left(\dfrac{\eta'}{1-\alpha \eta'}\right) \\
	&\geq \paren{1-\dfrac{1}{\alpha} - \dfrac{\eta'}{1-\alpha\eta'}}k & \paren{\text{from \prettyref{lem:six}}}\\
	&= \paren{1-\dfrac{\sqrt{\eta'}}{2} - \dfrac{\eta'}{1-2\sqrt{\eta'}}}k & \paren{\text{from the value of chosen } \alpha}\\
	&= \paren{1-\dfrac{\sqrt{\eta'}}{2(1-2\sqrt{\eta'})}}k\\
	&= \paren{1-\dfrac{\nu'}{2(5-2\nu')}}k & \paren{\text{from the value of chosen } \nu'}\\
	&\geq \paren{1-\dfrac{\nu'}{6}}k & \paren{\text{for } 0 < \nu' < 1}.
\end{align*}
\end{proof}

\subsection{Analysis of \dkssparams~and \dkssregparams}
\label{sec:3.2}

We restate the assumption on $G\brac{V \setminus S}$ in the two models for clarity, for any non-empty subset $W \subseteq V \setminus S$, $\dfrac{\rho(W)}{|W|} \leq \gamma d$. 
Given a graph $H = (V',E',w')$, consider the following LP relaxation for the problem of computing $\max\limits_{V'' \subseteq V'} \rho(V'')/\Abs{V''}$.
\begin{LP}
	\label{lp:lp}
	\begin{align}
		\textbf{maximize}\qquad\qquad\qquad\qquad\quad
		\label{eq:lp0}
		\sum\limits_{\set{i,j} \in E'} &w'_{ij}x_{ij} \\
		\textbf{subject to}\qquad\qquad\qquad\qquad\qquad\qquad
		\label{eq:lp1}
		&x_{i j} \leq y_{i} & \forall \set{i, j} \in E' \\
		\label{eq:lp2}
		&x_{i j} \leq y_{j} & \forall \set{i, j} \in E' \\
		\label{eq:lp3}
		&\sum_{i \in V'} y_{i} \leq 1 \\
		\label{eq:lp4}
		&x_{i j} \geq 0 & \forall \set{i, j} \in E' \\
		\label{eq:lp5}
		&y_{i}  \geq 0 & \forall i \in V'
	\end{align}
\end{LP}

Charikar \cite{Charikar:2000:GAA:646688.702972} proved the following result.
\begin{lemma}[\cite{Charikar:2000:GAA:646688.702972}, Theorem 1]
	\label{lem:charikar_lp}
	For a given graph $H = (V', E', w')$,
	\[\max\limits_{V'' \subseteq V'} \dfrac{\rho(V'')}{|V''|} = \OPT\paren{LP}\]
	where $\OPT\paren{LP}$ denotes the optimal value of the linear program \prettyref{lp:lp}.
\end{lemma}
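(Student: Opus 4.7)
The plan is to show both inequalities separately. For the easy direction $\OPT(LP) \geq \max_{V'' \subseteq V'} \rho(V'')/\abs{V''}$, I would construct an explicit feasible LP solution from an optimal subset. Take $V^\star \in \argmax_{V'' \subseteq V'} \rho(V'')/\abs{V''}$ (which is nonempty), and set $y_i = 1/\abs{V^\star}$ for $i \in V^\star$ and $y_i = 0$ otherwise; and $x_{ij} = 1/\abs{V^\star}$ for $\set{i,j} \in E'$ with $i,j \in V^\star$, and $x_{ij} = 0$ otherwise. Feasibility is immediate: \prettyref{eq:lp1} and \prettyref{eq:lp2} hold since either both endpoints lie in $V^\star$ (so $x_{ij} = y_i = y_j$) or at least one endpoint is outside (forcing $x_{ij} = 0 \leq y_i, y_j$); \prettyref{eq:lp3} holds since $\sum_i y_i = 1$; non-negativity is clear. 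The objective evaluates to $\sum_{\set{i,j} \in E(G[V^\star])} w'_{ij}/\abs{V^\star} = \rho(V^\star)/\abs{V^\star}$.

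For the harder direction, fix an optimal LP solution $(x^\star, y^\star)$. Without loss of generality I may assume $x^\star_{ij} = \min(y^\star_i, y^\star_j)$ for every edge, since raising $x^\star_{ij}$ up to this value preserves \prettyref{eq:lp1}--\prettyref{eq:lp5} and cannot decrease the non-negative objective. Similarly, I may assume $\sum_i y^\star_i = 1$: if this were strict, uniformly scaling both $x^\star$ and $y^\star$ by $\lambda > 1$ would preserve feasibility while strictly increasing the objective. For each threshold $r \geq 0$, define the level set $V''(r) \defeq \set{i \in V' : y^\star_i \geq r}$. The identity $x^\star_{ij} = \min(y^\star_i, y^\star_j)$ implies that $x^\star_{ij} \geq r$ if and only if $i, j \in V''(r)$, so the set of edges $\set{i,j}$ with $x^\star_{ij} \geq r$ is exactly $E(G[V''(r)])$.

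The key computation uses the layer-cake identity $a = \int_0^\infty \mathbbm{1}[a \geq r]\,dr$ for $a \geq 0$:
\begin{align*}
\OPT(LP) \;=\; \sum_{\set{i,j} \in E'} w'_{ij} x^\star_{ij} &= \int_0^\infty \sum_{\set{i,j} \in E'} w'_{ij}\,\mathbbm{1}[x^\star_{ij} \geq r]\,dr = \int_0^\infty \rho(V''(r))\,dr, \\
1 \;=\; \sum_{i \in V'} y^\star_i &= \int_0^\infty \abs{V''(r)}\,dr.
\end{align*}
Combining these two identities yields
\[
\OPT(LP) \;=\; \int_0^\infty \rho(V''(r))\,dr \;\leq\; \paren{\max_{V'' \subseteq V'} \dfrac{\rho(V'')}{\abs{V''}}} \int_0^\infty \abs{V''(r)}\,dr \;=\; \max_{V'' \subseteq V'} \dfrac{\rho(V'')}{\abs{V''}}.
\]
Equivalently, by an averaging argument over the (finitely many) distinct values of $y^\star_i$, some threshold $r^\star$ achieves $\rho(V''(r^\star))/\abs{V''(r^\star)} \geq \OPT(LP)$, which also exhibits the maximizing subset explicitly.

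The main obstacle is the second direction: one must identify the correct ``rounding'' viewpoint that converts the fractional LP solution into a family of candidate subsets via level sets, and justify the WLOG reductions $x^\star_{ij} = \min(y^\star_i, y^\star_j)$ and $\sum_i y^\star_i = 1$ so that the level sets $V''(r)$ exactly capture the edges surviving at threshold $r$. Once this structural observation is in place, the proof reduces to the single layer-cake calculation above; and the argument also yields a polynomial-time rounding procedure (by sweeping $r$ over the distinct $y^\star_i$ values), which is important for the applications of this lemma in \prettyref{sec:3.2}.
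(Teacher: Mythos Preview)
The paper does not supply its own proof of this lemma; it is quoted as Theorem~1 of \cite{Charikar:2000:GAA:646688.702972} and used as a black box. Your argument is correct and is essentially Charikar's original proof: the indicator solution for the easy direction, the WLOG normalizations $x^\star_{ij}=\min(y^\star_i,y^\star_j)$ and $\sum_i y^\star_i=1$, and the level-set/layer-cake rounding for the reverse inequality are exactly the steps in that reference. The only cosmetic point is the edge case where every $y^\star_i=0$ (so scaling is unavailable); there $\OPT(LP)=0$ and the inequality is trivial, so nothing is lost.
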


Note that the constraints of both \prettyref{lp:lp} and \prettyref{sdp:dks} closely resemble each other and hence after an appropriate scaling, we can construct a feasible solution to \prettyref{lp:lp} using our SDP solution. 
\begin{lemma}
	\label{lem:lpfeasibility}
	For $G\brac{V \setminus S}$,
	\[ 
	x_{ij} := \dfrac{\inprod{X_i, X_j}}{\sum\limits_{i \in V \setminus S} \norm{X_i}^2}
	\text{ for } \set{i, j} \in E\paren{G\brac{V \setminus S}} \qquad \text{and} \qquad
	y_{i} := \dfrac{\norm{X_i}^2}{\sum\limits_{i \in V \setminus S} \norm{X_i}^2}
	\text{ for } i \in V \setminus S 
	\]
	is a feasible solution for \prettyref{lp:lp}, where $\set{\set{X_i}_{i=1}^{n}, I}$ is a feasible solution of the \prettyref{sdp:dks}.
\end{lemma}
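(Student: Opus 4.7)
The plan is to verify each of the constraints \prettyref{eq:lp1}--\prettyref{eq:lp5} of \prettyref{lp:lp} in turn, showing that every one of them follows directly from a corresponding SDP constraint (\prettyref{eq:sdp2}--\prettyref{eq:sdp5}) after the uniform rescaling by $\Sigma \defeq \sum_{i \in V \setminus S} \norm{X_i}^2$.

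First I would dispose of the degenerate case $\Sigma = 0$: here every $X_i$ with $i \in V \setminus S$ is the zero vector, so setting $x_{ij} = y_i = 0$ throughout $V \setminus S$ is trivially feasible (with objective $0$), and we may assume $\Sigma > 0$ in what follows so that the definitions of $x_{ij}$ and $y_i$ are well-posed. Next I would check the non-negativity constraints \prettyref{eq:lp4} and \prettyref{eq:lp5}: $y_i \geq 0$ holds because $\norm{X_i}^2 \geq 0$ and $\Sigma > 0$, while $x_{ij} \geq 0$ follows from the SDP constraint $\inprod{X_i, X_j} \geq 0$ (\prettyref{eq:sdp4}). The normalization constraint \prettyref{eq:lp3} is immediate from the definition of $y_i$: $\sum_{i \in V \setminus S} y_i = \Sigma / \Sigma = 1 \leq 1$.

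The only remaining constraints are the capacity inequalities \prettyref{eq:lp1} and \prettyref{eq:lp2}, namely $x_{ij} \leq y_i$ and $x_{ij} \leq y_j$. These follow from the SDP constraint \prettyref{eq:sdp4}, which asserts $\inprod{X_i, X_j} \leq \inprod{X_i, X_i} = \norm{X_i}^2$ and symmetrically $\inprod{X_i, X_j} \leq \norm{X_j}^2$ for all $i \neq j$. Dividing both sides by the common positive scalar $\Sigma$ yields $x_{ij} \leq y_i$ and $x_{ij} \leq y_j$ respectively. This covers every constraint of \prettyref{lp:lp}, establishing feasibility.

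I do not expect any real obstacle here: the lemma is essentially a bookkeeping statement that records that \prettyref{sdp:dks}, when restricted to the vertex set $V \setminus S$ and rescaled so that the vertex weights sum to one, is a strictly stronger relaxation than the Charikar LP for the maximum density subgraph of $G[V \setminus S]$. The only item worth a brief comment in the proof is the $\Sigma = 0$ case, which is innocuous but must be mentioned to make the division well-defined. The real content, which will be used subsequently, is that \prettyref{lem:charikar_lp} combined with this feasibility (and the hypothesis $\rho(W)/\abs{W} \leq \gamma d$ for $W \subseteq V \setminus S$) will yield the desired bound on the SDP contribution $\sum_{i,j \in V \setminus S} A_{ij}\inprod{X_i, X_j}$.
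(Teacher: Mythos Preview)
Your proposal is correct and follows essentially the same approach as the paper: both verify \prettyref{eq:lp1}, \prettyref{eq:lp2}, \prettyref{eq:lp4}, \prettyref{eq:lp5} directly from the SDP constraint \prettyref{eq:sdp4}, and check \prettyref{eq:lp3} by the trivial computation $\sum_{i \in V\setminus S} y_i = \Sigma/\Sigma = 1$. Your explicit handling of the degenerate case $\Sigma = 0$ is a small improvement over the paper, which tacitly assumes the denominator is nonzero.
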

\begin{proof}
	\begin{enumerate}
		\item  The constraints \prettyref{eq:lp1} and \prettyref{eq:lp2} and the non-negativity constraints \prettyref{eq:lp4} and \prettyref{eq:lp5} hold by the SDP constraint \prettyref{eq:sdp4}.
		\item For constraint \prettyref{eq:lp3}, 
		\\$\sum\limits_{i \in V \setminus S} y_i = \sum\limits_{i \in V \setminus S} \left(\dfrac{\norm{X_i}^2}{\sum\limits_{i \in V \setminus S} \norm{X_i}^2} \right)= \dfrac{\sum\limits_{i \in V \setminus S} \norm{X_i}^2}{\sum\limits_{i \in V \setminus S} \norm{X_i}^2} = 1$
	\end{enumerate}
\end{proof}
\begin{proposition}
	\label{prop:upper_bound_cross}
	\[ \sum\limits_{i, j \in V \setminus S} A_{ij}\inprod{X_i,X_j} \leq \left(2\gamma dk\right) \left(1 - \E_{i \sim S} \norm{X_i}^2\right) \mper\]
\end{proposition}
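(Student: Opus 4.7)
The plan is to chain together the three ingredients already in place: Charikar's LP characterization of the maximum density ratio, the feasibility of the scaled SDP values as an LP solution, and the model assumption on $V \setminus S$.

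First I would note that $\sum_{i,j \in V \setminus S} A_{ij} \inprod{X_i, X_j}$ is just twice the LP objective in \prettyref{lp:lp} applied to $G[V \setminus S]$ with the candidate solution of \prettyref{lem:lpfeasibility}, up to the normalizing denominator $Z \defeq \sum_{i \in V \setminus S} \norm{X_i}^2$. Explicitly,
\[
\sum_{\{i,j\} \in E(G[V \setminus S])} w_{ij}\, x_{ij} \;=\; \frac{1}{Z}\sum_{\{i,j\} \in E(G[V \setminus S])} w_{ij} \inprod{X_i, X_j} \;=\; \frac{1}{2Z}\sum_{i,j \in V \setminus S} A_{ij} \inprod{X_i, X_j}.
\]
By \prettyref{lem:lpfeasibility} this $(x, y)$ is LP-feasible, so by \prettyref{lem:charikar_lp} the LP objective above is at most $\max_{W \subseteq V \setminus S} \rho(W)/|W|$, which by the model assumption is at most $\gamma d$.

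Next I would compute $Z$ using the SDP normalization \prettyref{eq:sdp2}: $\sum_{i \in V} \norm{X_i}^2 = k$ gives $Z = k - \sum_{i \in S} \norm{X_i}^2 = k\paren{1 - \E_{i \sim S} \norm{X_i}^2}$. Combining with the bound from Charikar's lemma,
\[
\sum_{i,j \in V \setminus S} A_{ij} \inprod{X_i, X_j} \;\leq\; 2Z \cdot \gamma d \;=\; (2\gamma d k)\paren{1 - \E_{i \sim S} \norm{X_i}^2},
\]
which is exactly the claim.

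There is no real obstacle here: the work has all been done in \prettyref{lem:lpfeasibility} (which hands us LP feasibility for free from the SDP constraints \prettyref{eq:sdp4}) and in \prettyref{lem:charikar_lp}. One small edge case is when $Z = 0$, i.e.\ all SDP mass sits on $S$; then the non-negativity constraint \prettyref{eq:sdp4} forces $\inprod{X_i, X_j} = 0$ for all $i, j \in V \setminus S$, so the left-hand side is $0$ and the inequality holds trivially. This is worth a one-line remark to avoid dividing by zero when invoking \prettyref{lem:lpfeasibility}.
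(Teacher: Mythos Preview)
Your proof is correct and follows essentially the same route as the paper: scale the SDP solution to a feasible point of \prettyref{lp:lp}, invoke \prettyref{lem:charikar_lp} to bound the LP value by $\max_{W}\rho(W)/|W|\le\gamma d$, and then use \prettyref{eq:sdp2} to rewrite $\sum_{i\in V\setminus S}\norm{X_i}^2$. Your explicit treatment of the degenerate case $Z=0$ is a worthwhile addition that the paper glosses over.
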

\begin{proof}
	By \prettyref{lem:lpfeasibility}, we know that any feasible solution to \prettyref{lp:lp} 
	satisfies
	\[ \sum\limits_{\set{i,j} \in E'} w'_{ij}x_{ij} \leq \max_{V'' \subseteq V'} \frac{\rho(V'')}{\Abs{V''}} \mper \]
	Therefore,
	\begin{align*}
	\sum\limits_{i, j \in V \setminus S} A_{ij}\inprod{X_i, X_j} &= 2\left(\sum\limits_{\set{i, j} \in E\paren{G\brac{V \setminus S}}} \dfrac{A_{ij}\inprod{X_i, X_j}}{\sum\limits_{i \in V \setminus S} \norm{X_i}^2}\right) \sum\limits_{i \in V \setminus S} \norm{X_i}^2 \\
	& \leq 2\left(\max_{W \subseteq V \setminus S} \frac{\rho(W)}{\abs{W}}\right) \sum\limits_{i \in V \setminus S} \norm{X_i}^2 
	\leq 2\left(\gamma d\right) \sum\limits_{i \in V \setminus S} \norm{X_i}^2 \\
	& = \left(2\gamma dk\right) \left(1 - \E_{i \sim S} \norm{X_i}^2\right) \qquad 
		(\text{using eqn \prettyref{eq:sdp2}}).
	\end{align*}
\end{proof}

Our main result for \dkssparams~is the following. 

\begin{theorem}[Formal version of \prettyref{thm:inf_main_gamma}]
	\label{thm:main_gamma}
	There exist universal constants $\probabilityconstant, \matrixconstant \in \mathbb{R}^{+}$ and a deterministic polynomial time algorithm, which takes an instance of \dkssparams~where
	\[  
	\tau = 2\sqrt{3\paren{6\delta + \matrixconstant\sqrt{\dfrac{\delta n}{dk}} + 2\gamma}},\] 
	satisfying $\tau \in (0,1)$, and $\ffrac{\delta d}{k} \in [\probabilityconstant \ffrac{\log n}{n}, 1)$, and outputs with high probability (over the instance) a vertex set $\mathcal{Q}$ of size $k$ such that \[ \rho(\mathcal{Q}) \geq \paren{1-\tau} \dfrac{kd}{2}\mper\]
	The above algorithm also computes a vertex set $T$ such that
	\begin{multicols}{2}
		\begin{enumerate}[(a)]
			\item $\abs{T} \leq k\paren{1+\dfrac{\tau}{5}}\mper$
			\item $\rho(T \cap S) \geq \paren{1-\dfrac{\tau}{2}} \dfrac{kd}{2}\mper$
		\end{enumerate}
	\end{multicols}
\end{theorem}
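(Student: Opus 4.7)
The plan is to mirror the analysis of \prettyref{thm:main_exp} essentially verbatim, replacing the expander-based upper bound on the SDP mass inside $V \setminus S$ with the density-based bound of \prettyref{prop:upper_bound_cross}. Since the construction of \dkssparams~differs from \dksparams~only in \prettyref{step:three} of the model, the bound on the cross-edges (\prettyref{prop:upper_bound_v_s}) applies verbatim. The argument that the monotone adversary in \prettyref{step:four} can be ignored for the analysis, by monotonicity of the SDP objective under non-negative $\inprod{X_i,X_j}$, also transfers directly.

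First I would establish the analog of \prettyref{prop:exp_edge_len}: using that \prettyref{sdp:dks} is a relaxation, $kd \leq \sum_{i,j \in V}A_{ij}\inprod{X_i,X_j}$, and splitting the sum into the three parts ($S \times S$, $S \times (V\setminus S)$, $(V\setminus S)\times(V\setminus S)$), I plug in \prettyref{prop:upper_bound_v_s} (weakened via $a(1-a)\leq 1/4$) for the cross term and \prettyref{prop:upper_bound_cross} (weakened by dropping $1-\E_{i\sim S}\norm{X_i}^2 \leq 1$) for the $V\setminus S$ term. Rearranging yields
\[ \sum_{i,j \in S} A_{ij}\inprod{X_i,X_j} \geq kd - 6pk^2 - \matrixconstant k\sqrt{np} - 2\gamma dk = (1-\eta)kd, \]
where $\eta = 6\delta + \matrixconstant\sqrt{\delta n/(dk)} + 2\gamma$ (using $p=\delta d/k$). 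Dividing by $kd$ and invoking \prettyref{lem:exp_edge_sdp_sum} gives $\E_{\{i,j\}\sim E(G[S])}\inprod{X_i,X_j} \geq 1-\eta$.

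Next I would run \prettyref{alg:one} with the same choice of threshold $\alpha = 1/\sqrt{3\eta}$, set $T = \{i\in V : \norm{X_i}^2 \geq 1-\alpha\eta\}$ and $T' = \{\{i,j\}\in E : \inprod{X_i,X_j} \geq 1-\alpha\eta\}$. A Markov inequality on $1-\inprod{X_i,X_j}$ under the edge-distribution $f_{E(G[S])}$ (exactly as in \prettyref{lem:wt_edges_t'}) gives $\sum_{e\in T'\cap E(G[S])} w(e) \geq (kd/2)(1-1/\alpha)$; the argument of \prettyref{lem:edges_subset} (which uses only the SDP constraint $\inprod{X_i,X_j}\leq \inprod{X_i,X_i}$) shows $T'\cap E(G[S]) \subseteq E(G[T\cap S])$, yielding $\rho(T)\geq \rho(T\cap S) \geq (kd/2)(1-1/\alpha)$. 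The size bound $|T|\leq k/(1-\alpha\eta)$ from \prettyref{lem:sizeT} continues to hold since only the SDP constraint $\sum_i\norm{X_i}^2=k$ is used.

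Finally I would apply the pruning lemma \prettyref{lem:pruning} of Srivastav et al.\ to descend from $T$ to a set $\mathcal{Q}$ of size exactly $k$. The density calculation is word-for-word identical to that in the proof of \prettyref{thm:main_exp}: one obtains
\[ \rho(\mathcal{Q}) \geq \paren{\frac{(1-\alpha\eta)^2}{1+\alpha\eta}}\paren{1-\frac{1}{\alpha}}\frac{kd}{2} \geq \paren{1-3\alpha\eta-\frac{1}{\alpha}}\frac{kd}{2}, \]
and choosing $\alpha = 1/\sqrt{3\eta}$ gives $\rho(\mathcal{Q}) \geq (1-2\sqrt{3\eta})(kd/2) = (1-\tau)(kd/2)$, with $\tau = 2\sqrt{3\eta}$ matching the stated bound. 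Parts (a) and (b) of the theorem follow immediately by substituting $\alpha = 1/\sqrt{3\eta}$ into the $|T|$ bound and the $\rho(T\cap S)$ bound respectively, using $1/(1-\tau/6) \leq 1+\tau/5$ for $\tau\in(0,1)$. There is no real obstacle here beyond bookkeeping; the one place requiring a bit of care is verifying that $\alpha = 1/\sqrt{3\eta}$ indeed lies in the admissible range $(1,1/\eta)$, which is exactly the condition $\tau \in (0,1)$ hypothesized in the theorem statement.
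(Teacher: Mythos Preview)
Your proposal is correct and follows exactly the approach taken in the paper: replace the expander-based bound of \prettyref{prop:upper_bound_v_s_expander} with the density-based bound of \prettyref{prop:upper_bound_cross} to obtain $\eta = 6\delta + \matrixconstant\sqrt{\delta n/(dk)} + 2\gamma$, and then carry over the remainder of the proof of \prettyref{thm:main_exp} verbatim with $\tau = 2\sqrt{3\eta}$. The paper's own proof is in fact just a two-sentence sketch pointing to this substitution, so your write-up is simply a more explicit version of the same argument.
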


\begin{proof}
We can prove an analog of \prettyref{prop:exp_edge_len} for \dkssparams~by using the 
upper bounds on $\sum_{i, j \in V \setminus S} A_{ij}\inprod{X_i,X_j}$ 
from \prettyref{prop:upper_bound_cross} instead of \prettyref{prop:upper_bound_v_s_expander} in the proof
of \prettyref{prop:exp_edge_len}. We obtain 
\[\eta = 6\delta + \matrixconstant\sqrt{\dfrac{\delta n}{dk}} + 2\gamma \mper \]
The rest of proof is identical to the proof of \prettyref{thm:main_exp}. We set $\tau \defeq 2\sqrt{3\eta}$.
\end{proof}

\begin{remark}[on \prettyref{thm:inf_main_gamma}]
	\label{rem:simplified_apx_3}
	For the case when the average degree of vertices in $S$ and $V \setminus S$ is close (see \prettyref{rem:simplified_apx_1}), we have $\delta = \Theta\paren{\dfrac{k}{n}}$. Assuming $\tau = 2\sqrt{3\eta}$, we rewrite $\dfrac{\delta n}{dk}$ as $\Theta\paren{\dfrac{1}{d}}$ from the above value of $\delta$. So, the new value of $\tau$ is $\Theta\paren{\sqrt{\delta+\gamma+\dfrac{1}{\sqrt{d}}}}$. A similar argument gives the new value of $\tau'$ in \prettyref{thm:inf_main_gamma_reg}.
\end{remark}

\begin{theorem}[Formal version of \prettyref{thm:inf_main_gamma_reg}]
	\label{thm:main_gamma_reg}
	There exist universal constants $\probabilityconstant, \matrixconstant \in \mathbb{R}^{+}$ and a deterministic polynomial time algorithm, which takes an instance of \dkssregparams~where 
	\[\tau' = \dfrac{5}{\sqrt{1+\dfrac{dk}{4\matrixconstant^2 \delta n}\left(1- 2\gamma - 6\delta\right)^2}}, \] 
	satisfying $\tau' \in (0,1)$, and $\ffrac{\delta d}{k} \in [\probabilityconstant \ffrac{\log n}{n}, 1)$, and outputs with high probability (over the instance) a vertex set $\mathcal{Q}$ of size $k$ such that
	\begin{multicols}{2}
		\begin{enumerate}[(a)]
			\item $\rho(\mathcal{Q}) \geq \paren{1-\tau'} \dfrac{kd}{2}\mper$
			\item $\abs{\mathcal{Q} \cap S} \geq \paren{1-\dfrac{\tau'}{6}} k\mper$
		\end{enumerate}
	\end{multicols}
\end{theorem}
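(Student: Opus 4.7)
The plan is to mirror the proof of \prettyref{thm:main_exp_reg} almost verbatim, replacing the expander bound from \prettyref{prop:upper_bound_v_s_expander} with the sparsity bound from \prettyref{prop:upper_bound_cross}. All the other ingredients are already set up in the paper to work for any of the four models: \prettyref{lem:four} only uses $d$-regularity of $G[S]$, \prettyref{prop:upper_bound_v_s} only uses the random bipartite construction in \prettyref{step:one}, and \prettyref{alg:one} is defined uniformly for the regular case.

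First I would prove the analog of \prettyref{prop:exp_vertex_len}: w.h.p.\ $\E_{i\sim S}\norm{X_i}^2 \geq 1-\eta'$ with
\[
\eta' \;=\; \dfrac{1}{1+\dfrac{dk}{4\matrixconstant^2\delta n}\paren{1-2\gamma-6\delta}^2}\mper
\]
Since \prettyref{sdp:dks} is a relaxation, $kd \leq \sum_{i,j\in V} A_{ij}\inprod{X_i,X_j}$. Splitting this sum into the three pieces $(S\times S)$, $(S\times (V\setminus S))$, $(V\setminus S)\times(V\setminus S)$, and applying \prettyref{lem:four}, \prettyref{prop:upper_bound_v_s}, \prettyref{prop:upper_bound_cross} respectively, I get
\[
kd \;\leq\; kd\,\E_{i\sim S}\norm{X_i}^2 + 6pk^2\paren{1-\E_{i\sim S}\norm{X_i}^2} + 2\matrixconstant k\sqrt{np}\sqrt{\paren{\E_{i\sim S}\norm{X_i}^2}\paren{1-\E_{i\sim S}\norm{X_i}^2}} + 2\gamma dk\paren{1-\E_{i\sim S}\norm{X_i}^2}\mper
\]
Dividing through by $kd(1-\E_{i\sim S}\norm{X_i}^2)$ and substituting $p=\delta d/k$ gives
\[
1 \;\leq\; 6\delta + 2\gamma + \dfrac{2\matrixconstant\sqrt{np}}{d}\sqrt{\dfrac{\E_{i\sim S}\norm{X_i}^2}{1-\E_{i\sim S}\norm{X_i}^2}}\mcom
\]
which rearranges to the claimed lower bound on $\E_{i\sim S}\norm{X_i}^2$ after using $np/d^2 = \delta n/(dk)$.

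Next I would run \prettyref{alg:one} with this $\eta'$ and the choice $\alpha = 2/\sqrt{\eta'}$ specified in line~2 of the algorithm. Defining $T = \set{i \in V : \norm{X_i}^2 \geq 1-\alpha\eta'}$, the pointwise Markov argument of \prettyref{lem:five} and the counting argument of \prettyref{lem:seven} are model-agnostic once \prettyref{prop:exp_vertex_len}'s analog holds, so they transfer verbatim to give $\abs{T\cap S} \geq (1-1/\alpha)k$ and $\abs{T\cap(V\setminus S)} \leq \eta' k/(1-\alpha\eta')$. Crucially, \prettyref{lem:eight} uses only the $d$-regularity of $G[S]$, so it still yields $\rho(T\cap S) \geq \frac{kd}{2}(1-2/\alpha)$. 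Then the pruning calculation at the end of the proof of \prettyref{thm:main_exp_reg} carries through symbol-for-symbol, producing $\ALG \geq (1-5\sqrt{\eta'})\,kd/2$; setting $\tau' \defeq 5\sqrt{\eta'}$ and plugging in $\eta'$ gives part~(a). For part~(b), subtract the at most $k\eta'/(1-\alpha\eta')$ vertices pruned from $T$ in the worst case from $\abs{T\cap S}$ and simplify using $\alpha = 2/\sqrt{\eta'}$ and $\tau' = 5\sqrt{\eta'}$, exactly as in the proof of \prettyref{thm:main_exp_reg}, to obtain $\abs{\mathcal{Q}\cap S} \geq (1-\tau'/6)k$.

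There is no real conceptual obstacle; the whole proof reduces to verifying that replacing the expander bound $(\lambda k + d'k^2/(n-k))(1-\E_{i\sim S}\norm{X_i}^2)$ by the sparsity bound $2\gamma dk(1-\E_{i\sim S}\norm{X_i}^2)$ substitutes $2\gamma$ for $\lambda/d + d'k/((n-k)d)$ inside the expression $(1 - \lambda/d - d'k/((n-k)d) - 6\delta)^2$ that appears in $\eta'$ of \prettyref{thm:main_exp_reg}. The only minor care needed is to check that for $\tau' \in (0,1)$ the chosen $\alpha = 2/\sqrt{\eta'}$ lies in $(2, 1/\eta')$, which it does whenever $\eta' < 1/4$, i.e.\ whenever $\tau' < 5/2$; this is automatic from $\tau' < 1$.
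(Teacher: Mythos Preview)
Your proposal is correct and follows essentially the same approach as the paper: the paper's own proof simply says to replace \prettyref{prop:upper_bound_v_s_expander} by \prettyref{prop:upper_bound_cross} in the derivation of the $\eta'$ of \prettyref{prop:exp_vertex_len}, obtaining exactly your expression for $\eta'$, and then to repeat the proof of \prettyref{thm:main_exp_reg} verbatim with $\tau' \defeq 5\sqrt{\eta'}$. Your write-up is in fact more detailed than the paper's, spelling out explicitly why each auxiliary lemma is model-agnostic and why the chosen $\alpha$ lies in the required interval.
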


\begin{proof}
We can prove an analog of \prettyref{prop:exp_vertex_len} for \dkssregparams~by using the upper bounds on $\sum_{i, j \in V \setminus S} A_{ij}\inprod{X_i,X_j}$ 
from \prettyref{prop:upper_bound_cross} instead of \prettyref{prop:upper_bound_v_s_expander} in the proof
of \prettyref{prop:exp_vertex_len}. We obtain  
\[\eta' = \dfrac{1}{1+\dfrac{dk}{4\matrixconstant^2 \delta n}\left(1- 2\gamma - 6\delta\right)^2} \mper \]
The rest of proof is identical to the proof of \prettyref{thm:main_exp_reg}. 
We set $\tau' \defeq 5\sqrt{\eta'}$.
\end{proof}

\paragraph{Acknowledgements.}
We thank Rakesh Venkat for helpful discussions. We also thank the anonymous reviewers for their suggestions and comments on earlier versions of this paper. AL was supported in part by SERB Award ECR/2017/003296 and a Pratiksha Trust Young Investigator Award.

\bibliographystyle{alpha}
\bibliography{references}

\appendix
\section{Comparing $\tau \text{ and } \tau'$}
\label{app:b}

Consider two graphs, $G_1 \sim$ \dkssparams~and $G_2 \sim$ \dkssregparams~(with the same input parameters). We will show that for most natural regime of parameters $\tau = \Omega(\tau')$ and hence our algorithm gives a better guarantee for $G_2$.

\begin{lemma}
	\label{lem:comp_1}
	Let $\mathcal{D} = \set{(x, y) \in \mathbb{R}^2 : x, y > 0\text{ and } x+y < 1}$. Then,
	\[\sqrt{x+y} > \dfrac{y}{1-x} \qquad\forall (x,y) \in \mathcal{D} \mper\]
\end{lemma}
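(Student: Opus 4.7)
Since both sides of the desired inequality are positive on $\mathcal{D}$ (note $x+y < 1$ forces $1-x > y > 0$), the plan is to square and reduce to the equivalent polynomial inequality
\[ (1-x)^2 (x+y) > y^2, \]
or equivalently $(1-x)\sqrt{x+y} > y$. The natural move is to introduce $s \defeq x+y$ and $t \defeq 1-x$, which translates the hypotheses $(x,y) \in \mathcal{D}$ into $s, t \in (0,1)$ together with $s + t > 1$ (this last inequality encodes $y > 0$), and recasts the target as
\[ t\sqrt{s} > s + t - 1 . \]

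The key step is then to rewrite $s - 1 = (\sqrt{s} - 1)(\sqrt{s}+1)$ and move $t$ to the other side, turning the inequality into
\[ (\sqrt{s} - 1)\, t > (\sqrt{s} - 1)(\sqrt{s} + 1). \]
Because $s < 1$, the factor $\sqrt{s} - 1$ is \emph{negative}, so dividing by it flips the inequality and reduces the problem to
\[ t < \sqrt{s} + 1, \]
which is immediate from $t < 1 \leq 1 + \sqrt{s}$.

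I do not anticipate any real obstacle here: once the substitution $(s,t) = (x+y,\,1-x)$ is in place, the factorization of $s-1$ makes the sign bookkeeping transparent, and the final inequality is trivial. The only subtlety worth flagging in the write-up is to note explicitly that all quantities being squared are positive (so squaring is legitimate), and to keep track of the strict inequality when $\sqrt{s}-1 < 0$ is divided out.
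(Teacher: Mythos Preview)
Your proof is correct. Both sides are positive on $\mathcal{D}$, your substitution $(s,t)=(x+y,\,1-x)$ correctly translates the constraints to $s,t\in(0,1)$ with $s+t>1$, and the factorization $s-1=(\sqrt{s}-1)(\sqrt{s}+1)$ together with the sign of $\sqrt{s}-1$ cleanly reduces everything to $t<1+\sqrt{s}$, which is immediate.

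The paper takes a somewhat different route: it works directly with the squared inequality and exhibits the polynomial factorization
\[
(x+y)(1-x)^2 - y^2 \;=\; \bigl(1-(x+y)\bigr)\bigl(y + x(1-x)\bigr),
\]
each factor of which is manifestly positive on $\mathcal{D}$. So the paper stays purely polynomial and needs to spot a two-variable factorization, whereas you introduce new coordinates and exploit the difference-of-squares structure in $\sqrt{s}$ to avoid that. Your approach trades the factorization insight for a bit of bookkeeping with the sign flip; both arguments are short and elementary, and neither is clearly superior.
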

\begin{proof}
	Consider the expression,
	\begin{align*}
		x+y - \dfrac{y^2}{(1-x)^2} &= \dfrac{(x+y)(1-x)^2 - y^2}{(1-x)^2} = \dfrac{(1-(x+y))(y+x(1-x))}{(1-x)^2} > 0 \qquad\forall (x,y) \in \mathcal{D}. \\
		\therefore x+y &> \dfrac{y^2}{(1-x)^2} \implies \sqrt{x+y} > \dfrac{y}{1-x}.
	\end{align*}
\end{proof}
Recall that (upto constant factors), \[\tau = \Theta\paren{\sqrt{\delta+\gamma+\sqrt{\dfrac{\delta n}{dk}}
}} \text{ and } \tau' = \dfrac{1}{\sqrt{1+\dfrac{dk}{\delta n}\paren{1-\gamma-\delta}^2}}\mper\] Using \prettyref{lem:comp_1} with $x = \delta + \gamma$ and $y = \sqrt{\dfrac{\delta n}{dk}}$. (The condition $x+y < 1$ ensures that $\tau, \tau' < 1$), we get,
\begin{align*}
	\sqrt{\delta + \gamma + \sqrt{\dfrac{\delta n}{dk}}} > \dfrac{\sqrt{\delta n}}{\sqrt{dk}(1-\delta - \gamma)} > \dfrac{1}{\sqrt{1+\dfrac{dk}{\delta n}\paren{1-\gamma-\delta}^2}} \implies \tau = \Omega(\tau').
\end{align*}
Note that in the above calculation, we have ignored the constants in these expressions, but this is not a critical issue. A similar calculation shows that $\nu = \Omega(\nu')$ in the \dksparams~and \dksregparams~models respectively.

\section{Alternate proof of \prettyref{prop:upper_bound_v_s_expander}}
\label{app:a}
For the \dksparams~and \dksregparams~models, we show the following upper bound on the SDP mass contribution by the vectors in $V \setminus S$. Note that $G\brac{V \setminus S}$ is a $(d',\lambda)$-expander, in these models.

\begin{proposition}
	\label{prop:app_a}
	\[ \sum\limits_{i, j \in V \setminus S} A_{ij}\inprod{X_i, X_j} \leq \paren{\lambda k + \dfrac{\paren{d' - \lambda}k^2}{n-k}} \left(1 - \E_{i \sim S} \norm{X_i}^2\right) \mper \]
\end{proposition}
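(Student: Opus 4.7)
}
The plan is to mimic the expander mixing lemma at the level of the SDP vectors, coordinate-by-coordinate. Let $M$ denote the adjacency matrix of $G\brac{V \setminus S}$, viewed as an $(n-k) \times (n-k)$ matrix. Since $G\brac{V\setminus S}$ is a $(d',\lambda)$-expander, $M$ is symmetric with top eigenvector $\mathbf{u}_1 = \mathbf{1}/\sqrt{n-k}$ at eigenvalue $d'$, while every other eigenvalue has absolute value at most $\lambda$. For each coordinate $\ell$ of the SDP vectors, define $v^{(\ell)} \in \R^{n-k}$ by $v^{(\ell)}_i = X_i(\ell)$ for $i \in V\setminus S$, and write the orthogonal decomposition $v^{(\ell)} = a_\ell \mathbf{u}_1 + v^{(\ell)}_\perp$ with $a_\ell = \inprod{v^{(\ell)}, \mathbf{u}_1}$.

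The first step is to bound the quadratic form. Using the spectral decomposition of $M$ and the fact that the restriction of $M$ to $\mathbf{u}_1^\perp$ has spectral norm at most $\lambda$, I get
\[
\paren{v^{(\ell)}}^T M\, v^{(\ell)} = d'\, a_\ell^2 + \paren{v^{(\ell)}_\perp}^T M\, v^{(\ell)}_\perp \leq d'\, a_\ell^2 + \lambda \norm{v^{(\ell)}_\perp}^2 = (d'-\lambda) a_\ell^2 + \lambda \norm{v^{(\ell)}}^2,
\]
where $a_\ell^2 = \frac{1}{n-k}\paren{\sum_{i\in V\setminus S} v^{(\ell)}_i}^2$. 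Summing this bound over all coordinates $\ell$ and rewriting the resulting sums in terms of the SDP vectors gives
\[
\sum_{i,j \in V \setminus S} A_{ij}\inprod{X_i, X_j} \leq \frac{d'-\lambda}{n-k}\,\Norm{\sum_{i \in V\setminus S} X_i}^2 + \lambda \sum_{i \in V\setminus S}\norm{X_i}^2.
\]

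The final step is to control the two right-hand terms using the SDP constraints. By \prettyref{eq:sdp2}, $\sum_{i \in V\setminus S}\norm{X_i}^2 = k\paren{1 - \E_{i\sim S}\norm{X_i}^2}$. For the squared norm of the sum, expand $\norm{\sum_{i \in V\setminus S} X_i}^2 = \sum_{i,j \in V\setminus S}\inprod{X_i,X_j}$, and apply \prettyref{eq:sdp3} (summing first over $j$ for each fixed $i$, and extending to $V$ via the nonnegativity in \prettyref{eq:sdp4}) to obtain $\sum_{j \in V\setminus S}\inprod{X_i, X_j} \leq k\norm{X_i}^2$; summing over $i \in V\setminus S$ yields $\norm{\sum_{i\in V\setminus S} X_i}^2 \leq k^2\paren{1 - \E_{i\sim S}\norm{X_i}^2}$. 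Substituting both bounds and factoring $\paren{1-\E_{i\sim S}\norm{X_i}^2}$ gives precisely the claimed inequality.

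The main subtlety is the first step: one must be careful that the off-diagonal spectral bound $\lambda$ really applies to $v^{(\ell)}_\perp$ (which it does, since the top eigenspace is one-dimensional and spanned by $\mathbf{u}_1$), and that replacing $\norm{v^{(\ell)}_\perp}^2$ by $\norm{v^{(\ell)}}^2$ is a valid (loose) upper bound once one has isolated the $(d'-\lambda)a_\ell^2$ term. Everything else is just bookkeeping with the SDP constraints.
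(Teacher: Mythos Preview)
Your proposal is correct and follows essentially the same route as the paper: the paper packages the spectral step as a stand-alone lemma (\prettyref{lem:svd_expander}) bounding $U^T A_{V\setminus S} U$ and then applies it to each row of the matrix whose columns are the $X_i$, $i\in V\setminus S$, which is exactly your coordinate-wise argument with $v^{(\ell)}$. The final two steps---using \prettyref{eq:sdp3} (together with \prettyref{eq:sdp4}) to bound $\sum_{i,j\in V\setminus S}\inprod{X_i,X_j}$ and \prettyref{eq:sdp2} to rewrite $\sum_{i\in V\setminus S}\norm{X_i}^2$---match the paper verbatim.
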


We recall some facts about expander graphs. 
\begin{fact}
	\label{fact:expander_eig}
	Let the eigenvalues of $A_{V \setminus S}$ be $\lambda_1 \geq \lambda \geq \hdots \geq \lambda_{n-k}$ and the corresponding orthonormal eigenvectors be $v_1, v_2, \hdots, v_{n-k}$ then,
	\begin{enumerate}
		\item $\lambda_1 = d'\mper$
		\item $v_1 = \dfrac{\one_{V \setminus S}}{\sqrt{n-k}}\mper$
	\end{enumerate}
\end{fact}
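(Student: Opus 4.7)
The plan is to verify the two parts directly from the definition of $(d',\lambda)$-expander, which requires the graph induced on $V\setminus S$ to be $d'$-regular. Both parts are standard facts about the spectrum of regular graphs, so the role of the proof is essentially bookkeeping.

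First I would establish part (2) by exhibiting the eigenvector and then deducing part (1) from it. Let $u := \one_{V\setminus S}/\sqrt{n-k}$. Since $\|\one_{V\setminus S}\|^2 = n-k$, the vector $u$ has unit norm. Because $G[V\setminus S]$ is $d'$-regular, every row sum of $A_{V\setminus S}$ equals $d'$, i.e.\ $\sum_{j\in V\setminus S} (A_{V\setminus S})_{ij} = d'$ for every $i\in V\setminus S$. Hence $A_{V\setminus S}\,\one_{V\setminus S} = d'\,\one_{V\setminus S}$, and dividing by $\sqrt{n-k}$ gives $A_{V\setminus S}\,u = d'\,u$. Thus $d'$ is an eigenvalue of $A_{V\setminus S}$ with $u$ a corresponding unit eigenvector.

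Next I would show that $d'$ is in fact the largest eigenvalue, so that we may take $\lambda_1 = d'$ and $v_1 = u$. For any eigenpair $(\mu, w)$ of $A_{V\setminus S}$ with $w\neq 0$, pick an index $i^\star \in V\setminus S$ maximizing $|w_{i^\star}|$. Then
\[
|\mu|\,|w_{i^\star}| = \Bigl|\sum_{j\in V\setminus S} (A_{V\setminus S})_{i^\star j}\, w_j\Bigr| \leq \sum_{j\in V\setminus S} (A_{V\setminus S})_{i^\star j}\, |w_j| \leq d'\,|w_{i^\star}|,
\]
where the first inequality uses non-negativity of the entries of $A_{V\setminus S}$ and the second uses $d'$-regularity together with $|w_j|\leq|w_{i^\star}|$. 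Dividing by $|w_{i^\star}|>0$ yields $|\mu|\leq d'$. Combined with the previous paragraph, this gives $\lambda_1 = d'$ and $v_1 = u$ (up to sign, which is immaterial to the stated fact).

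Since both parts reduce to elementary manipulations, there is no real obstacle; the only point worth flagging is that the argument uses non-negativity of the entries of $A_{V\setminus S}$, which holds by the construction of the instance (edge weights are non-negative), and $d'$-regularity, which is part of the expander definition (\prettyref{def:expander}).
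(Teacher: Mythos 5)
Your proof is correct, and it is worth noting that the paper states \prettyref{fact:expander_eig} as a ``Fact'' without supplying any argument at all, treating it as a standard piece of spectral graph theory. Your write-up fills in the omitted justification in the natural way: the all-ones vector is an eigenvector with eigenvalue $d'$ by $d'$-regularity, and the Perron--Frobenius-style ``pick the coordinate of largest modulus'' argument shows every eigenvalue has modulus at most $d'$, hence $\lambda_1 = d'$. Both steps and both hypotheses you flag (non-negative entries, $d'$-regularity) are exactly what is needed. One small remark: the statement of the Fact implicitly uses that $d'$ is a \emph{simple} eigenvalue so that $v_1$ is uniquely $\one_{V\setminus S}/\sqrt{n-k}$ up to sign; this follows from the expander condition $\lambda < d'$ (which forces $\lambda_2 < \lambda_1$, equivalently connectivity), and is a point neither the paper nor your proof explicitly spells out. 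For the way the Fact is actually used in \prettyref{lem:svd_expander}, only the existence of such a $v_1$ in an orthonormal eigenbasis matters, so this does not affect correctness.
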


First, we look at an upper bound of the quadratic form with matrix $A_{V \setminus S}$ which is a building block in the proof of our proposition.

\begin{lemma}
	\label{lem:svd_expander}
	For $U \in \R^{n-k}$, we have 
	\[ U^T A_{V \setminus S} U \leq \paren{\dfrac{d'-\lambda}{n-k}} \paren{\sum\limits_{i \in V \setminus S}{U\paren{i}}}^2 + \lambda\norm{U}^2 \mper\]
\end{lemma}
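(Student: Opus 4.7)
The plan is to prove this by a direct spectral decomposition, which is the natural approach since the right-hand side already exhibits a clean split into a ``projection onto the all-ones direction'' term and a ``residual'' term, matching exactly the decomposition afforded by the eigenbasis of a $(d',\lambda)$-expander.

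Concretely, I would invoke the spectral theorem to write $A_{V\setminus S} = \sum_{i=1}^{n-k} \lambda_i v_i v_i^T$, where $\{v_i\}$ is the orthonormal eigenbasis and, by \prettyref{fact:expander_eig}, $\lambda_1 = d'$ with $v_1 = \one_{V\setminus S}/\sqrt{n-k}$, and $|\lambda_i| \leq \lambda$ for $i \geq 2$. Expanding $U = \sum_i \alpha_i v_i$ with $\alpha_i = \inprod{U,v_i}$, the quadratic form becomes
\[
U^T A_{V\setminus S} U \;=\; \sum_{i=1}^{n-k} \lambda_i \alpha_i^2 \;=\; d' \alpha_1^2 + \sum_{i \geq 2} \lambda_i \alpha_i^2 \;\leq\; d' \alpha_1^2 + \lambda \sum_{i \geq 2} \alpha_i^2,
\]
where the inequality uses $\lambda_i \alpha_i^2 \leq |\lambda_i| \alpha_i^2 \leq \lambda \alpha_i^2$ for each $i \geq 2$.

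Next I would identify the two quantities on the right with the terms in the proposition. By orthonormality, $\sum_{i \geq 2} \alpha_i^2 = \norm{U}^2 - \alpha_1^2$, and by \prettyref{fact:expander_eig}(2),
\[
\alpha_1 \;=\; \inprod{U, v_1} \;=\; \frac{1}{\sqrt{n-k}} \sum_{i \in V\setminus S} U(i), \qquad \text{so} \qquad \alpha_1^2 \;=\; \frac{1}{n-k}\paren{\sum_{i \in V\setminus S} U(i)}^2.
\]
Substituting into the above bound,
\[
U^T A_{V\setminus S} U \;\leq\; (d' - \lambda)\alpha_1^2 + \lambda \norm{U}^2 \;=\; \paren{\dfrac{d' - \lambda}{n-k}}\paren{\sum_{i \in V\setminus S} U(i)}^2 + \lambda \norm{U}^2,
\]
which is exactly the claimed inequality.

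There is essentially no obstacle here; the only point worth being careful about is that the non-leading eigenvalues of an expander can be negative, so one must pass through the absolute-value bound $|\lambda_i| \leq \lambda$ rather than trying to replace $\lambda_i$ by $\lambda$ directly. Since $\alpha_i^2 \geq 0$, this step goes through cleanly. The result is in the same spirit as the expander mixing lemma, and the decomposition will then be used by the outer proof of \prettyref{prop:app_a} by specializing $U$ to coordinate-projections of the SDP vectors in $V \setminus S$ and summing.
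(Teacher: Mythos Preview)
Your proof is correct and follows essentially the same route as the paper: spectral decomposition of $A_{V\setminus S}$, isolating the $v_1 = \one_{V\setminus S}/\sqrt{n-k}$ contribution, and bounding the remaining eigenvalues by $\lambda$ (using $|\lambda_i|\le\lambda$) to collapse the residual into $\lambda\norm{U}^2$. The only cosmetic difference is that the paper writes the bound directly as $(d'-\lambda)\inprod{U,v_1}^2 + \lambda\sum_i\inprod{U,v_i}^2$ rather than first splitting off $\sum_{i\ge2}\alpha_i^2 = \norm{U}^2 - \alpha_1^2$, but this is the same computation.
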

\begin{proof}
	Since $A_{V \setminus S} = \sum_{i \in V \setminus S} \lambda_i v_i v_{i}^T$,
	\begin{align*}
	U^T A_{V \setminus S} U &= U^T \paren{\sum_{i \in V \setminus S} \lambda_i v_i v_{i}^T} U 
	= \sum_{i \in V \setminus S} \lambda_i \inprod{U, v_i}^2\\
	&\leq \paren{d' - \lambda} \inprod{U, v_1}^2 + \lambda \paren{\sum_{i \in V \setminus S} \inprod{U, v_i}^2} \qquad \paren{\text{by \prettyref{fact:expander_eig}}}\\ 
	&= \paren{\dfrac{d' - \lambda}{n-k}} \inprod{U, \one_{V \setminus S}}^2 + \lambda \paren{\sum_{i \in V \setminus S} \inprod{U, v_i}^2}\\
	&= \paren{\dfrac{d' - \lambda}{n-k}} \paren{\sum\limits_{i \in V \setminus S}{U\paren{i}}}^2 + \lambda\norm{U}^2 \qquad \paren{\because v_{i}'s \text{ form a basis}}.
	\end{align*}
\end{proof}

\begin{proof}[Proof of \prettyref{prop:app_a}] 	Similar to the proof of \prettyref{lem:ub_b_ij_xi_xj_s_sbar}, we define $Y$ to be a matrix of size $(n+1) \times (n-k)$ which has $n-k$ column vectors $X_i$, corresponding to each $i \in V \setminus S$.
	\begin{align*}
	\sum\limits_{i, j \in V \setminus S} A_{ij}\inprod{X_{i},X_{j}} &= \sum\limits_{l=1}^{n+1} \paren{Y^{T}_{l}}^T A_{V \setminus S} \paren{Y^{T}_{l}}\\
	&\leq \paren{\dfrac{d' - \lambda}{n-k}} \sum\limits_{l=1}^{n+1}\paren{\sum\limits_{m \in V \setminus S} \paren{Y^{T}_{l}}\paren{m}}^2 \\
	&\qquad+\lambda \sum\limits_{l=1}^{n+1} \norm{Y^{T}_{l}}^2 \qquad \paren{\text{invoking \prettyref{lem:svd_expander} on row vectors of Y}}\\
	&= \paren{\dfrac{d' - \lambda}{n-k}} \paren{\sum\limits_{i, j \in V \setminus S} \inprod{X_i, X_j}} + \lambda \sum\limits_{i \in V \setminus S} \norm{X_i}^2\\ 
	&\qquad\qquad\qquad\qquad\qquad\qquad (\text{rewriting in terms of column vectors})\\
	&\leq \paren{\dfrac{d' - \lambda}{n-k}} \paren{k \sum\limits_{i \in V \setminus S} \norm{X_i}^2} + \lambda \sum\limits_{i \in V \setminus S} \norm{X_i}^2 \qquad (\text{by eqn \prettyref{eq:sdp3}})\\
	&= \paren{\lambda k + \dfrac{\paren{d' - \lambda}k^2}{n-k}} \left(1 - \E_{i \sim S} \norm{X_i}^2\right) \qquad (\text{by eqn \prettyref{eq:sdp2}}).
	\end{align*}
\end{proof}

\end{document}